\definecolor{light-gray}{gray}{0.8}
\newcommand{\EDaff}{Higgs Centre for Theoretical Physics, School of Physics \& Astronomy,
  University of Edinburgh, Edinburgh EH9 3FD, United Kingdom.}
\newcommand{\IGMMaff}{MRC Human Genetics Unit, Institute of Genetics \& Molecular Medicine, University of Edinburgh, Edinburgh EH4 2XU, United Kingdom.}
\newcommand{\EDMathaff}{School of Mathematics, University of Edinburgh,
    Edinburgh EH9 3FD, United Kingdom.}
\newcommand{\Bonnaff}{Hausdorff Center for Mathematics, Universität Bonn, \\
    Endenicher Allee 60, D-53115 Bonn, Germany.}
\newtheorem{prop}{Proposition}[section]
\newtheorem{cor}[prop]{Corollary}
\newtheorem{defn}[prop]{Definition}
\def\ie{{\it i.e.}}
\def\eg{{\it e.g.}}
\newcommand{\ATE}{\operatorname{ATE}}
\newcommand{\uX}{\underline{X}}
\newcommand{\uT}{\underline{T}}
\DeclareMathOperator{\Binom}{Binom}
\newcommand{\BE}{{\mathbb{E}}}
\newcommand{\BN}{{\mathbb{N}}}
\newcommand{\BR}{{\mathbb{R}}}
\newcommand{\CM}{{\mathcal M}}
\newcommand{\CN}{{\mathcal N}}
\newcommand{\CO}{{\mathcal O}}
\newcommand{\CP}{{\mathcal P}}
\newcommand{\CZ}{{\mathcal Z}}
\begin{document}

\title{Higher-order interactions in statistical physics and machine learning: \\ A model-independent solution to the inverse problem at equilibrium}

\preprint{xxx}

\author{Sjoerd Viktor Beentjes}
\affiliation{\Bonnaff}
\affiliation{\EDMathaff}
\email{Email: sjoerd.beentjes@ed.ac.uk}

\author{Ava Khamseh}
\affiliation{\IGMMaff}
\affiliation{\EDaff}
\email{Email: ava.khamseh@ed.ac.uk}

\begin{abstract}
\vspace{0.5cm}
The problem of inferring pair-wise and higher-order interactions in complex systems involving large numbers of interacting variables, from observational data, is fundamental to many fields. Known to the statistical physics community as the inverse problem, it has become accessible in recent years due to real and simulated ‘big’ data being generated. Current approaches to the inverse problem rely on parametric assumptions, physical approximations, e.g. mean-field theory, and ignoring higher-order interactions which may lead to biased or incorrect estimates. We bypass these shortcomings using a cross-disciplinary approach and demonstrate that none of these assumptions and approximations are necessary: We introduce a universal, model-independent, and fundamentally unbiased estimator of all-order symmetric interactions, via the non-parametric framework of Targeted Learning, a subfield of mathematical statistics. Due to its universality, our definition is readily applicable to \emph{any} system at equilibrium with binary and categorical variables, be it magnetic spins, nodes in a neural network, or protein networks in biology. Our approach is targeted, not requiring fitting unnecessary parameters. Instead, it expends all data on estimating interactions, hence substantially increasing accuracy. We demonstrate the generality of our technique both analytically and numerically on (i) the 2-dimensional Ising model, (ii) an Ising-like model with 4-point interactions, (iii) the Restricted Boltzmann Machine, and (iv) simulated individual-level human DNA variants and representative traits. The latter demonstrates the applicability of this approach to discover epistatic interactions causal of disease in population biomedicine. 
\end{abstract}

\pacs{}

\maketitle


\newpage
\section{Introduction}


Starting from microscopic laws of Nature, the aim of statistical physics is to provide a macroscopic description of Nature by deriving observable quantities from the underlying laws.
In the \emph{inverse problem}, the starting point is observations for which the underlying microscopic properties, such as interactions within the constituents of the system of interest, are unknown and to be inferred.
Taking the Ising model of binary magnetic spins as an example, the goal of the forward problem is to obtain observables such as magnetisation, energy and correlation, given the Hamiltonian with its parameters.
Conversely, the goal of the inverse problem is to derive unknown interactions within spins directly from data. 

In recent years, the inverse problems are often motivated by challenges in `big data' biology due to modern high-throughput sequencing experiments and large scale patient databases.
There is a rich literature for inverse problems with the aim of inferring model parameters describing a system, \eg, via a Hamiltonian, from observational data (see, \eg, \cite{Nguyen2017} and the references therein).
Most of these methods rely on making assumptions about the parametric form of the Hamiltonian, which may not accurately reflect the true distribution of the data.
For instance, a misspecified parametric form often results in biases in the estimation of the quantities of interest when sample sizes grow without the variance in the estimation decreasing sufficiently fast.
Furthermore, in most real world settings such as interactions in biomedical data, there is no heuristic, let alone a theory, suggesting that the effects of higher-order interactions are negligible and can be ignored without consequence.
Most methods in the literature simply truncate the problem by allowing for at most pair-wise interactions \cite{Nguyen2017,PhysRevLett.112.070603,PhysRevLett.108.090201,ravikumar2010,published_papers/7111360}.
This in turn results in biased estimates, even for 2-point interactions.


The aim of this work is to introduce a universal, unbiased, and targeted framework in which symmetric $2$-point and higher-order interactions can be estimated from \emph{any} discrete data set.
We propose a \emph{model-independent} definition of $n$-point interaction amongst binary and categorical random variables. 
In contrast to earlier approaches to the inverse problems in the literature, our definition is fully \emph{non-parametric}: we make no assumptions on the parametric form of the joint or marginal probability distributions of the random variables.
Moreover, in contrast to other approaches, which consider pair-wise interactions only, ours can access higher-order interactions \cite{Nguyen2017,PhysRevLett.112.070603,PhysRevLett.108.090201,ravikumar2010,published_papers/7111360}.
We note that the non-parametric approach in Ref.~\cite{Lu14424}, although pair-wise, does incorporate dynamical interactions.
From a theoretical perspective, our definition benefits from the following three properties: (i) it is unbiased by construction and hence converges to the ground truth in the infinite data limit, (ii) it provides a natural, model-independent interpretation of higher-order interactions, and (iii) it reduces to well-known intuitive notions of interaction in parametric statistical physics models described by a Hamiltonian.
From a computational point of view, our definition of $n$-point interaction may be directly estimated from observational data by simply taking suitable combinations of expectation values.
The variance on the resulting estimate solely depends on how deeply relevant states are sampled, and it can be substantially improved when (conditional) independence between variables is known or derived.
In most practical situations where the Markovian condition is assumed, \eg, for causal identifiability \cite{10.5555/1642718}, (conditional) independence may be derived using causal structure learning algorithms such as~\cite{fastparallelpc,Kuipers2018EfficientSL,10.3389/fgene.2019.00524}.
\\


Our non-parametric definition of $n$-point interactions amongst binary random variables fits in the Targeted Learning framework of~\cite{MR2867111}, a subfield of mathematical statistics.
Targeted Learning is a probabilistic framework to estimate (causal) quantities of interest directly from a data set $\CO$, without the need to successfully estimate the true (but unknown, and often unknowable) joint probability distribution $p_{0}$ that generated $\CO$, or to expend data on estimating parameters $\theta$ of a potentially misspecified parametric model $p_{\theta}$.
Crucially, the framework requires a \emph{model-independent} definition of the (causal) quantity of interest $\alpha$, known as the \emph{target parameter}, as a functional of any candidate probability distribution $p$, not in terms of a parameter of a parametric ansatz.
This eliminates bias due to the choice of model whilst safeguarding the interpretation of $\alpha$ as a meaningful statistical quantity revealing true knowledge about the ground truth $p_0$.
Once the target parameter is established, all statistical power is used for its estimation.
The Targeted Learning framework has already been successfully applied in biomedicine and epidemiological studies~\cite{MR2867111}.
\\

This paper is structured as follows.
We discuss the non-parametric formulation of interactions using the Targeted Learning framework in Sec.~\ref{sec:non-param-int}, for the case of binary and categorical variables. We propose two definitions of interaction, namely \textit{additive} and \textit{multiplicative}, and illustrate their relation.
For a given data set and application, one choice may be more intuitive than the other, but the information they hold is equivalent.
The additive formulation in Sec.~\ref{sec:add_interaction} applies to scenarios where the subject expert takes one of the variables in the system as the `outcome' variable and is interested in estimating the effect of the interaction amongst other variables on this outcome.
The multiplicative formulation in Sec.~\ref{sec:mult_interaction} treats the variables on the same footing, and instead considers their effect (via interactions) on the energy function, and hence the joint probability distribution.
The former is more used in biomedical applications when a treatment-outcome relationship is set out at the beginning, whereas the latter is more relevant for statistical physics and, \eg, molecular networks in biology. 

Next, we provide a general formula for extracting $n$-point interactions and their interpretation directly from data.
We conclude Sec.~\ref{sec:non-param-int} by discussing how establishing conditional independence amongst variables, \eg, via the non-parametric $\chi$-squared test or more sophisticated state-of-the-art algorithms such as~\cite{fastparallelpc,Kuipers2018EfficientSL}, leads to improved estimates of the $n$-point interaction.

As a first result, we provide a concrete biological example in Sec.~\ref{sec:linear_regression_analytic}, based on interactions amongst DNA variants (epistasis) contributing to trait or disease, with data generated using a linear model.
We demonstrate analytically and numerically, that the Targeted Learning estimator obtains the correct ground truth interaction, even though it is \emph{entirely agnostic} to both the data generating process and its linearity.
This simplified example is used to guide the reader through the theoretical concepts introduced in Sec.~\ref{sec:non-param-int}.

To demonstrate universal applicability of our estimator, in Sec.~\ref{sec:RBM_analytic}, we consider a more complex Hamiltonian, namely that of the Restricted Boltzmann Machine (RBM), and analytically obtain its all-order couplings without the need for an asymptotic expansion and resummation as originally employed in \cite{PhysRevB.100.064304}.
In Sec.~\ref{sec:Ising_RBM_numerical}, we consider the $2$D Ising model and show how the \emph{same} estimator is able to predict 2-point interactions amongst nearest and non-nearest neighbour spin pairs, at various temperatures and lattice sizes.
Moreover, it correctly predicts that $3$-point and $4$-point interactions vanish.
We compare our estimations to predictions from an RBM, on data generated from the $2$D Ising model.
We limit our comparisons to the RBM as, unlike other parametric methods, it does not truncate higher-order interactions and hence does not bias lower-order interactions.

Finally, in Sec.~\ref{sec:4pt_Hamiltonian}, we generate data from a Hamiltonian with self, $2$-point, $3$-point, and $4$-point interactions and show that our Targeted Learning estimator accurately predicts higher-order interactions.
We  present numerical results at various temperatures.
This indicates that the TL estimator can be applied to obtain higher-order interactions in the case of biological networks, such as biomarker and gene expression networks.
For instance, this method is applicable to modern biomedical data sets, such as large-scale patient databases, \eg, UKBiobank, containing half a million patient samples \cite{Sudlow-ukbb}, or high-throughput sequencing experiments, \eg, 10X 1.3 million cell experiment \cite{10XMillionCells} and the Human Cell Atlas project, so far containing 4.5 million cells \cite{HumanCellAtlas}.


\section{Non-parametric formulation of interaction}\label{sec:non-param-int}

\subsection{Targeted Learning}
\label{sec:targeted_learning}
Let $\CO$ be a data set of $n$ observations $\CO_{i}$ generated by an experiment with random variable $O$, and let $p_{0}$ denote its probability distribution $O \sim p_{0}$.
The fundamental goal in probabilistic modelling is to obtain an estimate $\bar{p}$ of $p_{0}$ given the data $\CO$.
With $\bar{p}$ in hand, a relevant quantity $\alpha$ concerning the data set $\CO$ can then be estimated, such as a moment, an interaction coefficient, or a (causal) effect.

In typical situations however, given the data $\CO$ the ground truth $p_{0}$ is completely out of reach due to, \eg, a small sample size $n$ as compared to the dimensionality of the data.
To remedy this, a parametric form $p_{\theta}$ of $\bar{p}$ may be proposed, and the data may be used to fit unknown parameters $\theta$, but this often leads to an incorrect ansatz for the parametric model due to bias.
Alternatively, one may use model selection based on the data $\CO$, but will subsequently suffer from overconfidence in reporting the estimate $\bar{\alpha}$ of the quantity of interest $\alpha$.

Targeted Learning~\cite{MR2867111} is a probabilistic framework to estimate (causal) quantities of interest directly, without the need to successfully estimate $p_{0}$ or to expend data on estimating parameters $\theta$ of a (misspecified) parametric model $p_{\theta}$.
As such, it avoids the above pitfalls of the estimation problem.
Targeted Learning consists of the following steps:
\begin{enumerate}
    \item Define the \emph{statistical model} $\CM$: this is the, in general infinite dimensional, space of candidate probability distributions, b
        \begin{equation*}
            \CM = \{p \mid p \text{ a probability compatible with } \CO\},
        \end{equation*}
        based on the data $\CO$.
        By compatibility, we mean that the statistical model accommodates for \emph{a priori} knowledge regarding the data and how it is generated.
        For example, if $\CO$ is generated by $n$ binary random variables, then $\CM$ only contains $p = p(T_1,\ldots,T_n)$ with $T_i$ binary variables. Similarly, if the expectation value $\BE(T_i)$ of a variable is known to be positive, or if one or more variables are known to be (conditionally) independent, this true knowledge can be incorporated. 
        Finally, the statistical model contains the true probability distribution $p_{0} \in \CM$ by definition.
    \item Define the \emph{target mapping} $\Phi \colon \CM \to \BR^{d}$ that expresses the quantity of interest $\alpha$ as a function of the distribution $p$.
        In particular, $\alpha_{0} = \Phi(p_{0})$ is the ground truth for $\alpha$.
        For example, $\Phi$ could be a (conditional) expectation value over some or all of the variables.
        As another example, suppose that $\CO$ is generated by a random variable $O = (Y,T,W)$ where $Y$ is a continuous outcome, $T$ is a binary random variable which we will call treatment, and $W$ is a covariate.
        The treatment effect,
        \begin{equation*}
            \Phi(p) = \BE_{W}[\BE(Y \mid T=1,W) - \BE(Y \mid T = 0, W)],
        \end{equation*}
        is another example of a target parameter, often used in epidemiological studies to estimate the causal effect of a drug or treatment $T$ on health outcome $Y$ whilst correcting for confounding effects due to the covariate $W$.
    \item Apply statistical methods to obtain an estimate $\bar{\alpha}$ of the target parameter.
        We indicate a method for obtaining improved estimates of $n$-point interaction in Sec.~\ref{sec:conditional_independence}, but otherwise refer the reader to~\cite{MR2867111}.
\end{enumerate}
There are a number of important remarks to be made regarding the Targeted Learning paradigm.
First of all, the \emph{definition} of the quantity of interest $\alpha$ and its subsequent \emph{estimation} are two separate steps.
On the one hand, the quantity of interest is no longer a parameter in a potentially misspecified parametric model $p_{\theta}$, but is associated to a candidate probability distribution $p$ via the map $\Phi$ as $\Phi(p)$; thus, the quantity of interest needs to be expressed \emph{non-parametrically} as a function of $p$ forcing one to re-evaluate the interest of said quantity.
On the other hand, the method of estimation may be chosen independently from either model or target parameter.
Secondly, by expressing the quantity of interest $\alpha$ as a target parameter $\Phi(p)$ one avoids introducing bias by making an incorrect parametric ansatz $p_{\theta}$ whilst safeguarding the interpretation of $\alpha$ as a meaningful statistical quantity revealing true knowledge about the ground truth $p_0$.
And thirdly, due to bias every misspecified parametric model will not converge to the ground truth as sample size increases and variance shrinks.
Thus a non-parametric definition of a quantity of interest is essential to make full use of big data. \\

In this paper, we apply the framework of Targeted Learning to our quantity of interest, $n$-point interaction, and illustrate its application on data generated from various models. 

\subsection{Additive interaction}\label{sec:add_interaction}
Consider a random variable $O = (Y,T_1,\ldots,T_r,W)$ where $Y$ is a discrete or continuous outcome, the $T_i$ are binary random variables causally leading to the outcome $Y$, and $W$ is a covariate.
In this section, we wish to causally infer the effect of the interaction of the treatment variables $T_i$ on the outcome $Y$, for simplicity having already corrected for confounding effects $W$.
In other words, we implicitly take expectation values over strata of the covariate $W$. For example, we abbreviate
\begin{equation}
   \BE(Y \mid T_1 = 1) = \BE_{W}\left[\BE(Y \mid T_1 = 1, W) \right],
\end{equation}
where $\BE$ denotes the expectation value over $Y \mid T_1 = 1$, and $\BE_{W}$ denotes the expectation value over $W$.
Note however, that all definitions and results hold in the more general case of a fixed value $W = w$ of the covariate.

First of all, we define the statistical model, incorporating all \emph{a priori} knowledge, as in Sec.~\ref{sec:targeted_learning}:
\begin{equation*}
    \CM = \bigl\{p(Y,T_1,T_2,\ldots,T_r,W) \mid \substack{Y \text{ continuous},\, T_i \text{ binary}, \\ W \text{ a covariate}} \bigr\}.
\end{equation*}
Before defining the target parameter, we introduce some notation that will be used throughout the paper.
If a subset $T_{i_1},\ldots,T_{i_n}$ of the variables $T_1,\ldots,T_r$ is specified, then we write $\uT$ for all of the remaining variables.
For example, $\BE(T_1 \mid T_3 = 1, \uT = 0)$ denotes the conditional expectation value of $T_1$, given $T_3 = 1$ and $\uT = 0$, meaning $T_2 = T_4 = T_5 = \ldots = T_r = 0$.
We abbreviate $(T_i,T_j) = (a,b)$ to $T_{ij} = (a,b)$.

In biomedicine and epidemiological studies, a particular quantity of interest to be estimated is the causal effect of a treatment on an outcome, the \emph{average treatment effect}, \eg, the effect of a drug on health.
We express our additive notion of interaction with notation compatible with the existing literature~\cite{MR2867111,10.5555/1642718,imbens_rubin_2015}.
The \emph{average treatment effect} (ATE) of $T_i$ on $Y$ is given by
\begin{equation}\label{eq:average_treatment_effect}
    \ATE_{T_i}(Y) =\BE(Y \mid T_i = 1) -\BE(Y \mid T_i = 0).
\end{equation}
This expression is the first order derivative with respect to $T_i$ evaluated at $T_i = 0$ of the function $T_i \mapsto \BE(Y \mid T_i)$.
Indeed, for a function $f$ of a binary variable $T$ we have $\partial_T f = f(1)-f(0)$.

Next, given two binary variables $T_i, T_j$ encoding two different treatments, we obtain the ATE of treatment $T_i$ on $Y$ and the ATE of treatment $T_j$ on $Y$.
A natural question is \emph{how do these treatments interact?}
In words, how does applying treatment $T_i$ affect the effect of treatment $T_j$ on $Y$, and vice versa?
In order to isolate the effects of $T_i$ and $T_j$ on $Y$, the other treatments are not applied, \ie, we condition on $\uT = 0$.
We now define the first target mapping, $\Phi^{a}_{i,j}$, which is our non-parametric additive formulation  of $2$-point interaction between binary random variables.
The \emph{additive interaction} $I^{a}_{i,j}$ between the binary variables $T_i$ and $T_j$, is given by the difference of the effect of changing $T_i \colon 0 \to 1$ on $Y$ given $T_j = 1$, and the effect of changing $T_i \colon 0 \to 1$ on $Y$ given $T_j = 0$, \ie,
\small
\begin{equation}\label{eq:interaction_add_2}
\begin{split}
    &\CM \ni p \mapsto \Phi^{a}_{i,j}(p) \coloneqq I^{a}_{i,j} \\
        = &\bigl[\BE(Y \mid T_{ij} = (1,1), \uT = 0) -\BE(Y \mid T_{ij} = (0,1), \uT = 0)\bigr] \\
        -&\bigr[\BE(Y \mid T_{ij} = (1,0), \uT = 0) -\BE(Y \mid T_{ij} = (0,0), \uT = 0)\bigr].
\end{split}
\end{equation}
\normalsize
Note that interaction is a difference of ATEs, \ie, $I^{a}_{i,j} = \ATE_{T_i}(Y \mid T_j = 1, \uT = 0) - \ATE_{T_i}(Y \mid T_j = 0, \uT = 0)$.
Thus, the interaction $I^{a}_{i,j}$ is the change of effect of $T_i$ on $Y$ when changing $T_j$, conditioned on $\uT = 0$.
This change of effect may be expressed as the (symmetric) double derivative with respect to $T_i$ and $T_j$, and so $I^{a}_{1,2}$ is also the change of effect of $T_j$ on $Y$ when changing $T_i$.
Formally, this reads
\begin{equation}\label{eq:interaction_is_symmetric}
    I^{a}_{i,j} = I^{a}_{j,i},
\end{equation}
as one readily deduces from Eq.~\ref{eq:interaction_add_2}. Indeed, given a function $f \colon \{0,1\}^2 \to \BR$ of two binary variables $x$ and $y$, $\partial_{x}\partial_{y}f = \partial_{y}\partial_{x}f$.

Although numerically, the effect of $T_i$ on the effect of $T_j$ on $Y$ is the same as the effect of $T_j$ on the effect of $T_i$ on $Y$, only one direction might admit a sensible interpretation. This is similar to the causal interpretation of the set of equations $Y=mX+b$ or $X = m'Y+b'$ that is provided by a directed acyclic graph (DAG)~\cite{10.5555/1642718} and is not captured by the equation alone.
In contrast, note that the \emph{sign} of the interaction is uniquely determined since a \emph{direction} is specified: it is the effect on $Y$ of changing $T_i$ from $0$ to $1$, not from $1$ to $0$, that we compare to the effect on $Y$ of changing $T_j$ from $0$ to $1$.
Both the symmetry and the sign of $I^{a}_{i,j}$ are illustrated in the following diagram:
\begin{equation}\label{eq:diagram_2pt_interaction}
\begin{tikzpicture}[baseline=(current  bounding  box.center)]
  \matrix (m) [matrix of math nodes, row sep=1.5em,
    column sep=1.5em]{
    (1,1) & \textcolor{white}{A} & (1,0) \\
    \textcolor{white}{A} & & \textcolor{white}{A} \\
    (0,1) & \textcolor{white}{A} & (0,0) \\};
  \path[-stealth]
    (m-1-3) edge (m-1-1)
    (m-3-1) edge (m-1-1)
    (m-3-3) edge (m-3-1)
            edge (m-1-3)
    (m-2-3) edge [densely dotted] (m-2-1)
    (m-3-2) edge [densely dotted] (m-1-2);
\end{tikzpicture}
\end{equation}
We introduce the shorthand $A(t_i,t_j) =\BE(Y \mid T_{ij} = (t_i,t_j), \uT = 0)$ where $t_i,t_j \in \{0,1\}$.
In the diagram, vertex $(t_i,t_j)$ represents the expected outcome $A(t_i,t_j)$.
An arrow represents the average treatment effect of the variable of which the value changes, where the sign is dictated by `target minus source'.
For example, the left vertical arrow encodes the average treatment effect of $T_i \colon 0 \to 1$ on $Y$ given $T_j = 1$, \ie,
\begin{equation}
    A(1,1) - A(0,1) = \ATE_{T_i}(Y \mid T_j = 1, \uT = 0).
\end{equation}
Finally, either dotted arrow encodes the interaction between the effects of $T_i$ and $T_j$ on the outcome $Y$, together with its inherent symmetry.
Indeed, via the sign convention `target minus source', the diagram yields relations,
\small
\begin{align*}
    I^{a}_{i,j} &= \ATE_{T_i}(Y \mid T_j = 1, \uT = 0) - \ATE_{T_i}(Y \mid T_j = 0, \uT = 0), \\
    I^{a}_{j,i} &= \ATE_{T_j}(Y \mid T_i = 1, \uT = 0) - \ATE_{T_j}(Y \mid T_i = 0, \uT = 0),
\end{align*}
\normalsize
where the first line is encoded by the horizontal arrow and the second line by the vertical arrow.

Next, we define  the additive $n$-point interaction on the outcome $Y$. 
Whereas the $2$-point interaction is a difference of two ATEs, hence a sum of $2^2 = 4$ expectation values, the $3$-point interaction involves $2^3 = 8$ such terms and, more generally, the $n$-point interaction involves $2^n$ terms.
We introduce notation in order to state the formula of a general $n$-point interaction.

Consider a subset $K= \{i_1,\ldots,i_{\ell(K)}\} \subset \{1,\ldots,r\}$ of the indices for the treatment variables $T_1,\ldots,T_r$ in the random variable $O$.
Here, in general, given a further subset $J \subset K$ we denote its number of elements by $\ell(J)$.
We write $e^{(\ell(K))}_J$ for the $\ell(K)$-tuple of elements,
\begin{equation}
   e^{(\ell(K))}_J = (e_{i_1},\ldots,e_{i_{\ell(K)}}),
\end{equation}
where $e_{i_j}$ equals $1$ if $i_j \in J$ and $0$ if $i_j \not\in J$.
For example, if $J = \{2,7\} \subset \{1,2,4,5,7\} = K$, then
\begin{equation}
    e^{(\ell(K))}_{J} = e^{(5)}_{J} = (0,1,0,0,1).
\end{equation}
Finally, we write $T_K = (T_{i_1},\ldots,T_{i_{\ell(K)}})$ where $i_j \in K$ for all $1 \leq j \leq \ell(K)$.
Continuing the previous example, we have $\ell(K) = 5$ and $\ell(J) = 2$.
The $5$-point interaction between the variables $T_K = (T_1,T_2,T_4,T_5,T_7)$ is a sum of $2^5 = 32$ terms, and it will involve the expectation value
\small
\begin{equation}
\begin{split}
   &\BE\bigl(Y | T_K = e^{(5)}_J, \uT = 0\bigr) = \\
   &\BE\bigl(Y | (T_1,T_2,T_4,T_5,T_7) = (0,1,0,0,1), \uT = 0\bigr).
\end{split}
\end{equation}
\normalsize
The next target mapping, $\Phi^{a}_{i_1,\ldots,i_n}$, is our non-parametric additive formulation of $n$-point interaction.
\begin{defn}\label{def:interaction_add}
    Let $K = \{i_1,\ldots,i_n\} \subset \{1,\ldots,r\}$ be a subset of indices.
    The additive $n$-point interaction amongst the effects of the binary treatments $T_K = (T_{i_1},\ldots,T_{i_n})$ on the outcome $Y$, is
    \small
    \begin{equation}\label{eq:interaction_add}
    \begin{split}
        &\CM \ni p \mapsto \Phi^{a}_{i_1,\ldots,i_n}(p) \coloneqq I^{a}_{i_1,\ldots,i_n} = \\
        &\sum_{j=0}^{n} (-1)^{n-j} \left(\sum_{J \subset K \colon \ell(J) = j}\BE\bigl(Y \mid T_K = e^{(n)}_J, \uT = 0 \bigr) \right),
    \end{split}
    \end{equation}
    \normalsize
    where the internal sum runs over all subsets $J \subset K$ of length $\ell(J) = j$.
\end{defn}
This is the $n$th order boolean derivative of the function $(T_1,\ldots,T_n) \mapsto \BE(Y \mid T_1,\ldots,T_n)$. 
As an example, consider the $3$-point interaction $I^{a}_{1,2,3}$ amongst the effects of the binary random variables $T_1,T_2,T_3$ on the outcome $Y$.
Then $T_K = (T_1,T_2,T_3)$ with $K = \{1,2,3\}$, and $I^{a}_{1,2,3}$ consists of $2^3 = 8$ terms. 
Explicitly, the interaction reads
\small
\begin{equation*}
\begin{split}
    I^{a}_{1,2,3} = 
    &\BE(Y | T_K=(1,1,1),\uT = 0) - \BE(Y | T_K=(1,1,0),\uT = 0) \\
    - &\BE(Y | T_K=(1,0,1),\uT = 0) - \BE(Y | T_K=(0,1,1),\uT = 0) \\
    + &\BE(Y | T_K=(1,0,0),\uT = 0) + \BE(Y | T_K=(0,1,0),\uT = 0) \\
    + &\BE(Y | T_K=(0,0,1),\uT = 0) - \BE(Y | T_K=(0,0,0),\uT = 0).
\end{split}
\end{equation*}
\normalsize
Note that the four terms with a `$+$' are those for which an \emph{odd} number of variables satisfies $T_i = 1$, whereas the four terms with a `$-$' are those for which an \emph{even} number of variables satisfies $T_i = 1$. This is the other way around for $2$-point interactions, see Eq.~\ref{eq:interaction_add_2}, and depends on the parity of the number $n$ in general as follows from Eq.~\ref{eq:interaction_add}.
    
For a diagrammatic relation between the $3$-point interaction and the $2$-point interactions from which it is built, as in Eq.~\ref{eq:diagram_2pt_interaction}, together with an interpretation of $n$-point interaction in general, we refer the reader to section~\ref{sec:interpretation}.
Finally, we show in Cor.~\ref{cor:symmetry_interaction_add} that $I^{a}_{i_1,\ldots,i_n}$ is symmetric under any permutation of its indices $i_1,\ldots,i_n$.
\\

Our additive notion of $n$-point interaction amongst binary random variables readily generalizes to the setting of categorical variables.
Recall that a \emph{categorical random variable} $X$ distinguishes $k+1$ categories, typically labelled by integers $0,1,\ldots,k$, where the probability of being in category $i$ equals $p(X = i) = p_i$ and the $p_i \in [0,1]$ sum to $1$.
If $k = 1$ then $X$ is a binary random variable.
The categorical case leads to new phenomena, most importantly the dependence of the interaction $I^{a}_{i_1,\ldots,i_n}$ on the particular categories of $T_{i_1},\ldots,T_{i_n}$ one considers.
Indeed, \eg, $I^{a}_{i,j}$ in the binary case has a unique double derivative whereas in general a derivative is a \emph{function} that needs to be evaluated at a point (\ie, a category) in order to obtain a \emph{value}.

Before we define interaction as a target parameter, we again specify the statistical model:
\begin{equation*}
    \CM = \bigl\{p(Y,T_1,T_2,\ldots,T_r,W) \mid \substack{Y \text{ continuous},\, T_i \text{ categorical with} \\ k_i \in \BN \text{ categories},\, W \text{ a covariate}} \bigr\}
\end{equation*}
Let $t_i,t_i'$ and $t_j,t_j'$ be categories of $T_i$ and $T_j$ respectively.
First, we define the interaction between the effects of $T_i$ on $Y$ as $T_i$ changes from $t_i$ to $t_i'$ and the effect of $T_j$ on $Y$ as $T_j$ changes from $t_j$ to $t_j'$.
We write $T_i \colon t_i \to t_i'$ to mean that $T_i$ changes from $t_i$ to $t_i'$. 
For example, the average treatment effect of $T_i \colon t_i \to t_i'$ on $Y$, given $T_j = t_j$, reads
\small
\begin{equation}
\begin{split}
    &\ATE_{T_i \colon t_i \to t_i'}(Y \mid T_j = t_j) = \\
    &\BE(Y \mid T_i = t_i', T_j = t_j) - \BE(Y \mid T_i = t_i, T_j = t_j),
\end{split}
\end{equation}
\normalsize
The target mapping for the additive interaction between the effects of $T_i$ and $T_j$ on the outcome $Y$ is the following.
The \emph{additive interaction} $I^{a}_{i,j}(t_i t_i';t_j t_j')$ between the effect of the categorical variables $T_i \colon t_i \to t_i'$ on $Y$ and the effect of $T_j \colon t_j \to t_j'$ on $Y$, is given by the difference of their respective treatment effects, \ie, 
\begin{equation}\label{eq:interaction_add_2_categorical}
\begin{split}
    I^{a}_{i,j}(t_i t_i';t_j t_j') &= \ATE_{T_i \colon t_i \to t_i'}(Y \mid T_j = t_j', \uT = 0) \\
        &- \ATE_{T_i \colon t_i \to t_i'}(Y \mid T_j = t_j, \uT = 0).
\end{split}
\end{equation}
This definition reduces to that of Eq.~\ref{eq:interaction_add_2} in the case where both $T_i$ and $T_j$ are binary with labels $\{0,1\}$, \ie,
\begin{equation}
    I^{a}_{i,j}(01;01) = I^{a}_{i,j}.
\end{equation}
For properties of $n$-point interaction in this more general setting, such as transitivity, see~App.~\ref{app:categorical_variables}.

\subsection{Multiplicative interaction}\label{sec:mult_interaction}
In this section, we define the multiplicative interaction amongst $n$ binary random variables $X_i$ forming part of a random variable $O = (X_0,\ldots,X_r)$ with joint probability density function $p_0$.
First of all, we specify the statistical model as in Sec.~\ref{sec:targeted_learning}:
\small
\begin{equation*}
    \CM = \bigl\{p(X_0,X_1,\ldots,X_r) \mid X_i \text{ binary random variables} \bigr\}.
\end{equation*}
\normalsize
The target map, $\Phi^{m}_{i,j}$, is our non-parametric multiplicative formulation of $2$-point interaction between the binary random variables $X_i$ and $X_j$:
\begin{equation}\label{eq:interaction_mult_2}
\begin{split}
    &\CM \ni p \mapsto \Phi^{m}_{i,j}(p) \coloneqq I^m_{i,j} = \\
    &\frac{p(X_{ij} = (1,1) \mid \uX = 0)}{p(X_{ij} = (1,0) \mid \uX = 0)} \frac{p(X_{ij} = (0,0) \mid \uX = 0)}{p(X_{ij} = (0,1) \mid \uX = 0)}.
\end{split}
\end{equation}
The above ratios of conditional probability distributions may be expressed in terms of the joint probability distribution $p$ since all are conditioned on $\uX = 0$.
As a result, the $2$-point interaction  between, \eg, $X_1$ and $X_2$ can be directly estimated from the data, as it reduces to
\begin{equation}\label{eq:interaction_mult_2_no_conditional}
    I^m_{1,2} = \frac{p(1,1,0,\ldots,0)}{p(1,0,0,\ldots,0)} \frac{p(0,0,0,\ldots,0)}{p(0,1,0,\ldots 0)}.
\end{equation}
Moreover, if a variable $X_k$ appearing in the $\uX$ is independent of both $X_i$ and $X_j$, then one need not condition on $X_k$. In this case, statistics may be improved as $X_k$ drops out of the conditional joint distribution $p(X_i,X_j | \uX)$ for $(X_i,X_j)$. See Sec.~\ref{sec:conditional_independence} where this argument is explained in detail. 
\\

The multiplicative $2$-point interaction $I^{m}_{i,j}$ of Eq.~\ref{eq:interaction_mult_2} between the binary random variables $X_i, X_j$ can also be expressed in terms of their (conditional) expectation values.
Numerically, this re-formulation allows one to obtain uncertainties on the estimates of $I^m_{i,j}$ using, \eg, the empirical bootstrap procedure, see Sec.~\ref{sec:Ising_RBM_numerical}.
The expression of $I^{m}_{i,j}$ in terms of expectation values is derived via the product rule for probabilities, which yields 
\begin{align*}
    \frac{p(X_{ij} = (0,0) \mid \uX = 0)}{p(X_{ij} = (1,0) \mid \uX = 0)}
    = \frac{1-\BE(X_i\mid X_j=0, \uX = 0)}{\BE(X_i\mid X_j=0, \uX = 0)},
\end{align*}
and similarly for the remaining two probabilities. Therefore, the multiplicative 2-point interaction Eq.~\ref{eq:interaction_mult_2} can be written as a combination of expectation values:
\small
\begin{align}\label{eq:interaction_mult_2_expectation}
     I^m_{i,j} = \frac{\BE(X_i | X_j = 1, \uX = 0)}{\BE(X_i| X_j=0, \uX = 0)}
        \frac{\bigl(1-\BE(X_i | X_j = 0, \uX = 0)\bigr)}{\bigl(1-\BE(X_i | X_j = 1, \uX = 0)\bigr)}.
\end{align}
\normalsize
It is not hard to see that this expression is symmetric under $X_i \leftrightarrow X_j$.
For a general statement, see Prop.~\ref{prop:symmetry_interaction_mult}.

The following is the target map for our non-parametric multiplicative formulation of $n$-point interaction.
\begin{defn}\label{def:interaction_mult}
    Let $K = \{i_1,\ldots,i_n\} \subset \{0,1,\ldots,r\}$ be a subset of indices.
    The multiplicative $n$-point interaction amongst the binary random variables $X_K = (X_{i_1},\ldots,X_{i_n})$ is defined as
    \small
    \begin{equation}\label{eq:interaction_mult}
    \begin{split}
        &\CM \ni p \mapsto \Phi^{m}_{i_1,\ldots,i_n}(p) \coloneqq I^m_{i_1,\ldots,i_n} = \\
        &\prod_{j=0}^{n} \left(\prod_{J \subset K \colon \ell(J) = j} p\bigl(X_K = e^{(n)}_J  \mid \uX = 0\bigr)^{(-1)^{n-j}} \right),
    \end{split}
    \end{equation}
    \normalsize
    where the internal product runs over all subsets $J \subset K$ of length $\ell(J) = j$.
\end{defn}
As an example, consider the $3$-point interaction $I^m_{1,2,3}$ amongst the binary random variables $X_1,X_2,X_3$.
It consists of $2^3 = 8$ terms.
Writing $X_K = X_{1,2,3}$ for the triple $(X_1,X_2,X_3)$, the interaction reads
\small
\begin{equation}\label{eq:interaction_mult_3}
\begin{split}
    I^m_{1,2,3} =
    &\frac{p(X_K = (1,1,1) \mid \uX = 0)}{p(X_K = (1,1,0) \mid \uX = 0)}
    \frac{p(X_K = (1,0,0) \mid \uX = 0)}{p(X_K = (1,0,1) \mid \uX = 0)} \\
    \cdot &\frac{p(X_K = (0,1,0) \mid \uX = 0)}{p(X_K = (0,1,1) \mid \uX = 0)}
    \frac{p(X_K = (0,0,1) \mid \uX = 0)}{p(X_K = (0,0,0) \mid \uX = 0)}.
\end{split}
\end{equation}
\normalsize
Note that the four terms in the numerator are those for which an \emph{odd} number of variables satisfies $X_i = 1$, whereas the four terms in the denominator are those for which an \emph{even} number of variables satisfies $X_i = 1$.
This is the other way around for $2$-point interactions, see Eq.~\ref{eq:interaction_mult_2}, and depends on the parity of the number $n$ in general as follows from Eq.~\ref{eq:interaction_mult}.
There is a large amount of symmetry in this expression:
\small
\begin{equation}
    I^{m}_{1,2,3}
    = \frac{I^{m}_{1,2}(X_3 = 1)}{I^{m}_{1,2}(X_3 = 0)}
    = \frac{I^{m}_{1,3}(X_2 = 1)}{I^{m}_{1,3}(X_2 = 0)}
    = \frac{I^{m}_{2,3}(X_1 = 1)}{I^{m}_{2,3}(X_1 = 0)},
\end{equation}
\normalsize
where $I^{m}_{1,2}(X_3 = 1)$ means that all instances of $X_3$ are conditioned as $X_3 = 1$, as opposed to $X_3 = 0$.
The fact that all three expressions (and the remaining three) are equal follows from the $3! = 6$ symmetries of $I^{m}_{1,2,3}$ of Prop.~\ref{prop:symmetry_interaction_mult} below.
We also remark that $I^m_{1,2,3}$ can be readily computed from data since the ratios of conditional probability distributions appearing in this equation may be expressed in terms of the joint probability distribution $p$ of $O$.
As for the $2$-point interaction, a general $3$-point interaction $I^{m}_{i,j,k}$ can be expressed in terms of expectation values:
\begin{equation}\label{eq:3pt_mult_expectation_values}
\begin{split}
    I^{m}_{i,j,k} &= \frac{R_{i;jk}(1,1)}{R_{i;jk}(1,0)} \frac{R_{i;jk}(0,0)}{R_{i;jk}(0,1)}, 
\end{split}
\end{equation}
where we have defined, for any variable $X_i$ conditioned on $X_{jk} = (X_j,X_k) = (a,b)$, the following expression,
\begin{equation}
    R_{i;jk}(a,b) = \frac{\BE(X_i \mid X_{jk} = (a,b), \uX = 0)}{1-\BE(X_i\mid X_{jk}=(a,b), \uX = 0)}.
\end{equation}
For any binary variable $T$ with $p(T=1) = p$, this fraction encodes the ratio $p/(1-p)$.
The expression of the $3$-point interaction $I^{m}_{i,j,k}$ in terms of expectation values over binary random variables is used in Sec.~\ref{sec:Ising_RBM_numerical} for the purposes of numerical estimation via statistical bootstrap.
It is straightforward to write down an expression similar to that of Eq.~\ref{eq:3pt_mult_expectation_values} for any $n$-point interaction, making statistical bootstrap applicable in general.

Finally, we make explicit a basic and natural symmetry that is inherent in our non-parametric formulation of $n$-point interaction $I^{m}_{i_1,\ldots,i_n}$ amongst the binary random variables $X_{i_1},\ldots,X_{i_n}$: $n$-point interaction is invariant under any permutation $\sigma$ of the $n$ variables, namely
\begin{equation}\label{eq:symmetry_interaction_mult_in_paper}
    I^{m}_{i_1,\ldots,i_n} = I^{m}_{\sigma(i_1,\ldots,i_n)}.
\end{equation}
We refer the interested reader to Prop.~\ref{prop:symmetry_interaction_mult} for a proof.

\subsection{Relating additive and multiplicative formulations}
\label{sec:relating_add_mult}
Consider binary random variables $X_i$ forming part of a random variable $O = (X_0,\ldots,X_r)$ with joint probability density function $p$.
In this section, we show that the non-parametric formulation of multiplicative $n$-point interaction amongst the variables $X_{i_1},\ldots,X_{i_n}$ is equivalent to the additive $n$-point interaction amongst the effects of the variables $X_{i_1},\ldots,X_{i_n}$ on a particular outcome canonically related to $p$; in fact, when both interactions are defined, they are related by a logarithm.
This outcome is the negative of the energy function $E(\uX)$, obtained from the joint distribution $p$ via
\small
\begin{equation}
    p(\uX) = \exp\bigl(-(-\ln p(\uX))\bigr) \quad \text{and } E(\uX) = -\ln p(\uX).
\end{equation}
\normalsize
Note that the expectation value of $E(\uX)$ is the Shannon entropy of the probability distribution $p$.
More precisely, the additive and multiplicative $n$-point interactions amongst the $X_{i_1},\ldots,X_{i_n}$ are related via
\begin{equation}\label{eq:equivalence_interactions_add_mult}
    \ln \bigl(I^{m}_{i_1,\ldots,i_n}\bigr) = I^{a}_{i_1,\ldots,i_n},
\end{equation}
where the additive $n$-point interaction is computed with respect to the outcome $Y = -E(\uX)$.
Indeed, this follows directly as taking the logarithm of Eq.~\ref{eq:interaction_mult} yields Eq.~\ref{eq:interaction_add}.
Here we have used that
\small
\begin{equation}
    \frac{p(X_{i_1,\ldots,i_n} = e^{(n)}_{J} \mid \uX = 0)}{p(X_{i_1,\ldots,i_n} = e^{(n)}_{J'} \mid \uX = 0)}
    = \frac{p(X_{i_1,\ldots,i_n} = e^{(n)}_{J}, \uX = 0)}{p(X_{i_1,\ldots,i_n} = e^{(n)}_{J'}, \uX = 0)},
\end{equation}
\normalsize
\ie, a ratio of conditional probabilities is equal to the corresponding ratio of joint probabilities, together with the fact that an expectation value of the number
\begin{equation*}
    \alpha = \ln p(X_{i_1,\ldots,i_n} = e^{(n)}_{J}, \uX = 0)
\end{equation*}
equals the number itself: $\BE(\alpha) = \alpha$.
Take, as an example, the $2$-point interaction $I^m_{1,2}$ between $X_1$ and $X_2$ of Eq.~\ref{eq:interaction_mult_2}:
\small
\begin{equation*}
\begin{split}
    I^m_{1,2} =
    &\frac{p(X_{12} = (1,1) \mid \uX = 0)}{p(X_{12} = (1,0) \mid \uX = 0)}
    \frac{p(X_{12} = (0,0) \mid \uX = 0)}{p(X_{12} = (0,1) \mid \uX = 0)} \\
    =
    &\frac{p(X_{12} = (1,1), \uX = 0)}{p(X_{12} = (1,0), \uX = 0)}
    \frac{p(X_{12} = (0,0), \uX = 0)}{p(X_{12} = (0,1), \uX = 0)}.
\end{split}
\end{equation*}
\normalsize
Taking the logarithm, and simplifying notation to $p_{12}(X_1,X_2) = p(X_1,X_2, \uX = 0)$, yields
\begin{equation*}
\begin{split}
    \ln I^m_{1,2}   &= \ln p_{12}(1,1) - \ln p_{12}(1,0) \\
                    &-\ln p_{12}(0,1) + \ln p_{12}(0,0) = I^{a}_{1,2},
\end{split}
\end{equation*}
as claimed.
Note that we recognise the canonical outcome $Y = - E(\uX) = \ln p(\uX)$.

As a corollary, we deduce the general permutation symmetry of the additive $n$-point interaction, namely
\begin{equation}
    I^{a}_{i_1,\ldots,i_n} = I^{a}_{\sigma(i_1,\ldots,i_n)}
\end{equation}
for any permutation $\sigma$; see Cor.~\ref{cor:symmetry_interaction_add} for a proof.

\subsection{Interpreting higher-order interactions}\label{sec:interpretation}
The non-parametric $n$-point interaction consists of $2^n$ terms, as it involves $n$ binary variables turning on or off.
Consequently, the interpretation of such higher-order interactions is somewhat delicate.
To fix ideas, we focus on the case of additive $3$-point interactions, the discussion readily generalises to $n$-point interactions.

Let $T_1, T_2, T_3$ be three binary random variables and let $Y$ denote the outcome.
The interpretation of the 3-point interaction $I^{a}_{1,2,3}$ of Sec.~\ref{sec:add_interaction} is similar to that of the $2$-point interaction in Eq.~\ref{eq:diagram_2pt_interaction}. Consider the following diagram:

\vspace{-0.5cm}
\begin{equation}\label{eq:diagram_3pt_interaction}
\begin{tikzpicture}
  \matrix (m) [matrix of math nodes, row sep=1.0em,
    column sep=1.0em]{
    & (1,1,0) & & (1,0,0) \\
    (1,1,1) & & (1,0,1) & \\
    & (0,1,0) & & (0,0,0) \\
    (0,1,1) & & (0,0,1) & \\};
  \path[-stealth]
    (m-1-2) edge (m-2-1)
    (m-1-4) edge (m-1-2) edge (m-2-3)
    (m-3-2) edge [densely dotted] (m-1-2)
            edge [densely dotted] (m-4-1)
    (m-2-3) edge [-,line width=6pt,draw=white] (m-2-1)
            edge (m-2-1)
    (m-3-4) edge (m-1-4) edge (m-4-3)
            edge [densely dotted] (m-3-2) 
    (m-4-1) edge (m-2-1)
    (m-4-3) edge (m-4-1)    
            edge [-,line width=6pt,draw=white] (m-2-3)
            edge (m-2-3);
\end{tikzpicture}
\end{equation}

We have introduced the shorthand
\begin{equation}
    A(t_1,t_2,t_3) = \BE(Y \mid T_{123} = (t_1,t_2,t_3), \uT = 0),
\end{equation}
where $t_1,t_2,t_3 \in \{0,1\}$.
Vertex $(t_1,t_2,t_3)$ represents the expected outcome $A(t_1,t_2,t_3)$.
An arrow represents the ATE of the variable of which the value changes, where the sign is again dictated by `target minus source'.
For example, the front left vertical arrow encodes the ATE:
\small
\begin{equation*}
    A(1,1,1) - A(0,1,1) = \ATE_{T_1}(Y \mid T_{23} = (1,1), \uT = 0).
\end{equation*}
\normalsize
The twelve arrows along the six faces of the cube (one horizontal and one vertical each) encode the six additive $2$-point interactions between the effects of two out of the three variable $T_1,T_2,T_3$ on the outcome $Y$, with the third variables fixed to $0$ or $1$, together with their inherent symmetry as discussed in Sec.~\ref{sec:add_interaction}.
Either of the three arrows through the sides of the cube, depicted in the figure below, encodes the additive $3$-point interaction between the effects of $T_1,T_2,T_3$ on the outcome $Y$. 
\begin{equation}
\begin{tikzpicture}
  \matrix (m) [matrix of math nodes, row sep=0.7em,
    column sep=-2em]{
    & & I_{23}(T_1 = 1) & & \\
    & & & I_{12}(T_3 = 0) & \\
    I_{13}(T_2 = 1) & & & & I_{13}(T_2 = 0) \\
    & I_{12}(T_3 = 1) & & & & \\
    & & I_{23}(T_1 = 0) & & \\};
  \path[-stealth]
    (m-2-4) edge (m-4-2)
    (m-3-5) edge (m-3-1)
    (m-5-3) edge (m-1-3);
\end{tikzpicture}
\end{equation}
We have the relations `target minus source':
\begin{equation}
\begin{split}
    I^{a}_{1,2,3}
        &= I^{a}_{1,2}(T_3 = 1) - I^{a}_{1,2}(T_3 = 0) \\
        &= I^{a}_{1,3}(T_2 = 1) - I^{a}_{1,3}(T_2 = 0) \\
        &= I^{a}_{2,3}(T_1 = 1) - I^{a}_{2,3}(T_1 = 0).
\end{split}
\end{equation}
This is our three-fold interpretation of $3$-point interaction: it is the change in the $2$-point interaction between $T_1$ and $T_2$, \ie, $I^{a}_{1,2} = I_{1,2}^{a}(T_3 = 0)$, as $T_3$ is turned on $T_3 \colon 0 \to 1$, yielding $I^{a}_{1,2}(T_3 = 1)$.
In other words, $I^{a}_{1,2,3}$ captures the dependence of the $2$-point interaction between $T_1$ and $T_2$ as a function of $T_3$. We conclude that the sign and magnitude of a $3$-point interaction can be interpreted relative to any of the $2$-point interactions between two out of the three variables. 

As an illustration, we present the natural interpretation of symmetric higher-order interactions in the following real-world examples:
\begin{enumerate}
	\item Genomic variant-interaction leading to disease: The additive $2$-point interaction answers the question \textit{Does variant $i$ influence disease differently depending on the status of variant $j$, and by how much?} The $3$-point interaction answers the question \textit{Does the interaction between variant $i$ and variant $j$ influence disease differently depending on the status of variant $k$, and by how much?} The same interpretation applies to combination therapy where the effects of multiple drug-interactions on health are examined.
	\item Molecular networks: The multiplicative $2$-point interaction answers the question \textit{Does the likelihood of gene $i$ being on increase or decrease depending on whether gene $j$ is on or off, and by how much?} Similarly, the $3$-point interaction answers the question \textit{Does the interaction between gene $i$ and gene $j$ influence outcome differently, depending on the status of gene $k$, and by how much?}
\end{enumerate}
The cause-effect directionalities are either provided by subject experts, discovered by perturbation experiments, or derived by causal discovery algorithms.


\subsection{Improving statistics via (conditional) independence}\label{sec:conditional_independence}
The non-parametric formulations of $n$-point interaction amongst the random variables $X_{i_1},\ldots,X_{i_n}$, Eq.~\ref{eq:interaction_add} and Eq.~\ref{eq:interaction_mult}, require conditioning on all remaining variables in the system.
In order to improve statistical power when estimating interactions directly from data, this requirement can be relaxed under the assumption that the system is \emph{Markovian}.
Then, one need only condition on the \emph{parents} of the variables $X_{i_1},\ldots,X_{i_n}$ involved in the interaction.
A finite collection of categorical random variables $\{X_i\}_{i=1}^r$ is a \emph{Markov random field} if
\begin{enumerate}
    \item the joint distribution is strictly positive, \ie, $p(X_i = x_i \text{ for } 1 \leq i \leq r) > 0$, and
    \item for each  $X_i$ there exists a set of \emph{parents} $\CP_i \subset \{1,2,\ldots,r\}$, not including $i$, which is the minimal set such that the following condition holds:
    \small
    \begin{equation*}
        p\bigl(X_i = x_i \mid \uX = \underline{x} \bigr) = p\bigl( X_i = x_i \mid X_j = x_j \text{ for } j \in \CP_i \bigr).
    \end{equation*}
    \normalsize
    In words, the conditional probability of $X_i = x_i$ only depends on its parents $X_j = x_j$, $j \in \CP_i$.
\end{enumerate}
It is not hard to see that the set of parents $\CP_i$ of the variable $i$ is unique.
To any Markov random field one can associate a finite undirected graph with a vertex for each variable $X_i$ and an edge connecting $X_i$ and $X_j$ if $j \in \CP_i$, \ie, $X_j$ is a parent of $X_i$.
The Hammersley--Clifford Theorem \cite{HamCliff1971} (see also~\cite{MR329039}) states that $\{X_i\}_{i=1}^r$ is a Markov random field if and only the joint probability distribution $p(X_1,\ldots,X_r)$ is a \emph{Gibbs ensemble}, \ie, there exists a Hamiltonian $E(X_1,\ldots,X_r)$ such that
\begin{equation}
    p(X_1,\ldots,X_r) = \frac{1}{\CZ} \exp\bigl(- E(X_1,\ldots,X_r) \bigr),
\end{equation}
where $\CZ$ denotes the partition function normalising the distribution.
As a result, \emph{all} energy-based models of binary and categorical random variables are Markov random fields, and may thus benefit from the aforementioned improvement in statistical power when computing $n$-point interactions directly from data.
These facts are leveraged in the numerical sections~\ref{sec:Ising_RBM_numerical} and~\ref{sec:H-with-4pt_numerical} below.
We also remark that we regard the assumption that $\{X_i\}_{i=1}^r$ be a Markov random field as minimal in the context of inverse problems, since it is a basic axiom in any treatment of causality, \eg, in the works of Pearl~\cite{10.5555/1642718} or Rubin~\cite{imbens_rubin_2015}.
In practice, it may be the case that the parent structure of a Markov random field $\{X_i\}_{i=1}^r$ is not \emph{a priori} known and is to be inferred from data.
This can be achieved by applying algorithms designed to estimate conditional independence amongst variables in a given system, from data.
These algorithms use parametric or non-parametric statistical methods, such as Pearson's $\chi$-squared test, to establish conditional independence amongst categorical random variables \cite{fastparallelpc,Kuipers2018EfficientSL,10.3389/fgene.2019.00524}. 

As an example of a structure discovery algorithm, the PC algorithm only scales exponentially in the \emph{worst} case scenario.
The sparser the ground truth network structure is, the faster the algorithm will converge.
In Ref.~\cite{fastparallelpc}, parallelised PC is benchmarked for constructing gene network neighbouring structures for yeast (5361 variables), a bacterium (2810 variables) and DREAM5-Insilico dataset (1643 variables).
The algorithm was shown to converge in less than 12 hours in all cases, on a personal computer with 8-cores. 
Once the graph structure is known or learned, estimating interactions scales as efficiently as computing averages over the data. The algorithm is therefore approximately as fast as estimating the bootstrap error on the interaction estimates. 

As a simple illustration, in Sec.~\ref{sec:cond-indept-numerical} we demonstrate the results of conditional independence tests on data generated by the $2$-dimensional Ising model, using the $\chi$-squared test, and discuss the improved statistics of the interaction estimates.

\vspace{-0.5cm}

\section{Results I: analytical map to regression and numerical results for the UK Biobank simulation} \label{sec:linear_regression_analytic}
As an elementary and concrete example, in this section we show that the non-parametric additive definition of interactions (Def.~\ref{def:interaction_add}) reduces to an interaction coefficient in a linear regression model. We illustrate this example in the context of a biomedical application. \\

\vspace{-0.5cm}
\subsection{Application: Interactions in biomedicine}
\label{sec:Interactions-in-biomedicine}
Genome-wide association studies (GWAS) are methods to identify genetic variants in the genome of individuals in a population, that could be associated with a disease or trait.
In case-control GWAS, one searches for variants, a collection of single nucleotide changes in the DNA, that occur more frequently in people with a particular disease (cases) as compared to those without the disease (controls).
The goal of GWAS is to find candidate genes that could potentially increase the risk of a certain disease, with the medical aim of identifying potential drug targets.
Currently, one of the main aims of this field of study is to move away from associational to causal variant-trait relations.
For the magnitude of causal effects of genomic variants on traits to be inferred accurately, one is required to (i) relax parametric assumptions such as the linear dependencies of the traits on the variants, and (ii) take into account interactions amongst the variants affecting traits, known as \emph{epistasis}.
In contrast to the methods used in some of the key literature in the field \cite{LIU20191022,Claussnitzer2020}, our definition of interaction via the Targeted Learning framework satisfies requirement (i) by removing the need for parametric assumptions altogether, and incorporates (ii) by taking into account epistatic interactions.

\subsection{Epistatic interactions}
Consider (i) a transcription factor protein which modifies gene expression by binding the DNA.
The degree of binding, however, depends on the underlying DNA variants to which the transcription factor is binding. 
Now suppose that (ii) there are multiple other variants across the genome that regulate the effect of another transcription factor protein, hence changing levels of gene expressions.
Then, (i) and (ii) have downstream interactions that affect particular traits or diseases in humans. 
As the considerations of genetics and causality are beyond the scope of this work, we limit ourselves here to a sample application of our techniques in extracting such epistatic interactions, using simulated data of trait and disease representative of the summary-level UK BioBank population \cite{Sudlow-ukbb}. 
We consider the case of a complex continuous trait, height, as an example. \\

There are many variants across the genome contributing a small fraction to a complex trait such as height; this is known as the omnigenic model~\cite{Boyle2017-omnigenic}.
Suppose that we have an a priori understanding of which genomic variants are relevant to consider, \eg, those in the vicinity of bone developmental genes. 
Consider the following linear ground truth, involving six variants, $\text{V}_j$ for $j=1,2,\ldots,6$, across the genome each contributing via a positive or negative coefficient to the value of height. Without loss of generality, suppose that only two of them also have a non-zero interaction (the generalisation to more interactions is trivial):
\begin{align}\label{eq:hight-model}
        \text{Height}^{(i)} \sim \alpha_0 + \sum_{j=1}^6 \alpha_j\cdot\text{V}_j^{(i)} + \gamma \cdot \text{V}_1 \cdot \text{V}_2 + \epsilon,
\end{align}
where $i$ represents an individual, $\epsilon$ is the noise in height and $\alpha_0$ corresponds to unobserved, but independent, variants contributing to height. \\

We use our model-agnostic non-parametric additive $2$-point interaction estimator $I^{a}_{1,2}$, Eq.~\ref{eq:interaction_add_2}, to show we recover the coefficient $\gamma$ representing the ground truth interaction between $V_1$ and $V_2$.
To see this, we simply compute the four expected outcomes in Eq.~\ref{eq:interaction_add_2}:
\small
\begin{align*}
    \BE(H \mid V_1 = 1, V_2 = 1, V_\text{3,4,5,6}=0) &= \alpha_0 + \alpha_1 + \alpha_2 + \gamma, \\
    \BE(H \mid V_1 = 1, V_2 = 0, V_\text{3,4,5,6}=0) &= \alpha_0 + \alpha_1, \\
    \BE(H \mid V_1 = 0, V_1 = 0, V_\text{3,4,5,6}=0) &= \alpha_0 + \alpha_2,  \\
    \BE(H \mid V_1 = 0, V_2 = 0, V_\text{3,4,5,6}=0) &= \alpha_0 
\end{align*}
\normalsize
We obtain the following expressions for the four average treatment effects:
    \begin{equation}
    \begin{split}
        \ATE_{V_1}(H \mid T_V = 1) &= \alpha_1 + \gamma,  \\ 
        \ATE_{V_1}(H \mid T_V = 0) &= \alpha_1 \\
        \ATE_{V_2}(H \mid T_V = 1) &= \alpha_2 + \gamma, \\ 
        \ATE_{V_2}(H \mid T_V = 0) &= \alpha_2.
    \end{split}
    \end{equation}
    The interactions both ways around are $I^{a}_{1,2} = \gamma = I^{a}_{2,1}$, as expected since interaction is symmetric by Cor.~\ref{cor:symmetry_interaction_add}.
    In conclusion, we have $I^{a}_{1,2} = \gamma$ as claimed.
Generalisations to higher-point interactions are trivial.
For a numerical example with 3-point interactions, see App.~\ref{sec:linear_regression_numerical}.

\subsection{Numerical simulations based on \\ the UK BioBank traits} \label{sec:numerics-UKBB}
 We generate data from the above ground truth, Eq.~\ref{eq:hight-model}.
The coefficients are chosen without loss of generality to reproduce a realistic distribution of heights which is representative of the UK BioBank population \cite{Sudlow-ukbb}, with approximately the same mean (168.5 cm) and standard deviation (9.3 cm) (\href{http://biobank.ndph.ox.ac.uk/showcase/field.cgi?id=50}{\texttt{UK BioBank, standing height}}).

The male and female populations are generated separately and merged to form the full distribution of height, consisting of 20,000 individuals, as presented in Fig.~\ref{fig:ukbb-height}.
More explicitly, WLOG, $\alpha_0=154$ for females and $\alpha_0=166$ for males, together with $\{\alpha_1, \cdots, \alpha_6\}=\{2,6,-3,6,-1.5,6\}$ with $\gamma=\epsilon=5$.
Notice that the 2-point interaction, $\gamma$, between the two aforementioned variants is chosen to approximately equal the level of noise in height across the population.
The variant allele frequencies for $\text{V}_1$, $\text{V}_2 \sim \Binom(0.8), \Binom(0.7)$ respectively, and for $\text{V}_3, \ldots, \text{V}_6 \sim \Binom(0.5)$.

\begin{figure}[!htb]
    \includegraphics[width=\linewidth]{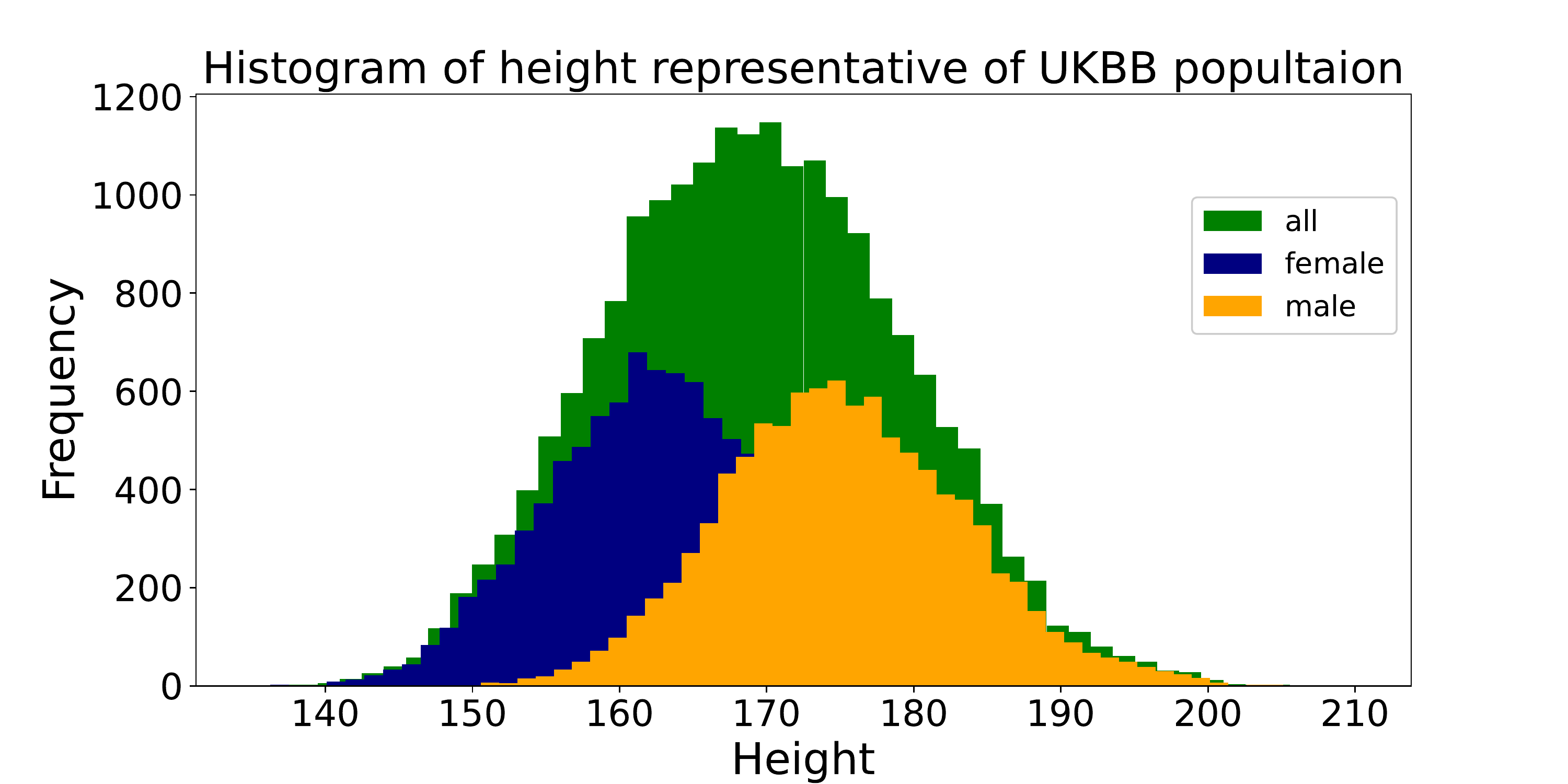}
    \caption{Histogram of female, male and combined heights on simulated data, such that it is representative of the UK BioBank population (\href{http://biobank.ndph.ox.ac.uk/showcase/field.cgi?id=50}{\texttt{UK BioBank, standing height}}). }
    \label{fig:ukbb-height}
\end{figure}

We apply the additive Targeted Learning estimator of interaction Eq.~\ref{eq:interaction_add} to the data.
We obtain the Targeted Learning prediction $\gamma = 4.77(1.36)$ which agrees with the ground truth value $\gamma=5$, within statistics.
\\

{\bf N.B.} Since the Targeted Learning (TL) estimator is non-parametric, it is completely agnostic to form, \eg, linearity or non-linearity, of the data generating process.
In particular, in the case of categorical variants, there is no biological basis for the linearity assumption often used in modelling variant-trait relations.
The above example merely serves to illustrate that \emph{if} the underlying truth were to be linear, then the TL estimator correctly recovers this linearity.
In fact, TL can be used to \emph{test} if the effect of variants on trait is linear. \\

The Targeted Learning estimator of epistatic interactions applies to all scenarios, be they linear, non-linear or non-monotonic, without requiring any parametric ansatz regarding the form of the fit function.
This generality is of crucial importance since transcription factors often consist of large protein complexes that can introduce highly non-trivial behaviour as well as other higher-order interactions. 
Such scenarios will be missed by standard linear parametric fits.
Using individual-level DNA variant and trait population data, our estimator's agnosticism and flexibility allows for new discoveries of novel and more complex interaction networks.

\section{Results II: analytical map and numerical results of the 2D Ising model and Restricted Boltzmann machines (RBM)}
\label{sec:results1}
In this section, we discuss Boltzmann probability distributions.
In Sec.~\ref{sec:Ising_model_analytic}, we recover the $2$-point couplings in an Ising Hamiltonian from the multiplicative formulation, Eq.~\ref{eq:interaction_mult_2}. 
In Sec.~\ref{sec:RBM_analytic}, we consider a more complex Hamiltonian: the Restricted Boltzmann Machine (RBM).
We analytically obtain its all-order couplings \emph{without} any need for an asymptotic expansion and resummation as originally employed in \cite{PhysRevB.100.064304}, using the \emph{same} universal multiplicative estimator, Eq.~\ref{eq:interaction_mult_2}.
In Sec.~\ref{sec:Ising_RBM_numerical}, we compare numerical results and finally, in Sec.~\ref{sec:cond-indept-numerical}, we evaluate the improvement in the numerical results when applying Markovian conditional independence criteria.

\subsection{Two-dimensional Ising model}
\label{sec:Ising_model_analytic}
We briefly recall the $2$-dimensional Ising model.{}
Consider a $2$-dimensional square lattice of size $L^2$ with periodic boundary conditions, with a spin $\tilde{v}_i$ on each lattice point $i$ taking on values $\tilde{v}_i = \pm 1$.
A \emph{state} of the Ising model is the assignment $\tilde{\bf v}$ of a value $+1$ or $-1$ to each of the $L^2$ spins.
Given a temperature $T$, the Boltzmann distribution describes the probability $p(\tilde{\bf v}|T)$ that the system takes on a particular state $\tilde{\bf v}$ at temperature $T$.
Explicitly,
\small
\begin{equation}\label{eq:H-ising}
    p(\tilde{\bf v}|T) = \frac{1}{\CZ(T)} e^{-E(\tilde{\bf v})} \quad \text{where }
        E(\tilde{\bf v}) = -\sum_{i,j} J_{i,j} \tilde{v}_i \tilde{v}_j,
\end{equation}
\normalsize
where the sum runs over all pairs of lattice sites $(i,j)$, where $J_{i,j}$ is the coupling between spins $\tilde{v}_i$ and $\tilde{v}_j$, the external magnetic field is zero, and $\CZ(T)$ is the partition function that normalises this probability distribution.

In the basic version of the Ising model, the interaction between non-nearest neighbour spins is put to zero, and $J_{i,j} = \frac{1}{2T}$ for all nearest neighbour spins $\tilde{v}_i, \tilde{v}_j$; this is not required in general.
However, $J_{i,j} = J_{j,i}$ is symmetric. \\

The \emph{inverse Ising problem} is concerned with estimating the coupling $J_{i,j}$ from data.
Our non-parametric definition Eq.~\ref{eq:interaction_mult_2} of multiplicative $2$-point interaction between the binary random variables $v_i$ and $v_j$ recovers the coupling coefficient $J_{i,j}$ directly from the probability distribution, after applying $\ln(-)/8$; the factor of $8$ is due to double counting as explained below.
To see this, we first apply the bijective transformation $\tilde{v}_i = 2v_i-1$ expressing the values of a spin $v_i$ in terms of $\{0,1\}$ as opposed to $\{-1,1\}$ in order to use our definition of multiplicative $2$-point interaction Eq.~\ref{eq:interaction_mult_2}.
Thus, $\tilde{v}_i = -1$ corresponds to $v_i = 0$, whereas $\tilde{v}_i = 1$ corresponds to $v_i = 1$.
The energy function corresponds to
\small
\begin{equation*}
    E({\bf v}) = - 4 \sum_{i,j} J_{i,j} v_i v_j + 4 \sum_{i} \left( \sum_{j} J_{i,j} \right)v_i - \left( \sum_{i,j} J_{i,j} \right),
\end{equation*}
\normalsize
where we have used the symmetry $J_{i,j} = J_{j,i}$.
    
Next, we compute the multiplicative $2$-point interaction $I^m_{i,j}$ between two spins.
Without loss of generality, we do this for spins $v_1$ and $v_2$.
We compute the probabilities that $(v_1,v_2)$ takes on the values $\{(1,1),(1,0),(0,1),(0,0)\}$ with all other spins being zero, \ie, $\underline{v} = 0$.
We find
\begin{align}
    \frac{p(1,1,\underline{v} = 0)}{p(1,0,\underline{v} = 0)}
    &= \exp\biggl(4J_{1,2} + 4J_{2,1} - 4\sum_{j\neq 1} J_{1,j}\biggr) \\
    \frac{p(0,0,\underline{v} = 0)}{p(0,1,\underline{v} = 0)}
    &= \exp\biggl(4\sum_{j\neq 1} J_{1,j}\biggr),
\end{align}
and multiplying both yields $I^m_{1,2} = \exp(8J_{1,2})$.
Hence $\ln(I^m_{1,2})/8 = J_{1,2}$ as claimed.

Whether or not $I^m_{1,2}$ is smaller or larger than $1$ is due to the interpretation of the interaction.
In this case, it is the $2$-point interaction between \emph{turning on} both spins, \ie, $v_1 \colon 0 \to 1$ and $v_2 \colon 0 \to 1$, not turning them off.
Alternatively, computing the additive interaction between $v_1 \colon 0 \to 1$ and $v_2 \colon 0 \to 1$ on the \emph{outcome} $-E({\bf v})$ is easily seen to be $I^a_{1,2} = 8J_{1,2}$.
The factor of $8$ is due to the change of variables $\tilde{v}_i \mapsto v_i$ and a double counting in Eq.~\ref{eq:H-ising}.
Finally, the coupling $J_{i,j}$ can be obtained directly by taking the double derivative of the outcome $-E({\bf v})$ with respect to $v_1$ and $v_2$.
\\

In Sec.~\ref{sec:Ising_RBM_numerical}, we extract $J_{i,j}$ directly from data.
In order to improve the estimate of the $2$-point interaction $I^{m}_{i,j}$ from data, one may appeal to the Hammersley--Clifford Theorem of Sec.~\ref{sec:conditional_independence} to increase statistics by only conditioning on the relevant \emph{parent} variables, \ie, in this case the nearest neighbours of $v_i$ and $v_j$.
In fact, the Monte Carlo algorithm, \eg, Metropolis, generating Ising configurations uses this feature in its update step by computing the change in energy only using nearest neighbour spins.
For completeness, we analytically demonstrate that the Hammersley--Clifford Theorem applies to the Ising model in App.~\ref{app:Hammersley_Clifford}.

\subsection{Restricted Boltzmann Machine}
\label{sec:RBM_analytic}
A Restricted Boltzmann Machine (RBM) is a type of undirected Markov random field (MRF) with a two layer architecture.
An RBM consists of $m$ visible nodes $v_j$, $j \in \{1,\ldots,m\}$, collectively denoted by ${\bf v}$ and representing the observed input data, and $n$ hidden nodes $h_i$, $i \in \{1,\ldots,n\}$, collectively denoted by ${\bf h}$.
We consider binary variables, i.e. $v_j, h_i \in \{0,1\}$.
The energy of the joint state $\{\bf v, \bf h\}$ of the machine is as follows:
\begin{equation}
    E({\bf v},{\bf h};\theta) = - \sum_{i=1}^n \sum_{j=1}^m h_i w_{ij} v_j - \sum_{j=1}^m b_j v_j - \sum_{i=1}^n c_i h_i,
\end{equation}
and we collectively call $\theta = \{\bf w,b,c\}$ the model parameters.
The RBM is used to encode the joint conditional probability distribution of a state $\{\bf v, \bf h\}$ given a set of parameters $\theta$:
\begin{equation}\label{eq:RBM_probability_visible}
    p({\bf v,h}|\theta) = \frac{1}{\CZ(\theta)} e^{-E(\bf{v,h};\theta)},
\end{equation}
where the partition function $\CZ(\theta)$ normalises the probability distribution.
Marginalising over the binary hidden variables $h_i$ yields the probability distribution of the variables in the visible layer \cite{FISCHER201425}:
\begin{equation}
    p({\bf v}|\theta) = \frac{1}{\CZ(\theta)} \prod_{j=1}^m \left( e^{b_j v_j} \right) \prod_{i=1}^n \left(1 + e^{c_i + \sum_{j=1}^m w_{ij} v_j} \right).
\end{equation}
By equating the RBM energy function to the $2$-dimensional Ising energy function, the expression
\begin{equation}
    J_{j_1,j_2} = \frac{1}{8} \ln \prod_{i = 1}^n \frac{(1+e^{c_i + w_{ij_1} + w_{ij_2}})(1+e^{c_i})}{(1+e^{c_i+w_{ij_1}})(1+e^{c_i+w_{ij_2}})}
\end{equation}
is obtained in~\cite{PhysRevB.100.064304}.
This expresses the Ising coupling $J_{j_1,j_2}$ in terms of the model parameters of the RBM.
The proof uses an asymptotic expansion and a resummation.
Computing the non-parametric $2$-point interaction, as in Eq.~\ref{eq:interaction_mult_2}, of the RBM readily yields the above formula:
\begin{equation}
    \frac{1}{8} \ln \bigl(I^m_{j_1,j_2}\bigr) = J_{j_1,j_2},
\end{equation}
where $I^m_{j_1,j_2}$ is computed from equation Eq.~\ref{eq:RBM_probability_visible}.
Indeed, this follows from Eq.~\ref{eq:interaction_mult_2} by a direct computation, since
\small
\begin{equation*}
\begin{split}
    I^m_{j_1,j_2}= 
    &\frac{p(v_{j_1 j_2} = (1,1),\underline{v} = 0)}{p(v_{j_1 j_2} = (1,0),\underline{v} = 0)} \frac{p(v_{j_1 j_2} = (0,0),\underline{v} = 0)}{p(v_{j_1 j_2} = (0,1),\underline{v} = 0)} \\
    = &\prod_{i = 1}^n \frac{(1+e^{c_i + w_{ij_1} + w_{ij_2}})(1+e^{c_i})}{(1+e^{c_i+w_{ij_1}})(1+e^{c_i+w_{ij_2}})}.
\end{split}
\end{equation*}
\normalsize
Indeed, both the partition functions and the $b_j$ coefficients cancel out.
By the same argument, one immediately recovers the closed form expression for the $3$-point interaction between $v_{j_1},v_{j_2},v_{j_3}$ as derived in~\cite[Eq.~(66)]{PhysRevB.100.064304}, and the closed form expressions for all $n$-point interactions, without having to resolve to an asymptotic expansion and resummation as in \cite{PhysRevB.100.064304}.

\subsection{Numerical results for the Ising model and comparisons with the RBM}
\label{sec:Ising_RBM_numerical}
In this section, we generate $2$-dimensional Ising configurations at various values of temperature using {\tt Magneto} \cite{Magneto}, a fast parallel C++ Monte Carlo code available online.
We set $J_{ij} = 1/2T$ in Eq.~\ref{eq:H-ising}.
We then use the non-parametric multiplicative definition of interactions, Sec.~\ref{sec:mult_interaction}, to extract the couplings $J_{ij}$ directly from the data, \ie, we solve the inverse problem.
We demonstrate agreement with the ground truth and compare the performance of the estimation of interactions directly from the data with the estimates obtained via the RBM \cite{PhysRevB.100.064304}. 
Ising states generated by {\tt Magneto} consist of spins $\pm 1$.
Note that these are converted to 0, 1 as input to both the multiplicative interaction formulation and the RBM, as already discussed in Sec.~\ref{sec:Ising_model_analytic}. Before delving into the numerical analysis, our main results are summarised in the paragraph below. \\

In general, the non-parametric interaction converges to the true value in the infinite data limit as it is unbiased, whereas the RBM need not do so as the original data is almost surely not generated from an RBM distribution.
However, for finite sample sizes, the direct computation may become noisy and unstable without additional information, such as conditional independence amongst the variables. 
Take, for example, the case of the Ising configuration in different temperature regimes.
At low temperatures the system is highly coupled and symmetric with respect to configurations mostly containing spin zeros and those mostly containing spin ones. 
In this regime, there are enough samples to estimate conditional probabilities appearing in Eq.~\ref{eq:interaction_mult_2}.
On the other hand, it is harder to train an RBM in highly coupled systems, \eg, in \cite{PhysRevB.100.064304} more precise hyperparameter tuning and longer training was required.
This behaviour of the RBMs has been reported previously in the literature \cite{FISCHER201425} and is due to the machine remaining in local minima of the activation function.
To avoid this problem, the RBM needs to be trained using more advanced algorithms such as Parallel Tempering \cite{FISCHER201425} which allows the machine to exit potential local minima.
Of course, this in turn requires tuning of extra hyperparameters and results in longer training times.
For temperatures above the critical temperature, the system becomes weakly coupled and moves towards more randomly distributed zero and one spin configurations. In this scenario, conditioning on all but two variables in the system results in very low sample sizes and unstable estimates of the interactions unless the total sample size is very large. 
The RBM, on the other hand, captures the interactions well given a comparable sample size. If however, information about conditional independence amongst the variables in the system is used, the non-parametric estimates perform better than the RBM in terms of bias, variance and compute time. 
In what follows, we quantify the above statements explicitly.\\

Before we present numerical results, we note that excluding higher-order interaction terms from the outset necessarily results in biased or incorrect estimates of even the 2-point and self-couplings. To give a simple example, consider the following formula;
\begin{align}
	E 	&= E_0 + h_1v_1 + h_2v_2 + J_{12}v_1v_2 + J_{123}v_1v_2v_3 \\ \nonumber
		&= E_0 + h_1v_1 + h_2v_2 + \Bigl(J_{12}+J_{123}v_3 \Bigr)v_1v_2.
\end{align}
Thus, any parametric fit ignoring third order (and higher) interactions will incorrectly report $J_{12}+J_{123}\mathbb{E}(v_3)$ as the $2$-point interaction.
More disturbingly, in a situation where the ground truth satisfies $J_{12} = 0$ but $J_{123} \neq 0$, a truncated parametric fit will incorrectly produce the non-existent $2$-point interaction $J_{123}\mathbb{E}(v_3)$.
Our method avoids this problem entirely. \\

Using the TL universal estimator, Eq.~\ref{eq:interaction_mult_2} directly, it is possible to obtain an accurate estimate of the couplings at cold temperatures, \emph{without} conditioning on the Markovian parents or using translational invariance. Unlike Refs.~\cite{Nguyen2017,PhysRevLett.112.070603,PhysRevLett.108.090201,ravikumar2010,published_papers/7111360} no parametric assumptions, regularisation, truncation of higher-order interactions or other approximations are required. The results are shown in Fig.~\ref{fig:ising_two_point_noMarkovian_noTranslational}. 

\begin{figure}[!htb]
\begin{center}
    \minipage{0.50\textwidth}
	\includegraphics[width=\linewidth]{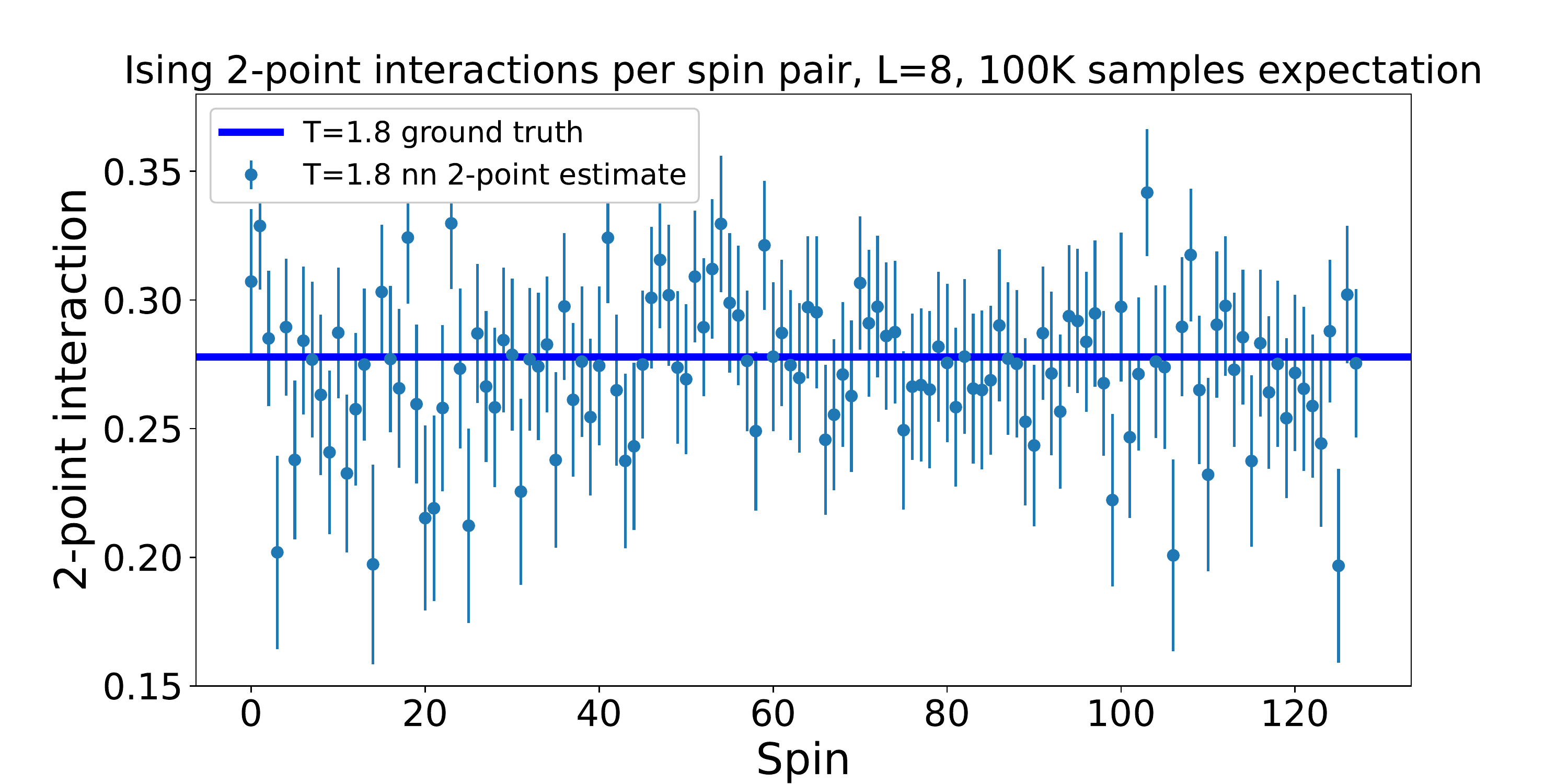}
	\endminipage\hfill
    \caption{All non-zero $2$-point interaction estimates using Eq.~\ref{eq:interaction_mult_2} directly, at temperature $T = 1.8$, in an Ising system of size $L^2 = 8^2$ with periodic boundary conditions. 100K samples are used for this estimation. No conditioning on the Markovian parents is performed, no translational invariance assumptions are made.}
    \label{fig:ising_two_point_noMarkovian_noTranslational}
\end{center}
\end{figure}

Above the critical temperature, however, TL estimation requires larger samples sizes. More explicitly, beyond $T=2.4$, the states become more random, and conditioning on all $v_i$'s to be zero, apart from the two spins whose interaction is to be estimated, results in low sample sizes and unstable predictions of the conditional probabilities appearing in Eq.~\ref{eq:interaction_mult_2}.
This is demonstrated by plotting the bin sizes used to estimate the probabilities at various values of temperature in Fig.~\ref{fig:ising_two_point_bin_vs_temp}. 

\begin{figure}[!htb]
\begin{center}
    \minipage{0.5\textwidth}
	\includegraphics[width=\linewidth]{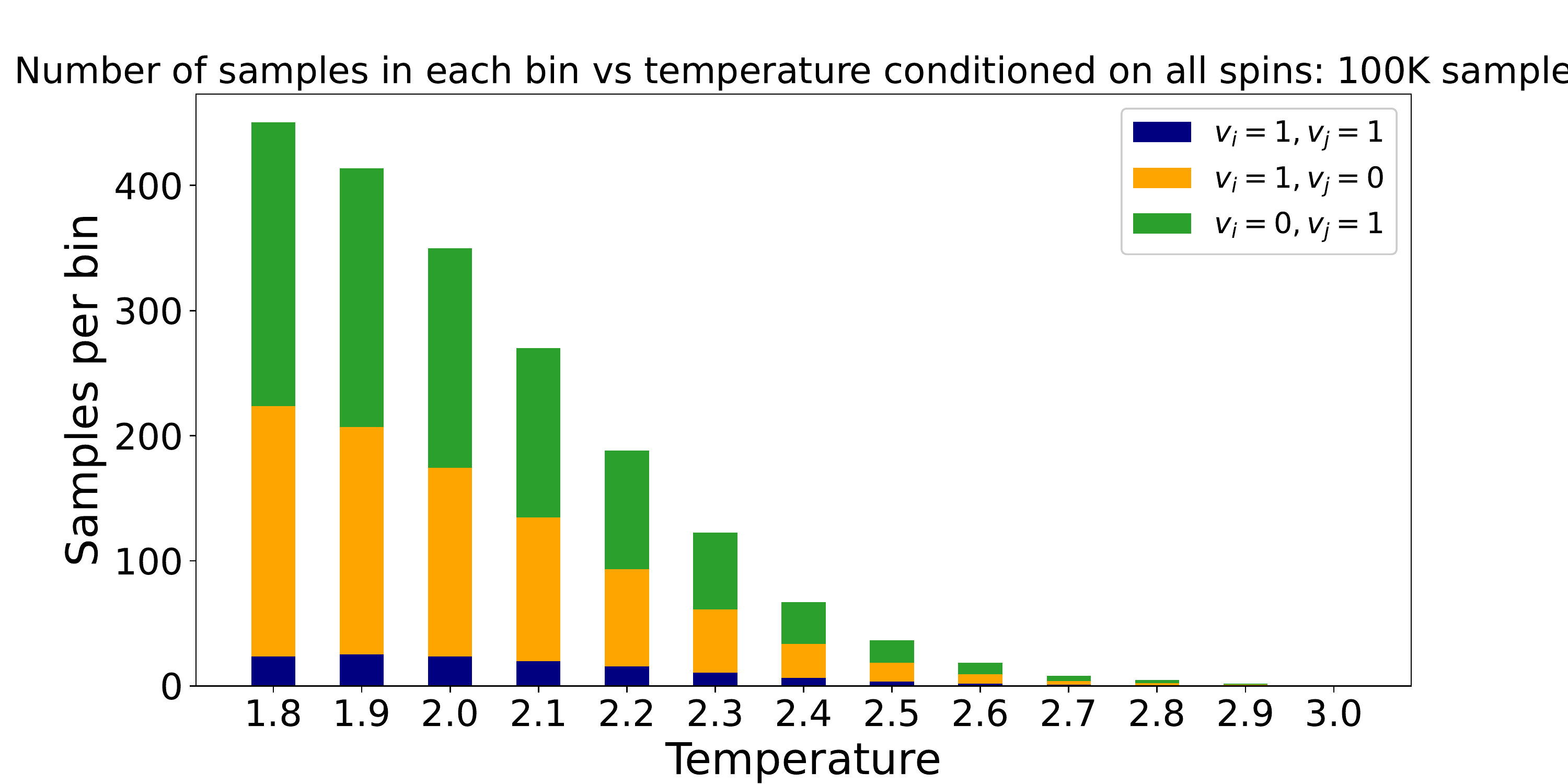}
	\endminipage\hfill
	\minipage{0.5\textwidth}
	\includegraphics[width=\linewidth]{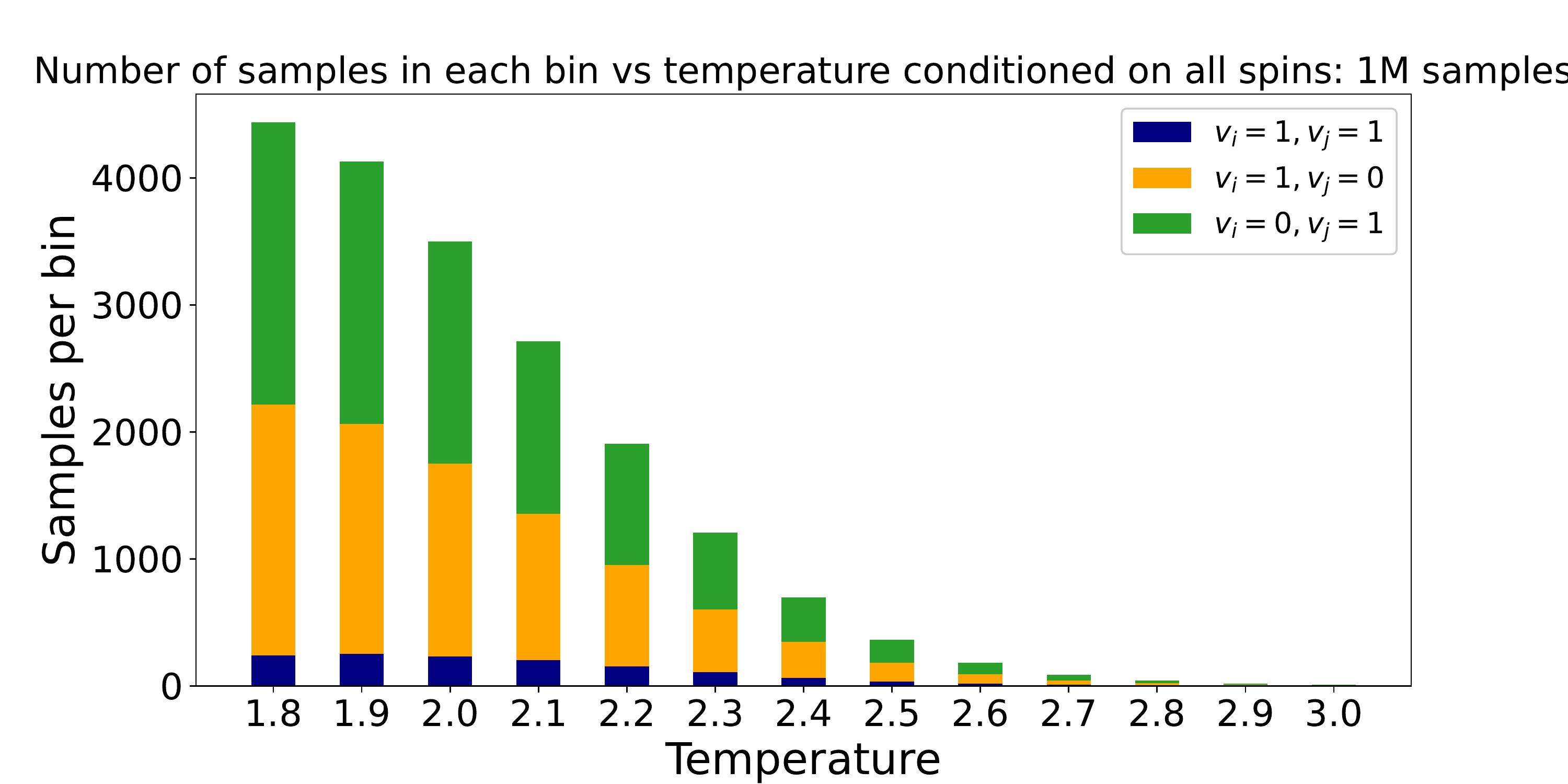}
	\endminipage\hfill
    \caption{Average sample sizes for conditional probabilities entering the computation of the 2-point interaction for the nearest neighbour pairs in an $L^2=8^2$ lattice. These values are obtained by conditioning on all other spins. The bin $v_i=v_j=0$ is left out as it has the largest size as compared to the other three. The top plot is from 100K samples, and the bottom is from 1M samples. Notice that each of the bin sizes increases 10-fold as we go from 100K to 1M samples, as expected. Observing the 100K plot, it is clear that above $T=2.6$, there are not enough samples in the $v_i=v_j=1$ bin to yield reliable estimates of the interactions, with $T=2.6$ containing approximately 9 samples on average. With 1M total samples, one can obtain estimates for $T=2.7$, which on average contain 10 samples in the $v_i=v_j=1$ bin respectively. Beyond this temperature, one has to again increase the sample size to 2M or more. }
    \label{fig:ising_two_point_bin_vs_temp}
\end{center}
\end{figure}

Note that, as mentioned earlier, the non-parametric approach of estimating coupling from the data is an unbiased estimator and only limited by the amount of data. Therefore, larger samples sizes are required, if one wishes to make no physical approximation or further assumptions about, \eg, conditional independence amongst the variables. Fig.~\ref{fig:ising_two_point_interaction_vs_temp} indicates this requirement: Above the critical temperatures, the sample sizes need to be increased from 100K to 1M and 10M, at very hot temperatures, in order to estimate the couplings. As expected, in Fig.~\ref{fig:ising_two_point_interaction_vs_temp} the estimates converge to the theoretical ground truth when the samples sizes are sufficiently increased. Note that translational invariance is not a requirement and is merely used as a summary to illustrate convergence of the non-zero couplings to the correct ground truth value.

\begin{figure}[!htb]
\begin{center}
    \minipage{0.50\textwidth}
	\includegraphics[width=\linewidth]{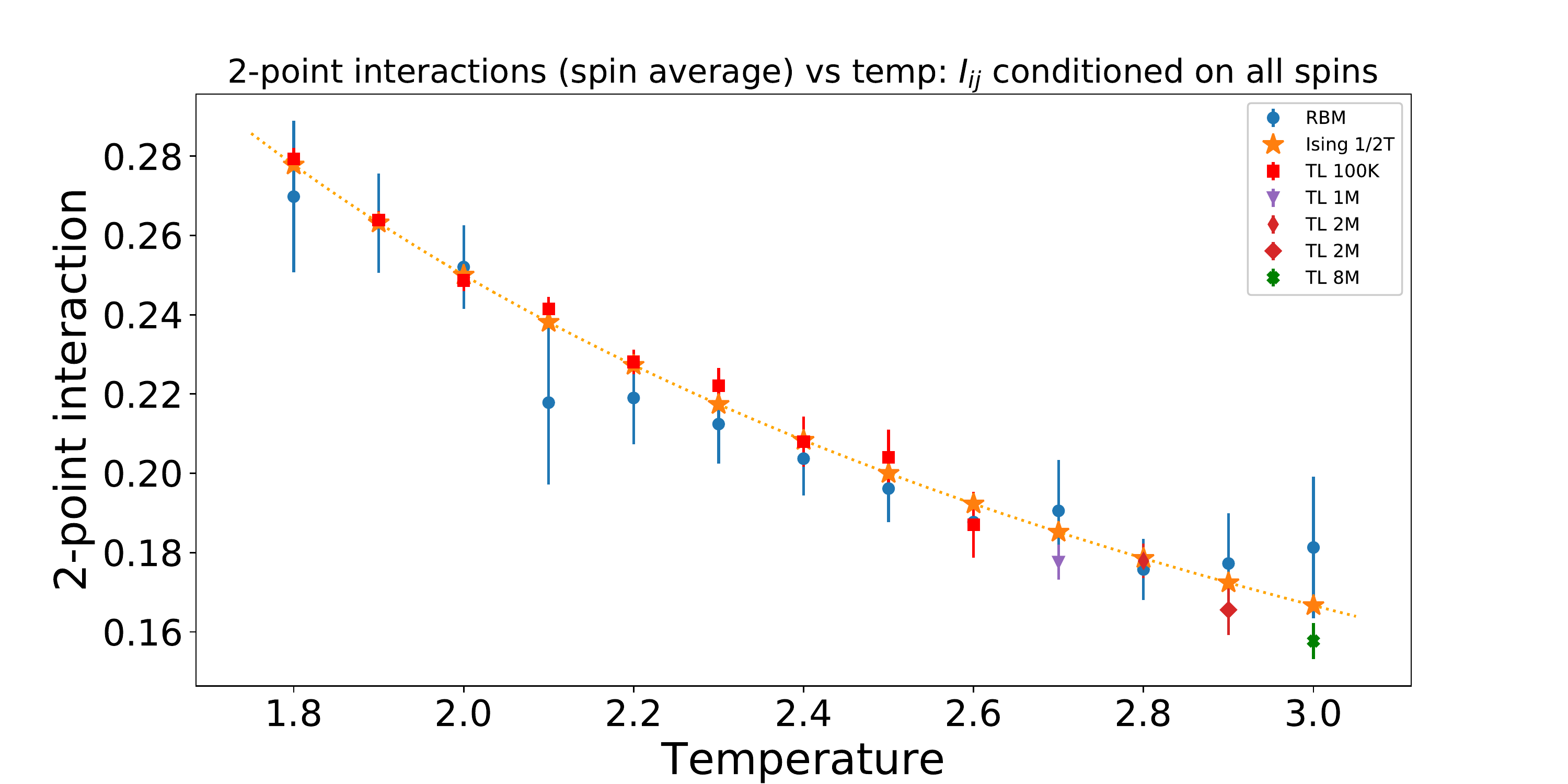}
	\endminipage\hfill
    \caption{A comparison of estimates of the $2$-point interaction amongst nearest neighbour spins as the temperature varies, in an Ising system of size $L^2 = 8^2$ with periodic boundary conditions, averaged over all 128 pairs of nearest neighbours for summary illustration.
    Each point represents a bootstrap average with error bar given by the bootstrap error. For $T\leq2.6$, 100K samples are enough to estimate the nearest neighbour interactions. For $T>2.6$ substantially more samples are required for stable estimates of the interactions. At $T=3.0$, 8M samples are required for a stable estimate.}
    \label{fig:ising_two_point_interaction_vs_temp}
\end{center}
\end{figure}


We now demonstrate improvements in the estimates of interactions at all values of temperatures, by using information on conditional independence  amongst the spins.
This allows for a substantial reduction in the sample sizes required, especially at high temperature.
As discussed earlier in Sec.~\ref{sec:conditional_independence} and will be further explained in Sec.~\ref{sec:cond-indept-numerical}, to obtain correct estimates of interaction amongst spins of interest, it is sufficient to condition on their parents, \ie, nearest neighbour spins, as opposed to all other spins in the rest of the lattice.
For interactions between pairs of nearest neighbour spins, we condition on their 6 nearest neighbours, while for interactions between pairs of non-nearest neighbour spins we condition on their 4+4 nearest neighbour spins. \\

The \emph{individual} per spin pair results, \emph{without} using translational averaging, for $T=1.8, 2.2, 3.0$ are shown in Fig.~\ref{fig:L8_T18_T22_T30_2pt_conditioned_PerSpin_expectation_100K}. Individual vanishing per spin triplet and quadruplet 3- and 4-point interactions are presented in App.~\ref{app:spin_pair_exp_L8_T18_T22_T30}, Fig.~\ref{fig:L8_T18_per_spin_3pt_4pt_cond_200K} with $T=1.8$ as an example. 
Fig.~\ref{fig:cond_ising_two_point_bin_vs_temp} indicates an increase in the smallest bin size, \ie, $v_i=v_j=1$, at all temperatures. 
This results in more precise estimates for the couplings, presented in Fig.~\ref{fig:cond_ising_two_point_interaction_vs_temp_100K}\footnote{All run times are measured on a MacBook Pro (2018) machine, 6-Core Intel i9 with 16GB memory.}, by using translational invariance. Again, note that translational invariance used in Fig.~\ref{fig:cond_ising_two_point_interaction_vs_temp_100K} is not a requirement and is merely used as a summary for comparison with the RBM results in \cite{PhysRevB.100.064304}.

\begin{figure}[!htb]
\begin{center}
    \minipage{0.50\textwidth}
	\includegraphics[width=\linewidth]{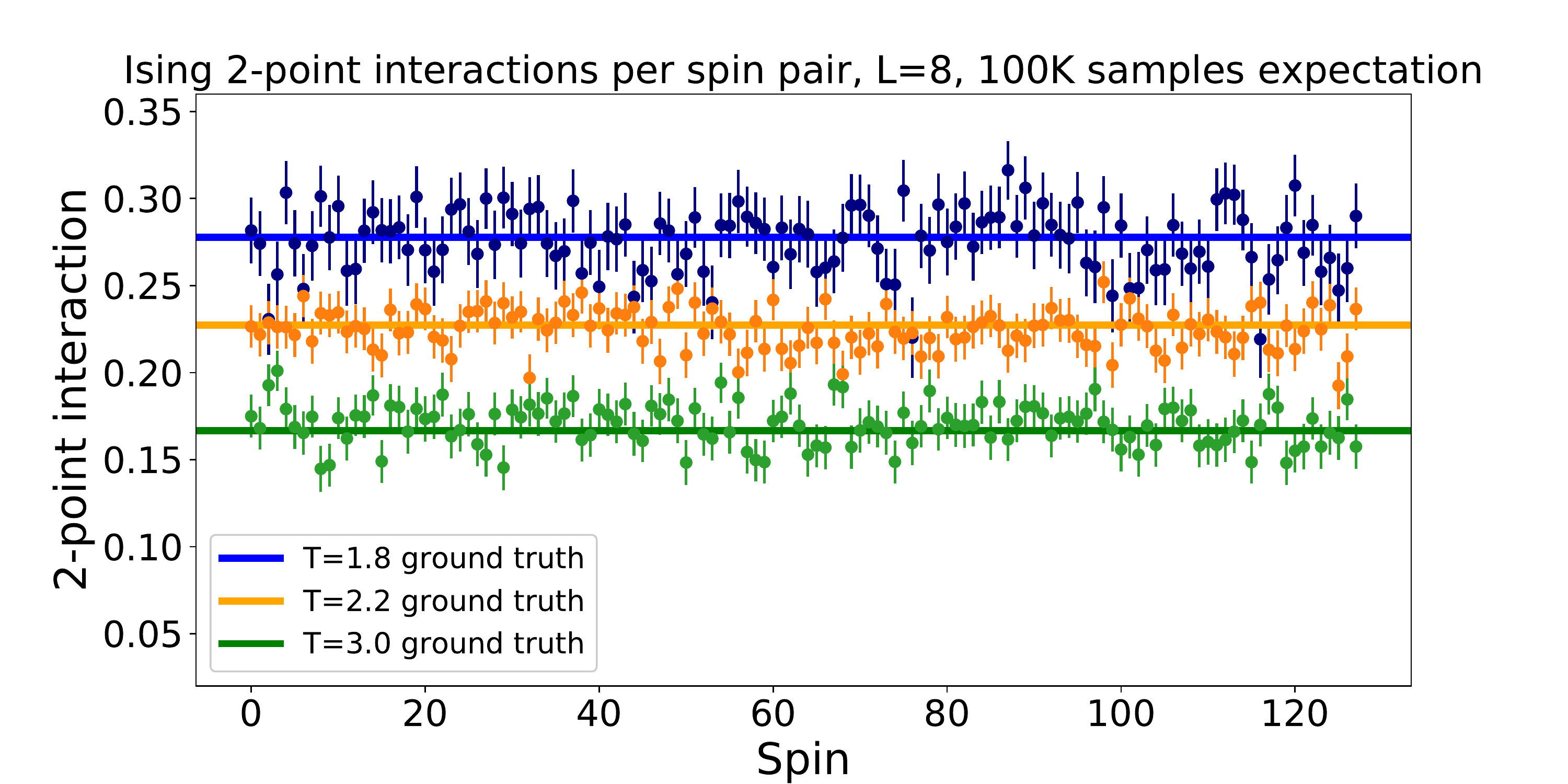}
	\endminipage\hfill
    \caption{Conditioning on the nearest neighbours (as prior information) to estimate $I_{ij}^m$ substantially improves the estimates. 100K samples for estimations at $T=1.8, 2.2, 3.0$, $L^2=8^2$.}
    \label{fig:L8_T18_T22_T30_2pt_conditioned_PerSpin_expectation_100K}
\end{center}
\end{figure}

\begin{figure}[!htb]
\begin{center}
    \minipage{0.45\textwidth}
	\includegraphics[width=\linewidth]{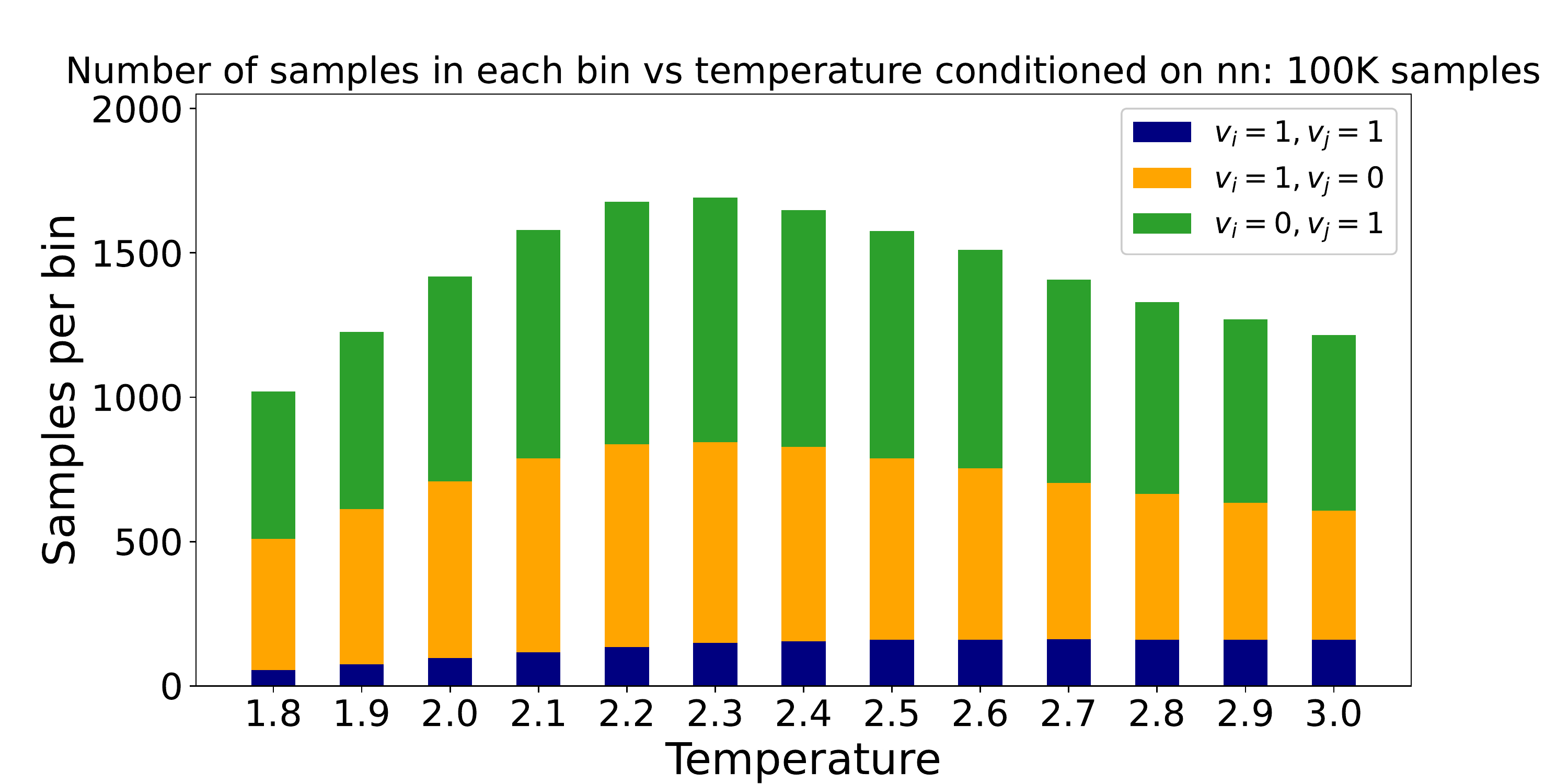}
	\endminipage\hfill
    \caption{Average sample sizes for conditional probabilities entering the computation of the 2-point interaction for the nearest neighbour pairs in an $L^2=8^2$ lattice.  These values are obtained by conditioning on the nearest neighbour spins only. The bin $v_i=v_j=0$ is left out as it has the largest size as compared to the other three. There are enough samples in each bin to yield stable estimates of each conditional probability/expectation value.}
    \label{fig:cond_ising_two_point_bin_vs_temp}
\end{center}
\end{figure}

\begin{figure}[!htb]
\begin{center}
    \minipage{0.50\textwidth}
	\includegraphics[width=\linewidth]{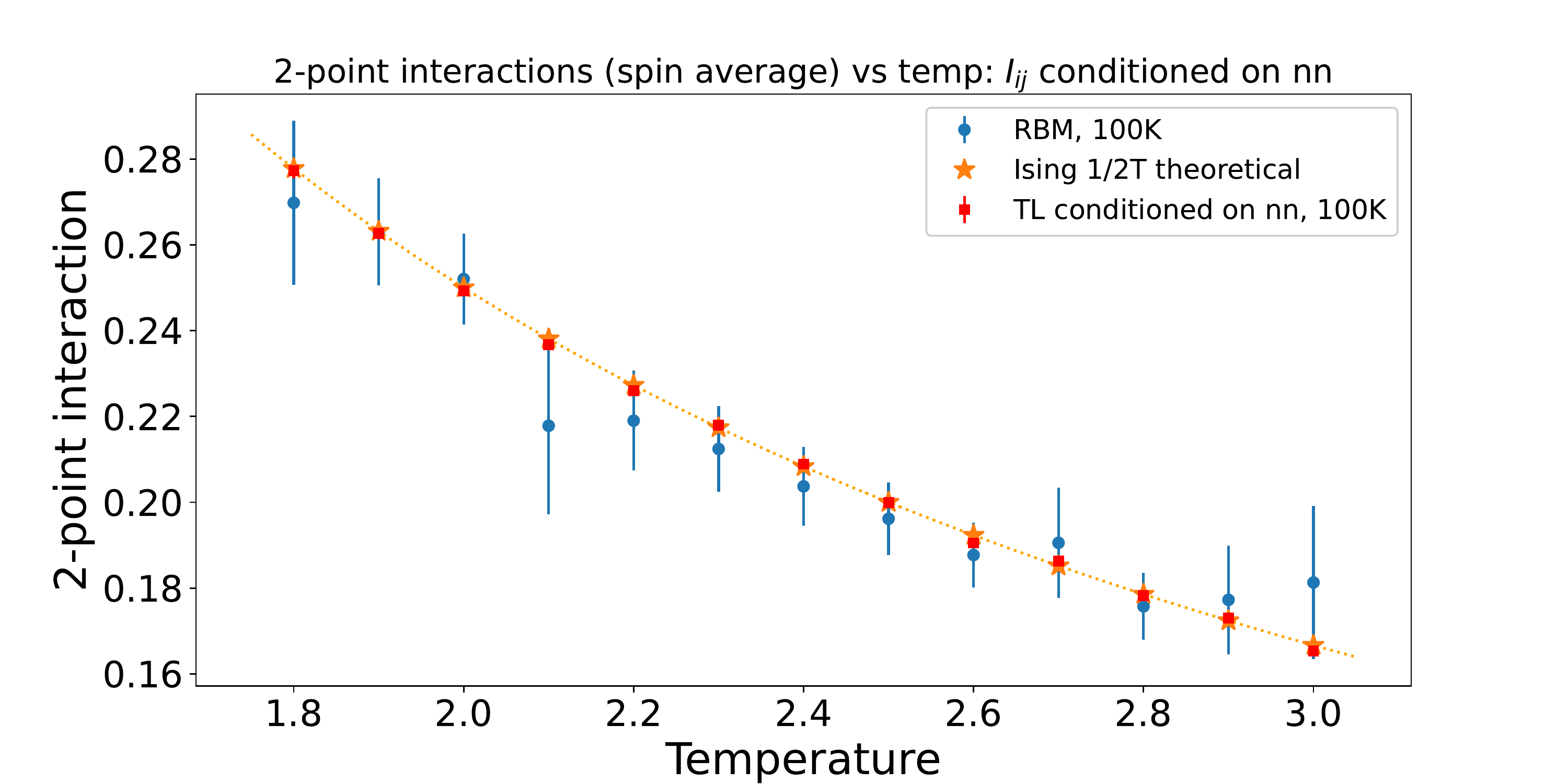}
	\endminipage\hfill
    \caption{Conditioning on the nearest neighbours (as prior information) to estimate $I_{ij}^m$ substantially improves the estimates as compared to Fig.~\ref{fig:ising_two_point_interaction_vs_temp}. 100K samples are used for both training the RBM and estimating the interactions directly using TL. See Fig.~\ref{fig:L8_T23_per_spin_indept_and_expect_10K} for the successful estimation of interactions and their uncertainty using TL, with 10K samples.
    The run time for each estimation using TL is at the order of a few seconds.}
    \label{fig:cond_ising_two_point_interaction_vs_temp_100K}
\end{center}
\end{figure}


Fig.~\ref{fig:L8_T30_per_spin_indept_and_expect} (upper), indicates individual spin pair couplings $I_{ij}^m$, estimated using Eq.~\ref{eq:interaction_mult_2_expectation} over 100K samples as compared to 20K (lower) for both nearest and non-nearest neighbour spins.
The latter results are more noisy as expected.
As compared to the 100K, 20K total samples approximately had 2\% of spin pairs with no samples in the $p_{11}$ bin.
This is due to the fact that it is unlikely that 2 spins having value one, whilst their 8 nearest neighbours all have spin zero.
This scenario is observed more often at colder temperatures, see Figs.~\ref{fig:L8_T18_per_spin_indept_and_expect},~\ref{fig:L8_T22_per_spin_indept_and_expect} in App.~\ref{app:spin_pair_exp_L8_T18_T22_T30}.
Note that the non-parametric method of estimation, combined with information on conditional independence amongst the variables, has nevertheless enabled us to obtain accurate estimates of the interactions relying on a smaller number of samples in total.
For example, using this method, there is enough power to estimate all the nearest neighbour spin pair interactions and approximately 83\% of the non-nearest neighbour spin pair interactions for temperature $T=2.2$ using 10K sample only, as demonstrated in Fig.~\ref{fig:L8_T30_per_spin_indept_and_expect}.
In contrast, \eg, the RBM does not train well on Ising data with 10K samples, see~\cite[Fig.~31]{PhysRevB.100.064304}, and therefore is not able to provide accurate estimates of the interactions at low sample sizes.
\\

\begin{figure}[!htb]
\begin{center}
    \minipage{0.45\textwidth}
	\includegraphics[width=\linewidth]{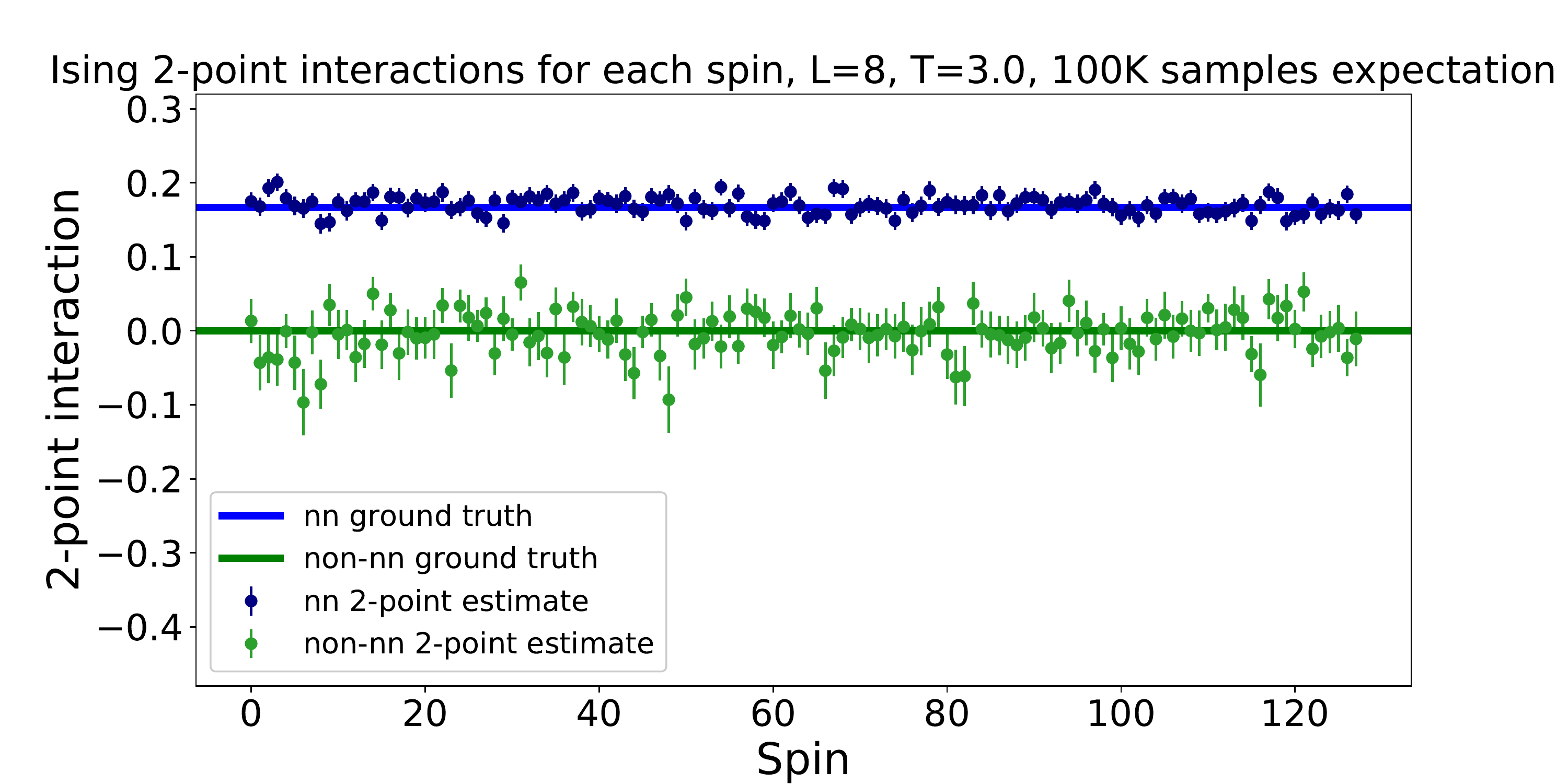}
	\endminipage\hfill
	\minipage{0.45\textwidth}
	\includegraphics[width=\linewidth]{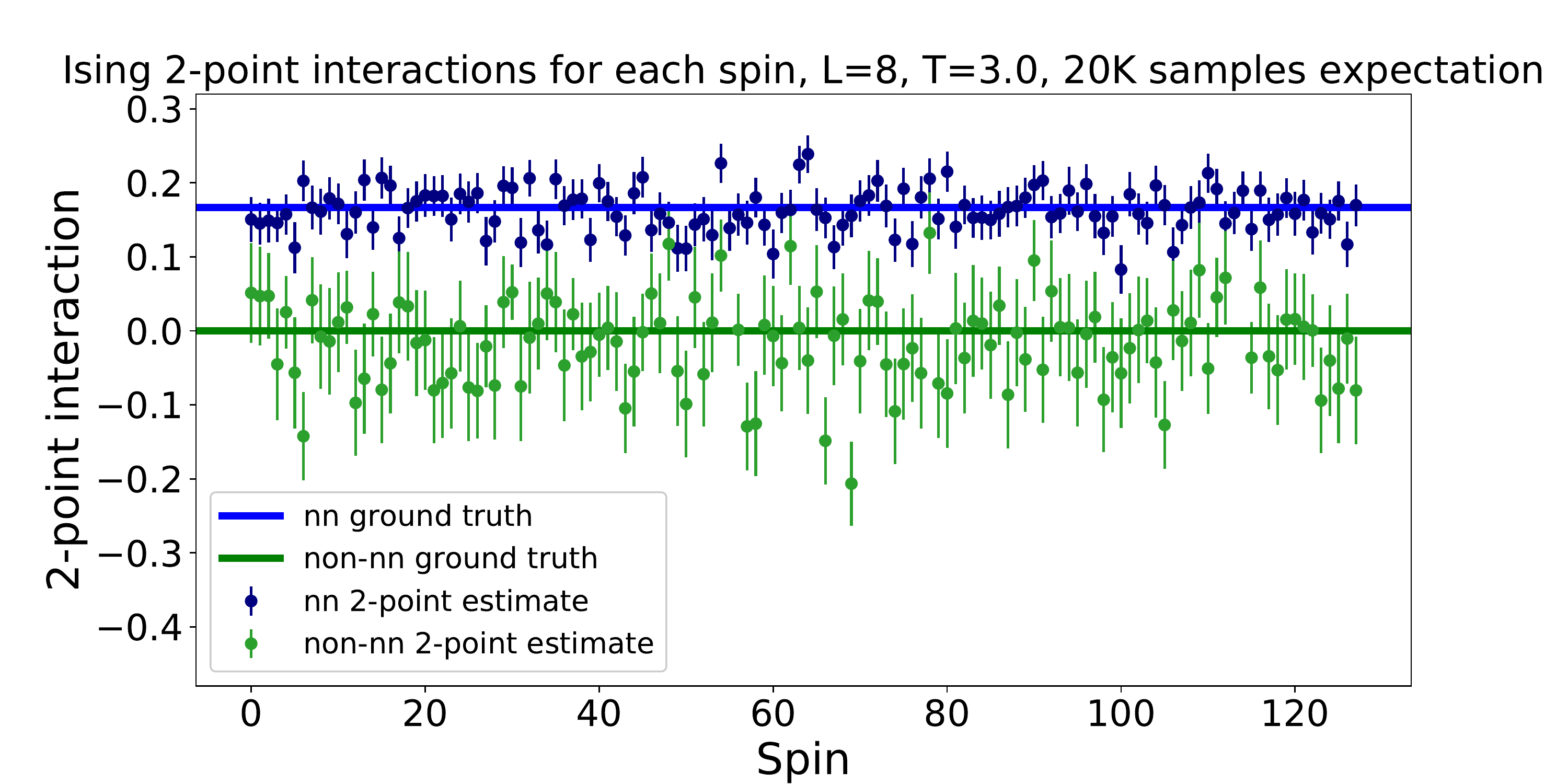}
	\endminipage\hfill
    \caption{$L^2=8^2$, $T=3.0$, with conditioning on the nearest neighbours to estimate $I_{ij}^m$ for both nearest and non-nearest neighbour spin pairs. In order to reduce clutter, the same number of non-nearest as nearest neighbour couplings are shown (128). No translational invariance is used. Top: The results are computed over a total of 100K samples, using Eq.~\ref{eq:interaction_mult_2_expectation} and statistical bootstrap, as compared to bottom: The results are computed over a total of 20K samples. For the latter, approximately 2\% of spins had no samples in the $p_{11}$ bin. This is because it is unlikely that 2 spins have value one, whilst their 8 nearest neighbours all have spin value zero, as the total sample size reduces. }
    \label{fig:L8_T30_per_spin_indept_and_expect}
\end{center}
\end{figure}

Finally, we present the results of estimating the 2-point interactions per individual spin pair, for a $L^2=32^2$ lattice at temperature $T=3.0$, in Fig.~\ref{fig:L32_T30_per_spin_indept_and_expect}.
As expected, the results for the case of 20K total samples is more noisy, however, the signal is clearly distinguishable from background with most of the nearest-neighbour interactions being more than $3\sigma$ away from the zero line.
We note that training an RBM on a lattice of this size, if possible, is expected to be computationally expensive and not possible for low numbers of sample sizes.
This is due to the fact that a $L^2=32^2$ lattice contains 1024 spins which would correspond to an RBM with $1024\times1024 \text{ weights } +2\times1024 \text{ bias terms }$, \ie, 1,050,624 parameters to be determined, when the number of hidden nodes (1024) is set equal to the visible nodes (1024).
The run time of the non-parametric approach is of the order of minutes on a local computer.  

\begin{figure}[!htb]
\begin{center}
    \minipage{0.45\textwidth}
	\includegraphics[width=\linewidth]{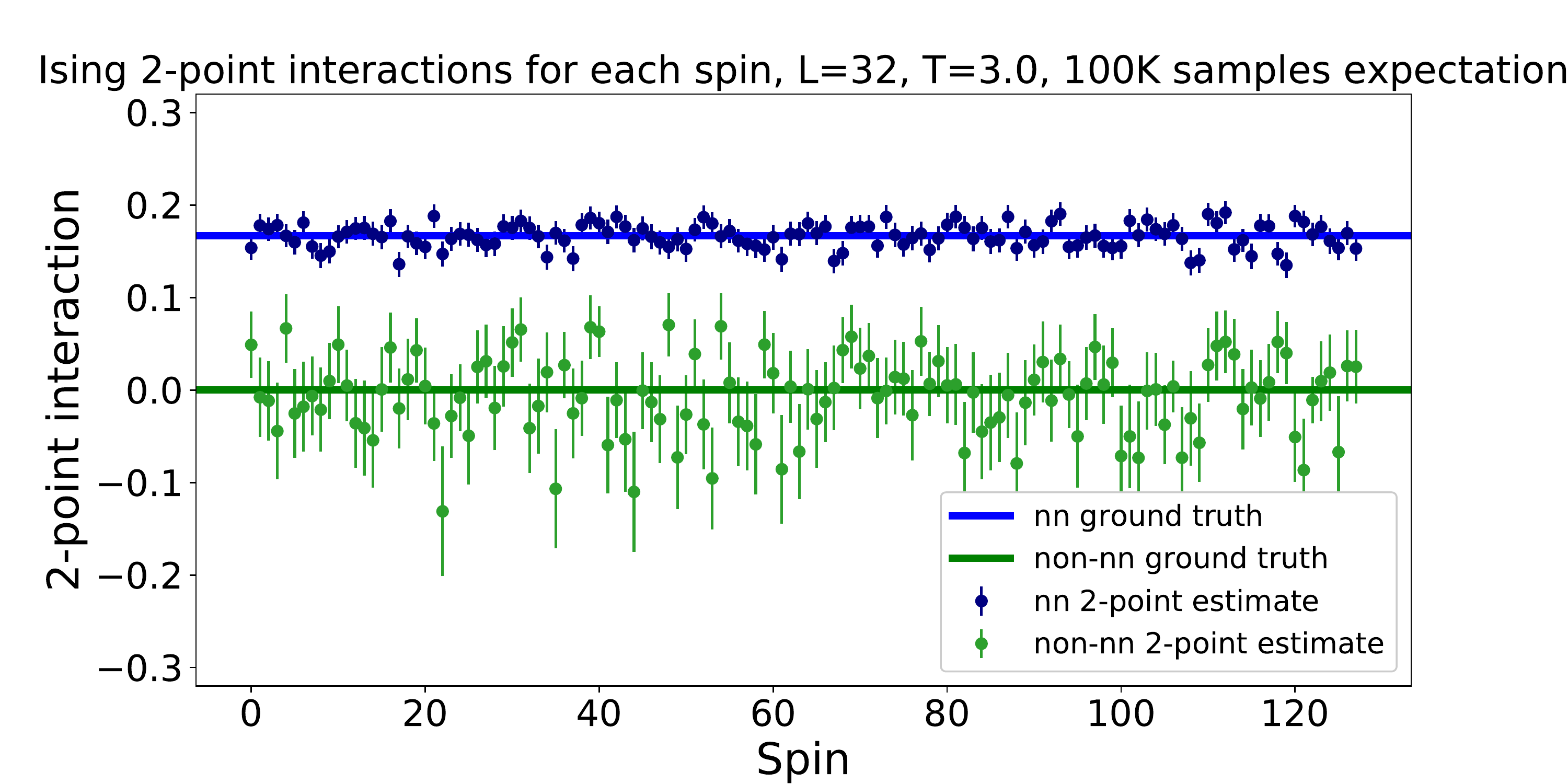}
	\endminipage
	\vspace{0.2cm}
	\minipage{0.45\textwidth}
	\includegraphics[width=\linewidth]{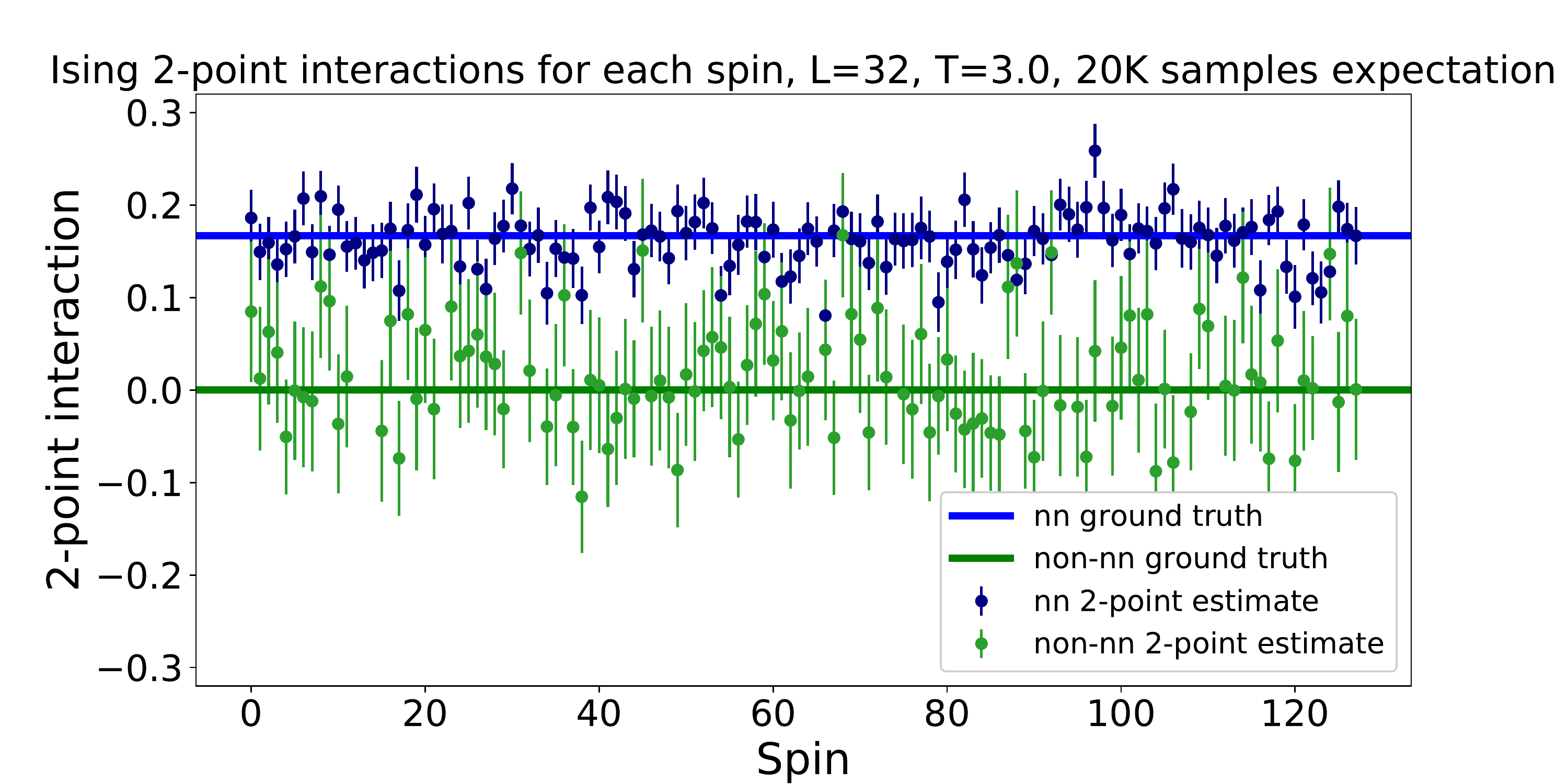}
	\endminipage\hfill
    \caption{$L^2=32^2$, $T=3.0$, with conditioning on the nearest neighbours to estimate $I_{ij}^m$ for both nearest and non-nearest neighbour spin pairs. In order to reduce clutter $2\times128$ interactions are shown. No translational invariance is used. Top: The results are computed over a total of 100K samples, using Eq.~\ref{eq:interaction_mult_2_expectation} and statistical bootstrap, as compared to bottom: The results are computed over a total of 20K samples. For the latter, there is sufficient power to accurately estimate all the nearest neighbour interactions, as well as approximately 98\% of non-nearest neighbour interactions.}
    \label{fig:L32_T30_per_spin_indept_and_expect}
\end{center}
\end{figure}

\subsection{Numerical evidence for conditional independence}
\label{sec:cond-indept-numerical}
In the first step of the Targeted Learning road map stated in Sec.~\ref{sec:targeted_learning}, we select the set of probability distributions $p$ that are compatible with \emph{a priori} knowledge regarding the data and how it is generated.
For example, in the case of the Ising model, this knowledge could include information regarding the nearest neighbour structure, namely, that by conditioning on the parental spins of two spins, the two spins become independent of each other and the rest of the spins if they are non-nearest neighbours. If they are nearest neighbours, then they only become independent of the rest of the spins but not of each other. 
Then using the Markovian property and the Hammersley--Clifford theorem of Sec.~\ref{sec:conditional_independence}, to obtain the interactions between pairs of spins, it suffices to condition on their nearest neighbours to be zero, rather than all the rest of the spins (see App.~\ref{app:Hammersley_Clifford} for a proof). 
This results in improved statistical estimates, as the number of samples that satisfy the latter condition will be significantly larger than the former.
The Markovian parent structure of nearest and non-nearest neighbours in the $2$-dimensional Ising model are presented in Fig.~\ref{fig:Markov_structure_2D_Ising}.

\begin{figure}[!htb]
\begin{center}
    \minipage{0.25\textwidth}
	\includegraphics[width=0.9\linewidth]{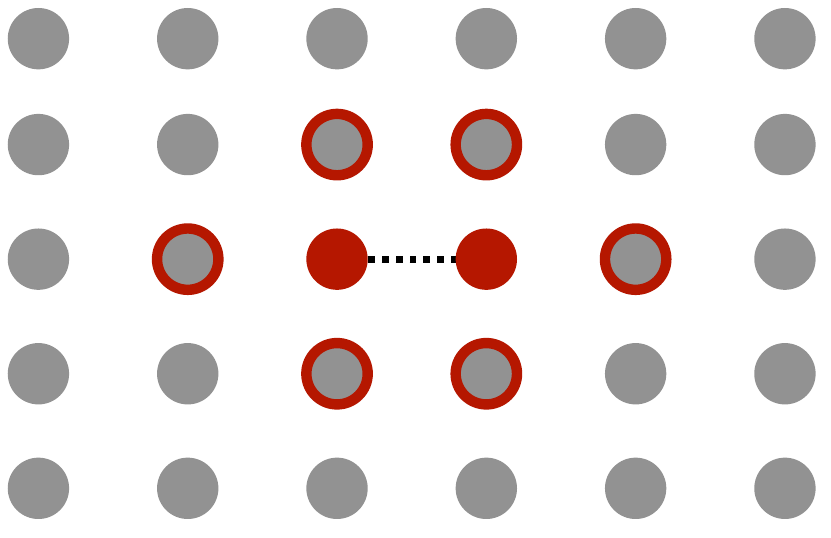}
	\endminipage 
	\vspace{0.5cm}
	\minipage{0.25\textwidth}
	\includegraphics[width=0.9\linewidth]{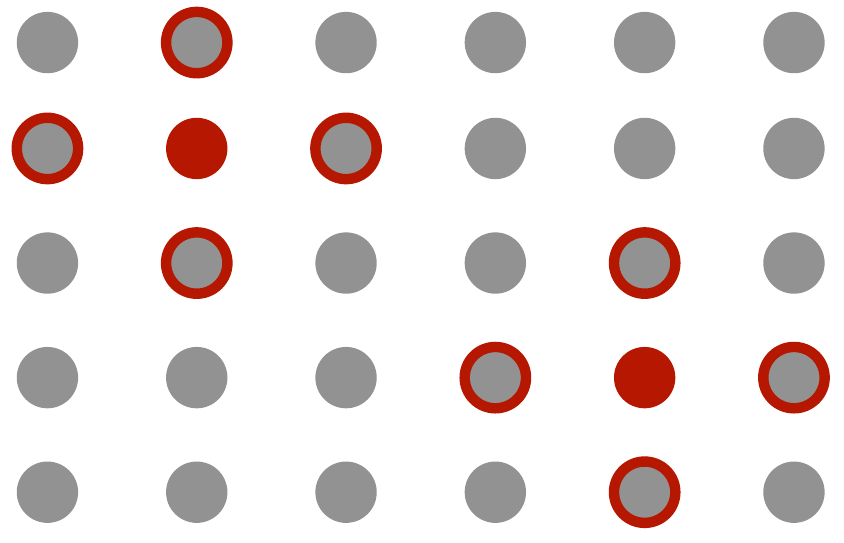}
	\endminipage\hfill
    \caption{Nearest neighbour structure in the $2$-dimensional Ising model. Parents of the pairs of interest required for conditional independence: the $6$ parents of a nearest neighbour pair (top), and the $8$ parents of a non-nearest neighbour pair (bottom).}
    \label{fig:Markov_structure_2D_Ising}
\end{center}
\end{figure}

If \emph{a priori} information on conditional independence is not known one can use non-parametric statistical testing to determine such independence criteria, in order to improve the estimates of interactions.
The $\chi$-squared test of independence can be used for the case of binary or categorical variables and, \eg, an information-theoretic independence criterion for continuous variables~\cite{MR3992389}.
Algorithms such as Peter--Clark can then be employed to automatically detect (conditional) independence using a given test in an efficient way \cite{pcReview2019}.
Discussion on the latter is beyond the scope of this work, and we only briefly present results on applying a $\chi$-squared test directly on Ising data as an example. 

We perform the $\chi$-squared test of independence on Ising configurations generated at the critical temperature which is approximately $T=2.3$.
The null hypothesis $H_0$ of $\chi$-squared is that the variables are \emph{independent} of each other. 
Given a particular threshold, if the computed $p$-values becomes less than the threshold, we reject the null hypothesis in favour of the alternative hypothesis $H_1$, \ie, that the variables in question are indeed dependent. 
For the $2$-dimensional Ising model at the critical temperature we expect the correlation length to diverge, and therefore to observe a large degree of dependence amongst all spins. 
Therefore, taking pairs of spins, while conditioning on no other spins in the system, we expect the $\chi$-squared test to result in small $p$-values, indicating dependence amongst the spins. 
Indeed, we observed $p\approx0$ for all pairs of spins in this case.  
If, on the other hand, we condition on all $8$ nearest neighbour spins of any non-nearest neighbour spin pair, we observed that most of the $p$-values are large, indicating independence as expected.
However, the test does result in less than 10\% of the non-nearest neighbour spin pairs having small $p$-values, namely less than the chosen threshold of $0.1$, see Fig.~\ref{fig:hist-chisq-nonnn-8parentscond}.
These are the result of a type I error, or false claim of dependence, which do not bias the estimation of the interactions but merely render the procedure more conservative than necessary, at the cost of larger variance.  

\begin{figure}[!htb]
\begin{center}
    \minipage{0.45\textwidth}
	\includegraphics[width=\linewidth]{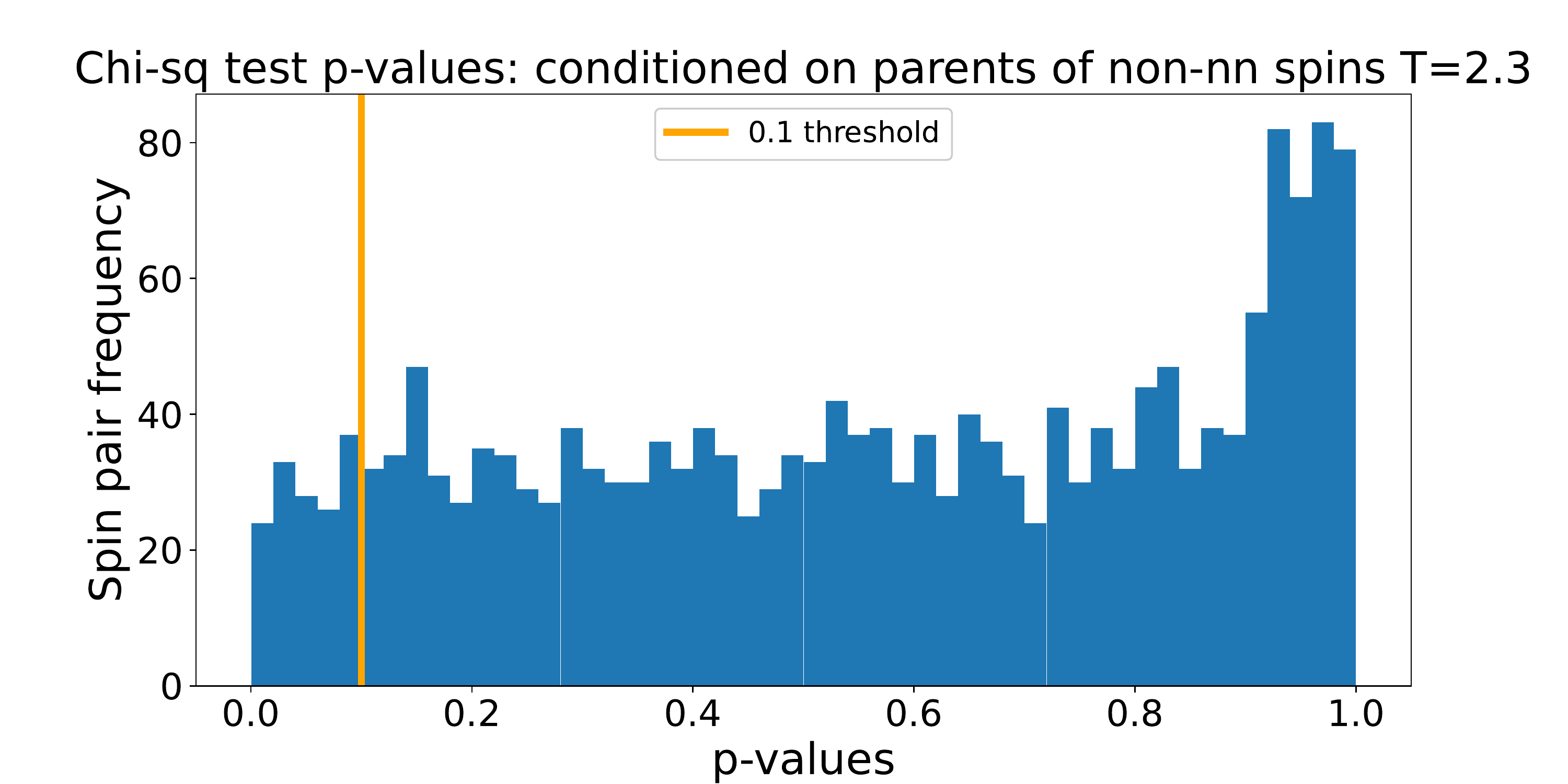}
	\endminipage\hfill
    \caption{Histogram of $\chi$-squared test $p$-values for non-nearest neighbour spins pairs, conditioned on all of the 8 parents, for the $T=2.3$ Ising model. We expect the null hypothesis of independence not to be rejected, \ie, high $p$-values. This is indeed observed with less than 10\% of the $p$-values being less than the chosen threshold 0.1. The $\chi$-squared test has incorrectly taken these as dependent, however, taking more spins into account when conditioning does not introduce any bias in the estimation of the interactions.  }
    \label{fig:hist-chisq-nonnn-8parentscond}
\end{center}
\end{figure}

Next, we observe what happens if we, wrongly, do not condition on all the parents of variables that $\chi$-squared otherwise declares as dependent.
As an example, conditioning on only $2$ of the total of $8$ nearest neighbours, the $\chi$-squared test declares all $p\approx0$. 
Estimating the interaction between non-nearest neighbour spin pairs, whilst conditioning on two parents only, results in highly biased estimates of the interactions, as expected, as indicated on the right hands side of Fig.~\ref{fig:chisq-ignore-condition-on-2parents-only}. 

\begin{figure}[!htb]
\begin{center}
    \minipage{0.45\textwidth}
	\includegraphics[width=\linewidth]{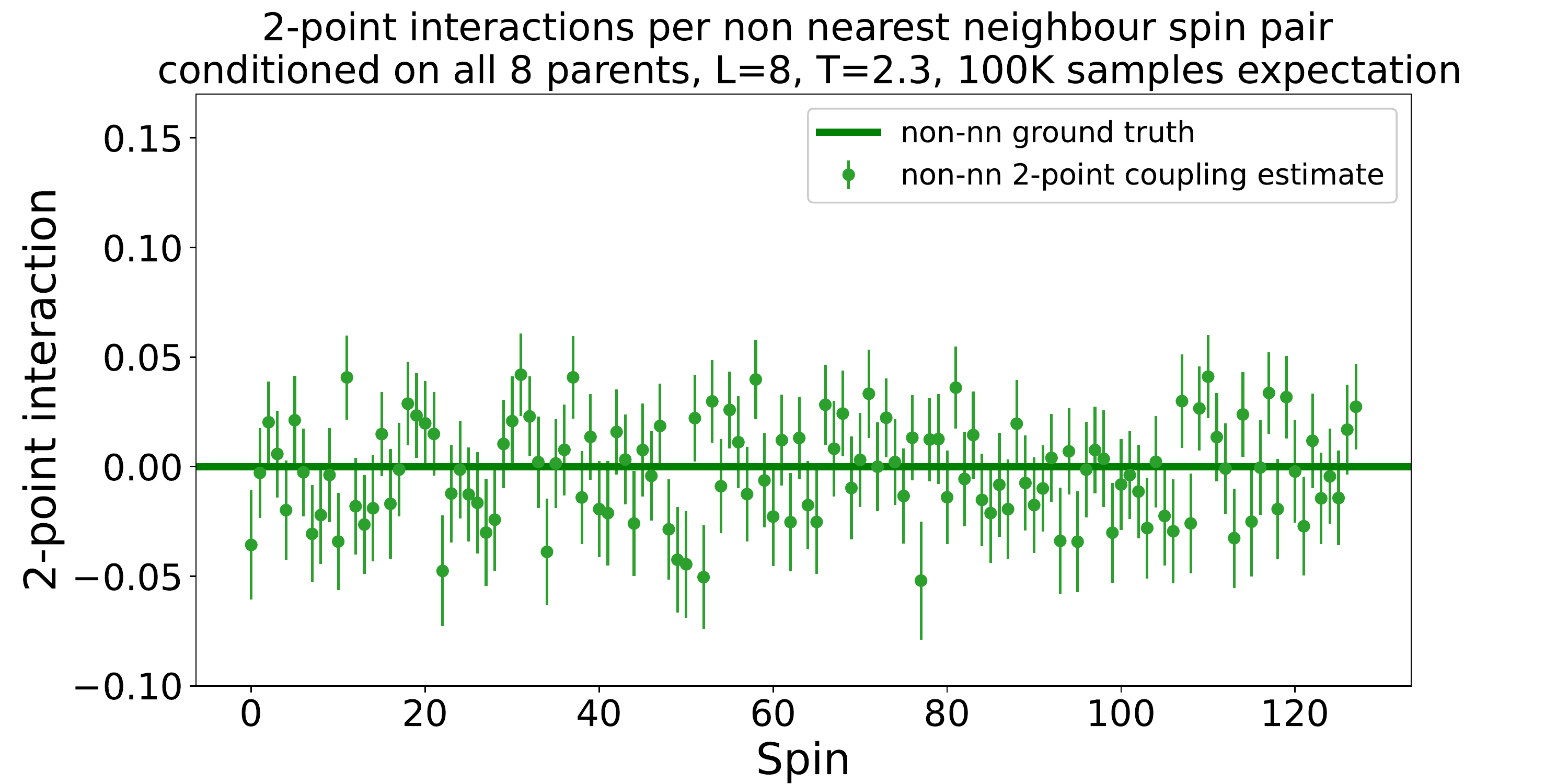}
	\endminipage
	\vspace{0.2cm}
	\minipage{0.45\textwidth}
	\includegraphics[width=\linewidth]{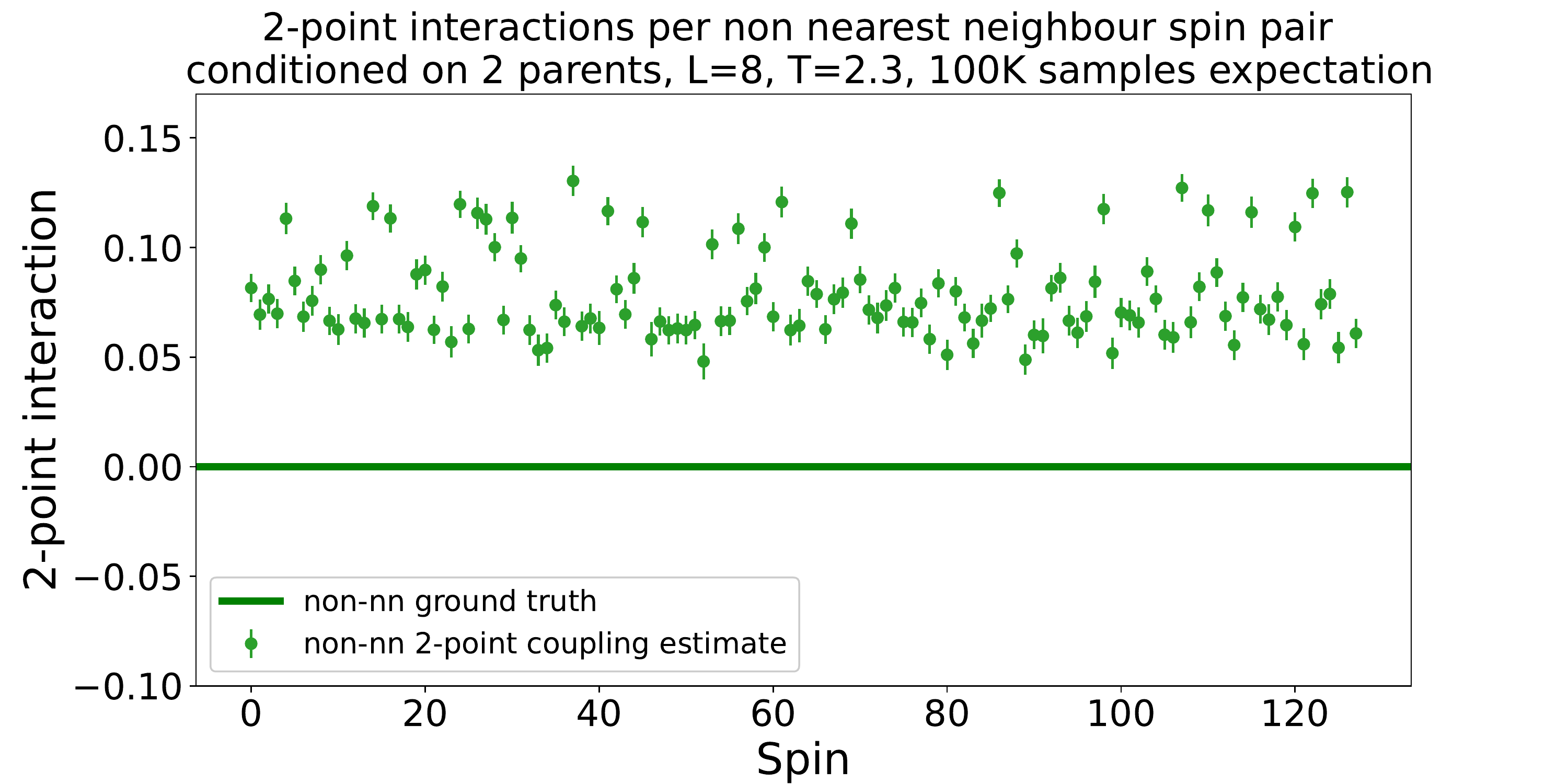}
	\endminipage\hfill
    \caption{Non-nearest neighbour 2-point interactions for Ising configurations near the critical temperature $T=2.3$, 100K samples.
    128 spin pairs are taken as representatives of all 1888 non-nearest neighbour spin pairs. Top: Conditioning on all 8 parents, estimation accurately recovers the ground truth. Bottom: Conditioning on only 2 parents, even though $\chi$-square has accurately detected dependence, results in biased estimates of the interactions. }
    \label{fig:chisq-ignore-condition-on-2parents-only}
\end{center}
\end{figure}

Finally, we condition on $4$ out of the $8$ nearest neighbours, for all the non-nearest neighbour spin pairs, with all $4$ blocking one of the spins from the rest of the system.
In this case the $\chi$-squared test seems to declare independence in most cases.
This is a type II error: failure to reject a false null hypothesis of independence.
We examine the resulting bias on the estimates for the associated $2$-point interactions in Fig.~\ref{fig:chisq-ignore-condition-on-4parents-only}: The level of statistical variation in the data is large enough to compensate for the bias introduced by not conditioning on all the Markovian parents.
In the tests that we have performed, we have observed these features both at cold and hot temperatures as well. 

\begin{figure}[!htb]
\begin{center}
    \minipage{0.45\textwidth}
	\includegraphics[width=\linewidth]{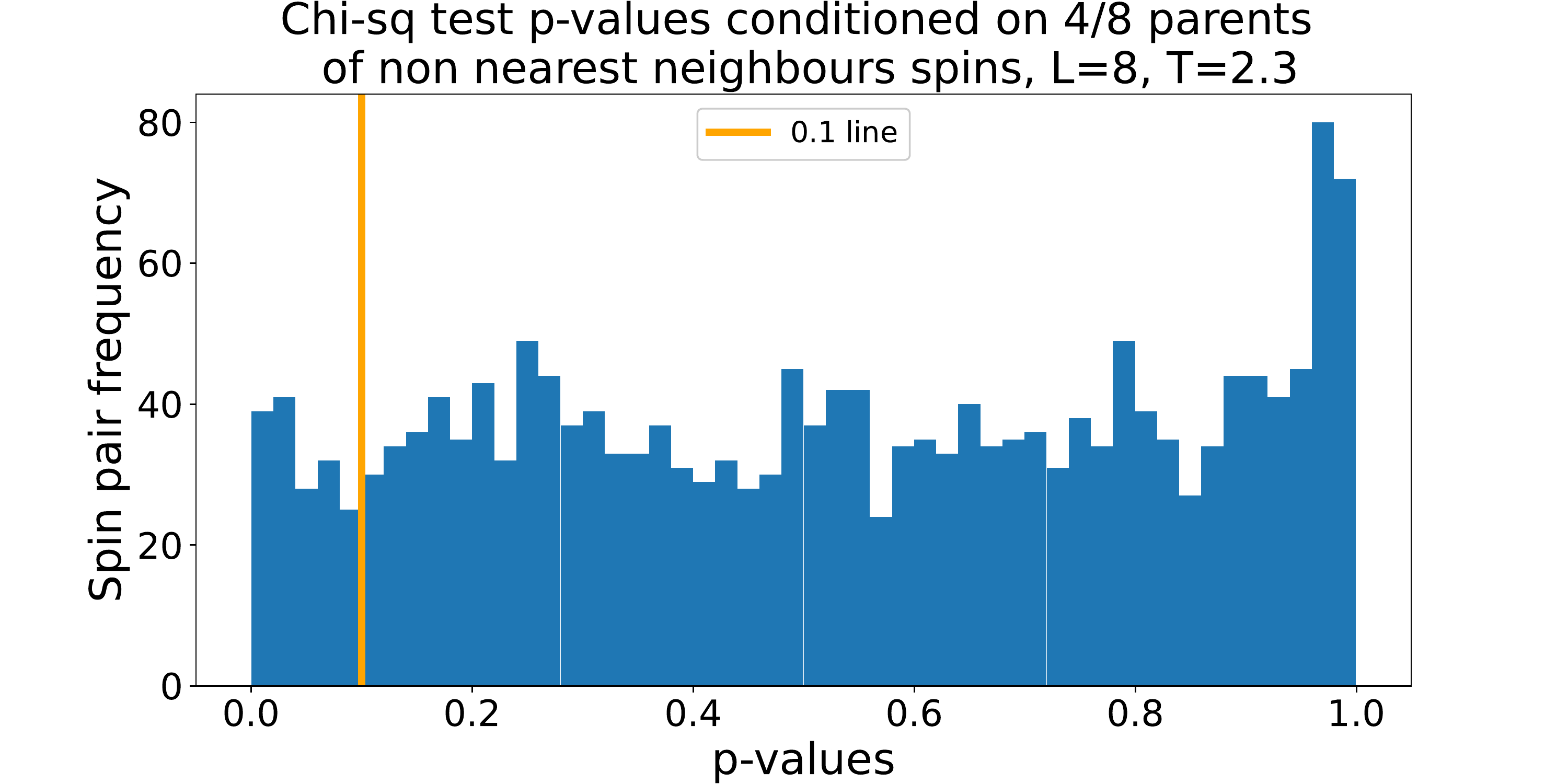}
	\endminipage
	\vspace{0.2cm}
	\minipage{0.45\textwidth}
	\includegraphics[width=\linewidth]{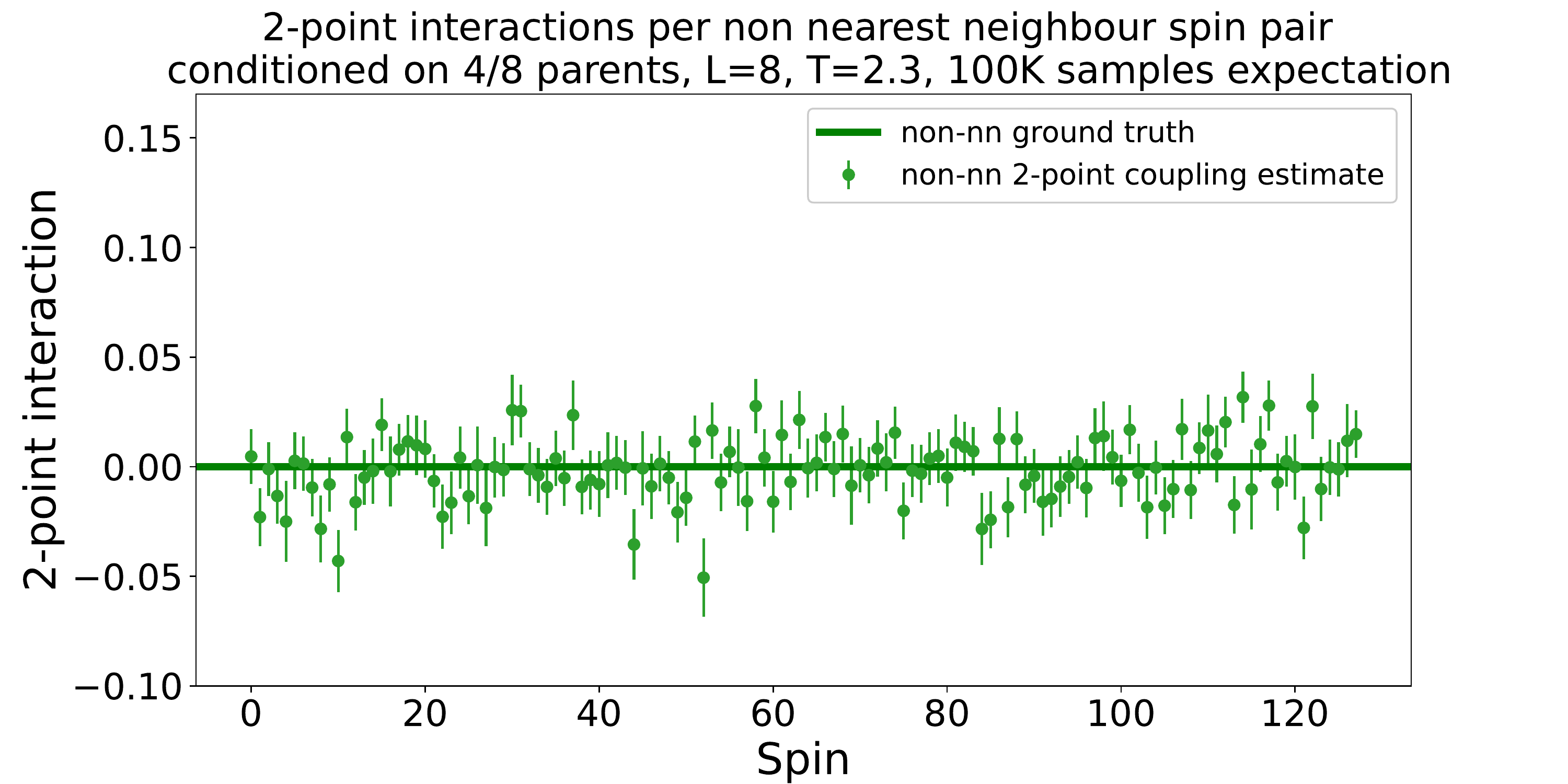}
	\endminipage\hfill
    \caption{Non-nearest neighbour 2-point interactions for Ising configurations near the critical temperature $T=2.3$, 100K samples. 128 spin pairs are taken as representatives of all 1888 non-nearest neighbour spin pairs. Top: Conditioning on 4 out of the total of 8 parents, the $\chi$-squared test is unable to detect dependence. Bottom: Numerical results indicate that when $\chi$-squared does not detect dependence in the data, conditioning on 4 out of the total of 8 parents does not introduce strong bias in estimating the interactions accurately. }
    \label{fig:chisq-ignore-condition-on-4parents-only}
\end{center}
\end{figure}

In summary, when \emph{a priori} knowledge regarding independence amongst variables is not available and has to be derived from the data, one can perform the non-parametric $\chi$-squared test for binary and categorical data.
If $\chi$-squared declares dependence amongst variables, we must ensure to condition on these when estimating the interactions.
If $\chi$-squared declares false independence, potentially due to the level of variance/noise in the data, it is likely to be the case that this missed degree of dependence is not so large as to bias the estimates of $n$-point interaction, again given the level of variance/noise in the data.

\section{Results III: A Hamiltonian with 1-, 2-, 3-, and 4-point interactions}\label{sec:4pt_Hamiltonian}
\subsection{Analytical formulation}
In this section, we consider an Ising-like Hamiltonian in the $\{-1,1\}$ basis with $4$-point couplings.
After transforming to the $\{0,1\}$ basis, this results in a Hamiltonian with non-zero self, $2$-point, $3$-point, and $4$-point couplings.
The setup is as follows.
Consider a $2$-dimensional square lattice of size $L^2$ with periodic boundary conditions, with a spin $\tilde{v}_i$ on each lattice point $i$ taking on values $\tilde{v}_i = \pm 1$.
A \emph{state} is the assignment $\tilde{\bf v}$ of a value $+1$ or $-1$ to each of the $L^2$ spins.
The Boltzmann distribution describes the probability $p(\tilde{\bf v}|T)$ that the system takes on a particular state $\tilde{\bf v}$ at temperature $T$,\ie,
\begin{equation}
    p(\tilde{\bf v}|T) = \frac{1}{\CZ(T)} e^{-E(\tilde{\bf v})},
\end{equation}
where,
\begin{equation}\label{eq:H_4pt_ising}
    E(\tilde{\bf v}) = -\frac{1}{T} \sum_{(i,j)}
        J_{i,j} \tilde{v}_{(i,j)} \tilde{v}_{(i+1,j)} \tilde{v}_{(i,j+1)} \tilde{v}_{(i+1,j+1)}.
\end{equation}
The sum runs over all $L^2$ lattice sites $(i,j) \in \{1,2,\ldots,L\}^2$ and $J_{i,j}$ is the coupling amongst the square of spins $\{\tilde{v}_{(i,j)}, \tilde{v}_{(i+1,j)}, \tilde{v}_{(i,j+1)}, \tilde{v}_{(i+1,j+1)}\}$.

We first solve the inverse problem defined by the Hamiltonian of Eq.~\ref{eq:H_4pt_ising} analytically.
Our non-parametric definition~\ref{def:interaction_mult} of multiplicative self, $2$-point, $3$-point, and $4$-point interaction amongst binary variables immediately recovers the couplings $-8J_{i,j}$, $8J_{i,j}$, $-8J_{i,j}$, and $16J_{i,j}$ respectively from the probability distribution of Eq.~\ref{eq:H_4pt_ising}, after applying $\ln(-)$ and correcting for double counting due to the change of basis $\{-1,1\} \mapsto \{0,1\}$.
To see this, we first apply the transformation $\tilde{v}_{(i,j)} = 2v_{(i,j)}-1$ expressing the values of a spin in terms of $\{0,1\}$ as opposed to $\{-1,1\}$ in order to apply the definition of multiplicative $n$-point interaction of Eq.~\ref{eq:interaction_mult}.
Thus, $\tilde{v}_{(i,j)} = -1$ corresponds to $v_{(i,j)} = 0$, whereas $\tilde{v}_{(i,j)} = 1$ corresponds to $v_{(i,j)} = 1$.
This yields,
\small
\begin{equation*}
\begin{split}
    &J_{i,j}\tilde{v}_{(i,j)} \tilde{v}_{(i+1,j)} \tilde{v}_{(i,j+1)} \tilde{v}_{(i+1,j+1)} = \\
    &J_{i,j}\bigl(2v_{(i,j)}-1\bigr) \bigl(2v_{(i+1,j)}-1\bigr) \bigl(2v_{(i,j+1)}-1\bigr) \bigl(2v_{(i+1,j+1)}-1\bigr),
\end{split}
\end{equation*}
\normalsize
for the contribution to $E({\bf v})$ of a single square of spins with the top left spin at lattice site $(i,j)$.
The interactions may now be computed by taking suitable derivatives of the energy function $E({\bf v})$ in the $\{0,1\}$ basis, whilst putting the remaining spins to zero, and taking care of double counting due to the change of basis.

\subsection{A Hamiltonian with 4-point interactions}
\label{sec:H-with-4pt_numerical}
In this section, we evaluate the performance of our non-parametric formulation of multiplicative interaction on data generated by an Ising-like Hamiltonian with $4$-point couplings in the $\{-1,1\}$ basis. This corresponds to having non-zero self, 2-point, 3-point, and 4-point interactions in the $\{0,1\}$ basis.

\begin{figure}[!htb]
\begin{center}
    \minipage{0.25\textwidth}
	\includegraphics[width=0.9\linewidth]{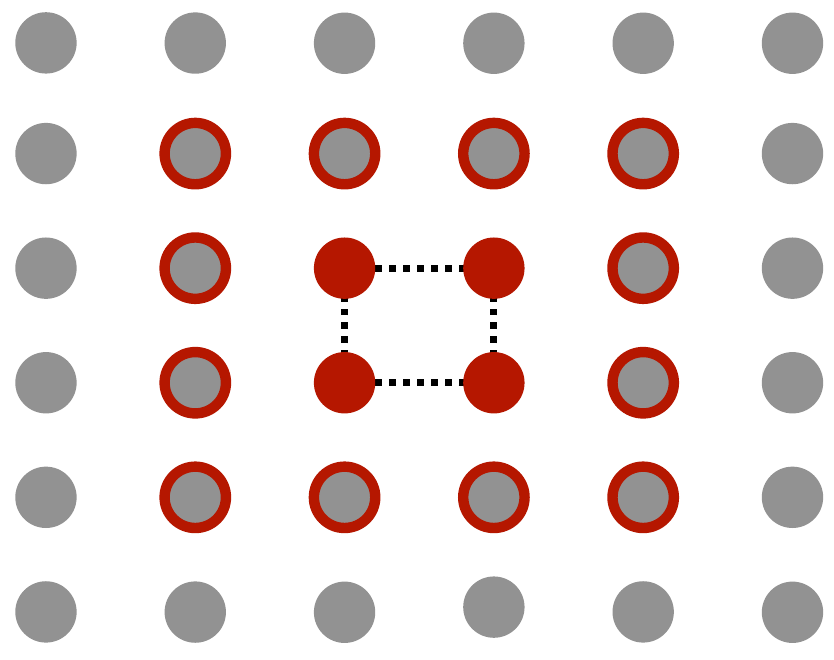}
	\endminipage
    \caption{Nearest neighbour structure in the Ising-like Hamiltonian with 4-point interactions. There are $12$ parents to be conditioned on for estimating the 4-point interaction amongst the quadruple of spins of interest. }
    \label{fig:Markov_structure_H4}
\end{center}
\end{figure}

One million samples were generated using the Metropolis algorithm, at $T=1$ and different coupling constants $0.1,0.125,0.15,0.2,0.25$.
The results for self to 4-point interactions, normalised by the corresponding coupling constant and corrected for change of basis factors, are presented in Fig.~\ref{fig:H4_higher_order_ave}.
As expected, the uncertainty on the estimations increases as we consider higher-order interactions.
Nevertheless, at one million samples, the uncertainty on the average 4-point interaction is approximately less than 10\% in this system.
Reducing the sample sizes from one million to 500K, then to 200K, results in not having sufficient power to estimate the 4-point and the 3-point interactions respectively.
The results for the interactions per pair, triple and quadruple of spins are presented in Fig.~\ref{fig:H4_per_spins_2pt_3pt} and Fig.~\ref{fig:H4_per_spins_4t}

\begin{figure}[!htb]
\begin{center}
    \minipage{0.5\textwidth}
	\includegraphics[width=\linewidth]{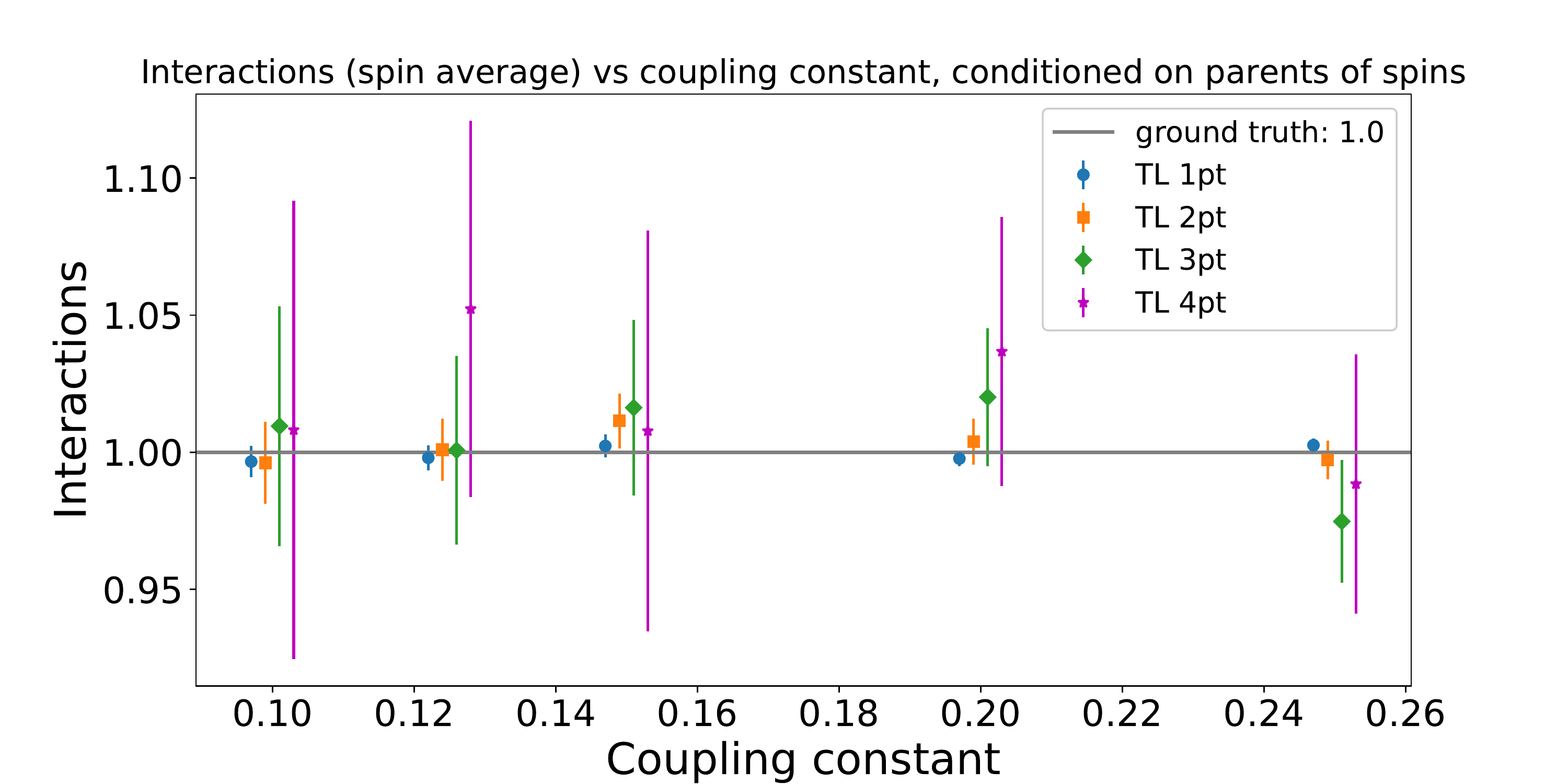}
	\endminipage
    \caption{Estimates of the self to 4-point interactions $I_{ijkl}^m$ averaged across spins and  normalised by various values of coupling constants in the Hamiltonian $0.1,0.125,0.15,0.2,0.25$. Estimations are performed using 1M samples. As the total number of samples used for estimation is lowered, the power to detect higher-order interactions is reduced. }
    \label{fig:H4_higher_order_ave}
\end{center}
\end{figure}

\begin{figure}[!htb]
\begin{center}
    \minipage{0.45\textwidth}
	\includegraphics[width=\linewidth]{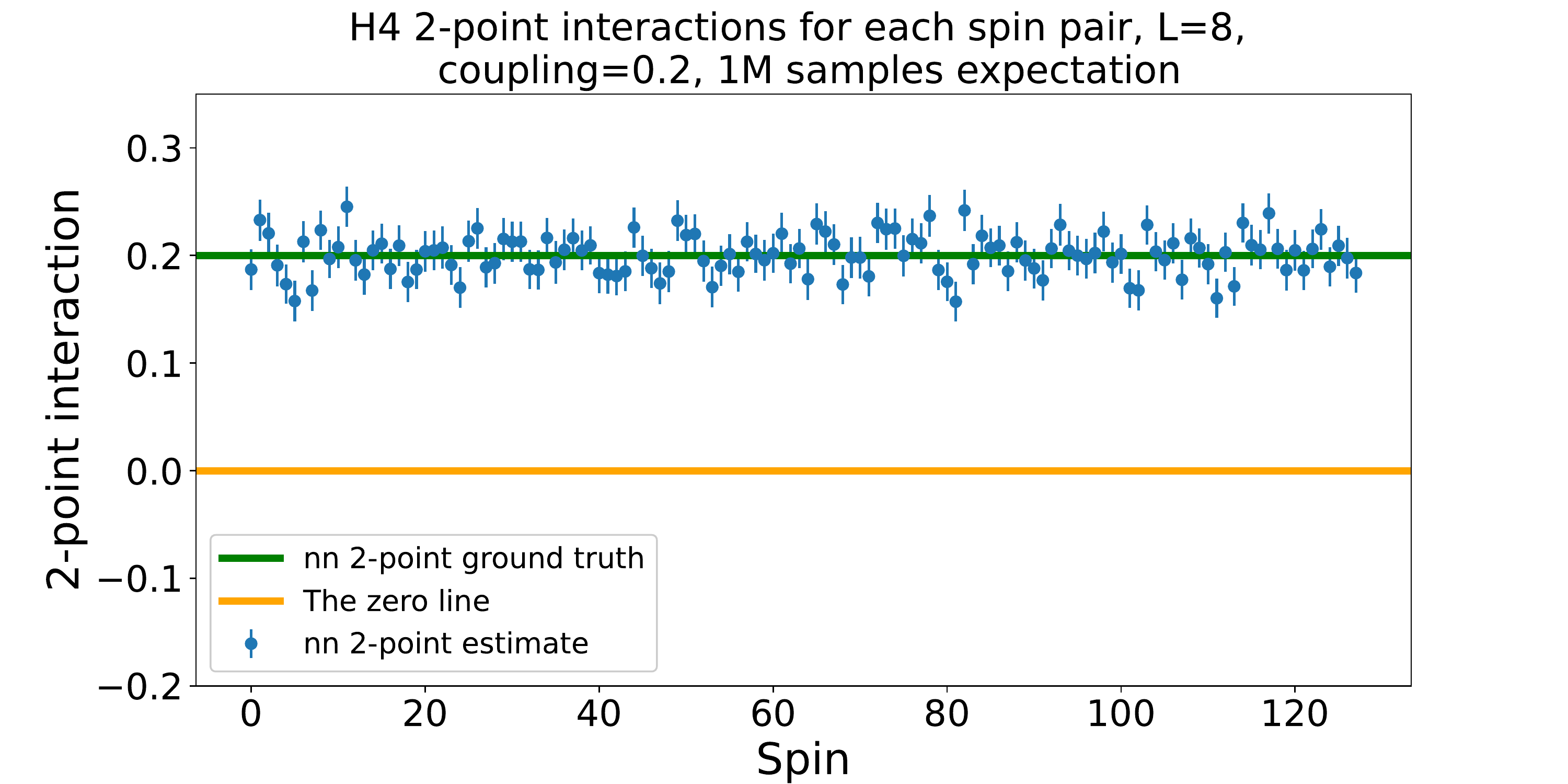}
	\endminipage
	\vspace{0.2cm}
	\minipage{0.45\textwidth}
	\includegraphics[width=\linewidth]{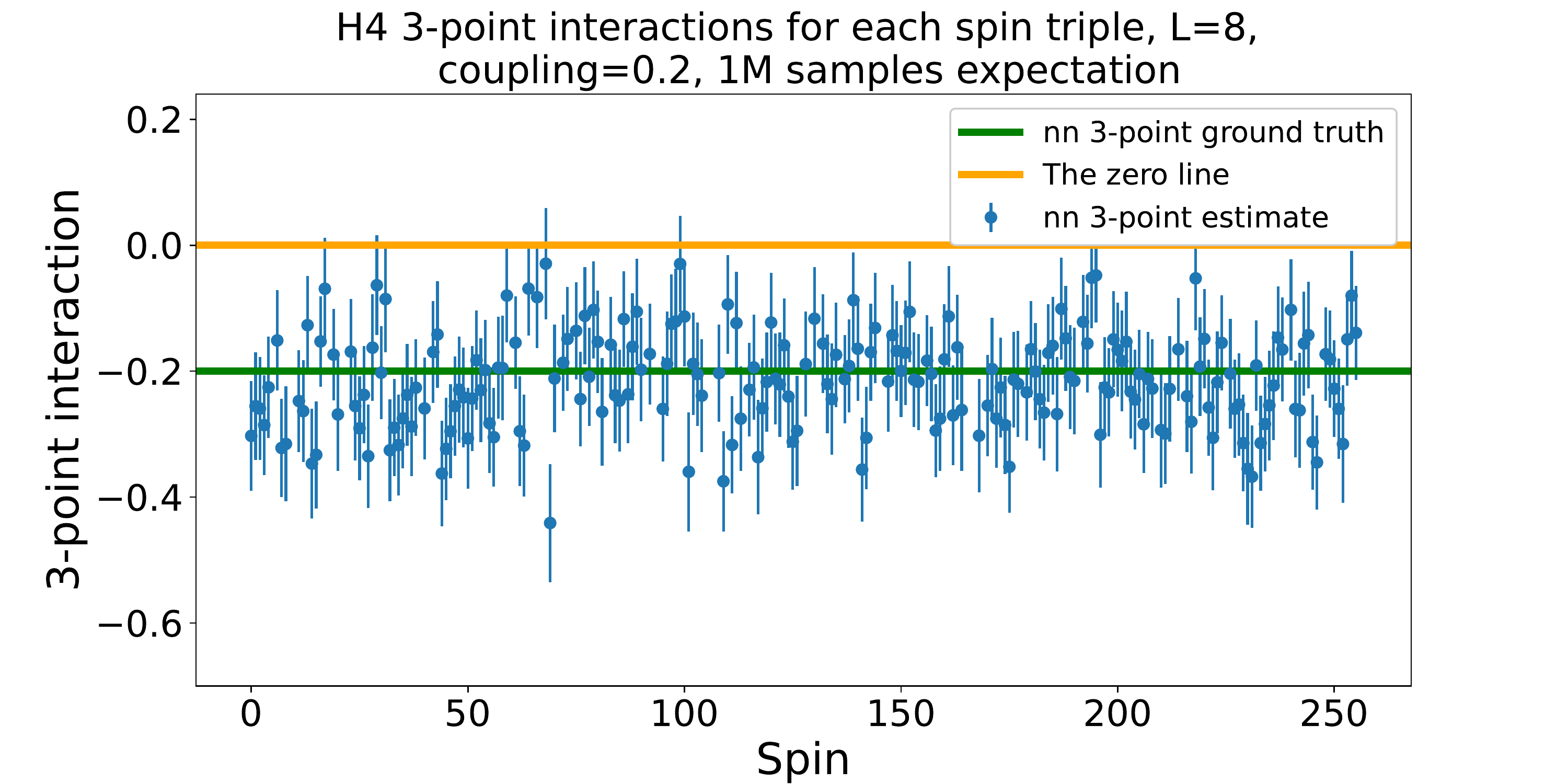}
	\endminipage\hfill
    \caption{2-point (top) and 3-point (bottom) per spin estimates of interactions for the ground truth coupling constant $0.2$. Estimations are performed on 1M samples.}
    \label{fig:H4_per_spins_2pt_3pt}
\end{center}
\end{figure}

\begin{figure}[!htb]
\begin{center}
    \minipage{0.45\textwidth}
	\includegraphics[width=\linewidth]{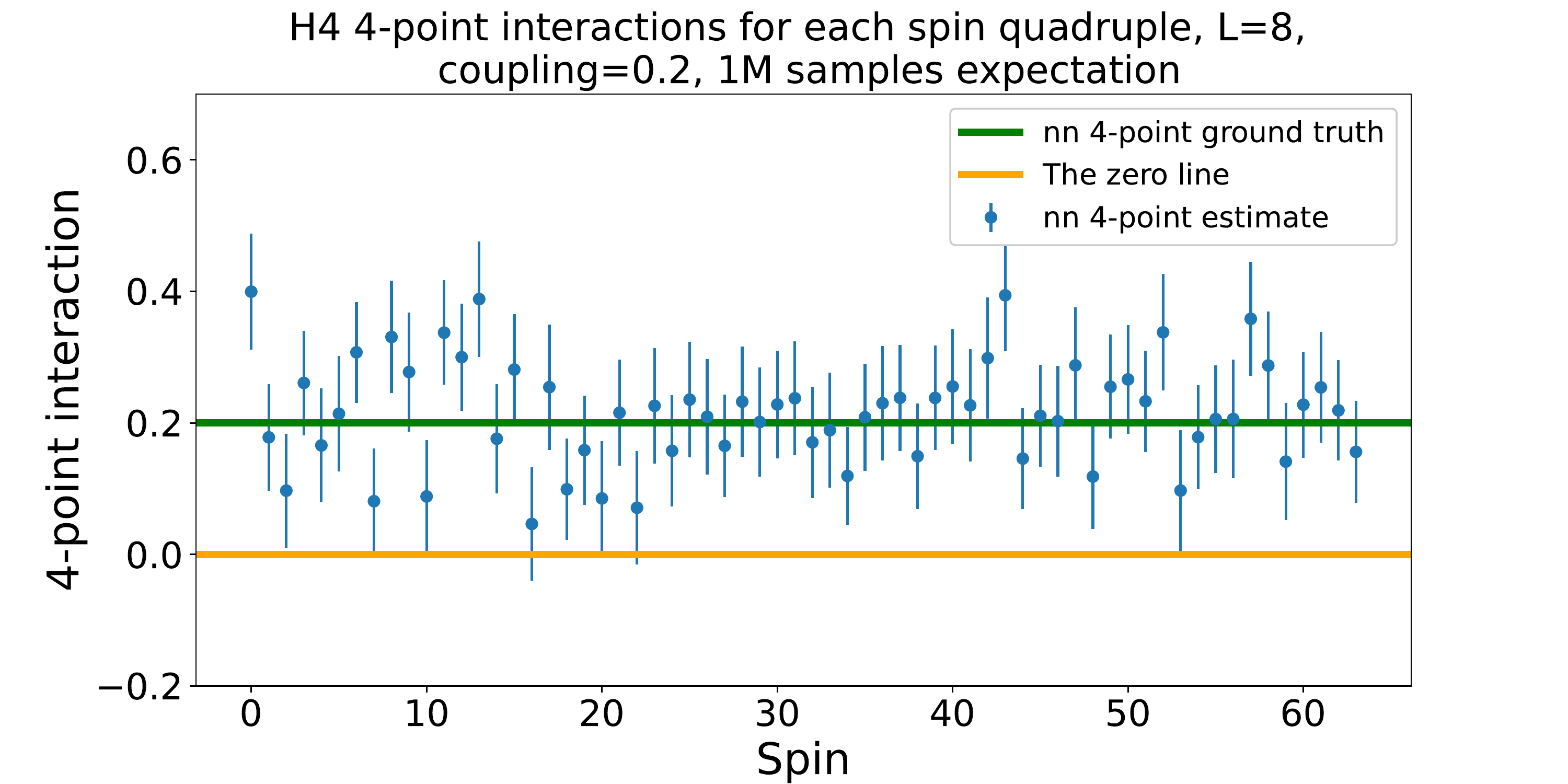}
	\endminipage
    \caption{4-point per spin estimates of interactions for the ground truth coupling constant $0.2$. Estimations are performed using 1M samples. We observe that the variance is large, in the sense that if the ground truth were to be unknown, some of the couplings would be considered as insignificant.}
    \label{fig:H4_per_spins_4t}
\end{center}
\end{figure}

\subsection{Interaction in energy-based models}\label{sec:general-energy-based}
Our non-parametric definition of $n$-point interaction applies to any set of $n$ binary and categorical random variables in any probability distribution $p$.
For example, if the probability distribution is believed to be a Boltzmann distribution, our formulation can be used to estimate all the $n$-point interactions, \ie, the coefficients in the Hamiltonian up to statistics, \eg, as shown in Sec.~\ref{sec:H-with-4pt_numerical} numerically.
In particular, given any parametric form $p_{\theta}$, our formulation yields an analytical, closed form expression for all $n$-point interactions in terms of the parameters $\theta$ of the given model.
For example, the restricted Boltzmann machine was dealt with in Sec.~\ref{sec:RBM_analytic}.
Note, however, that in such energy-based neural networks determining the $n$-point interaction is a two-step procedure: (i) Marginalising of the hidden (latent) variables to obtain the probability distribution in terms of the visible  variables only, and (ii) replacing the probabilities $p$ in Eq.~\ref{eq:interaction_mult} with the parametric form $p_\theta$.
Thanks to the Targeted Learning framework, the last step can be performed directly without the need for asymptotic expansions and resummations.

\section{Conclusions \& future work}
\label{sec:conclusions}
In this work, we have provided a non-parametric solution to the inverse problem of estimating $n$-point interactions amongst binary and categorical random variables directly from data, using the framework of Targeted Learning.
In doing so, no parametric assumptions have to be made, yielding a fully model-independent and unbiased estimator of interaction at all orders. 
We have shown that interaction can naturally be interpreted as a derivative and, more specifically, that $n$-point interactions are inductively interpretable as a \emph{change} in $(n-1)$-point interaction when fixing any one of the $n$ variables.
Under a Markovian assumption, which is satisfied by all energy-based models in statistical physics and machine learning, we have demonstrated that interaction can be efficiently estimated from data by only conditioning on parent variables.
If the parent structure is known, or has been inferred from a non-parametric independence test, one can substantially reduce the sample size required to obtain an accurate estimate.
Furthermore, as the estimator only consists of expectation values over the data, the run time on a local machine is of the order of a few minutes.
We have illustrated the above both analytically and numerically on a $2$-dimensional Ising Hamiltonian, a $4$-point Ising-like Hamiltonian, and the distribution of a restricted Boltzmann machine.
Moreover, we have argued that our formulation can be used to extract closed form expressions of $n$-point interaction in any system of binary and categorical random variables, such as energy-based neural networks, where this coupling cannot directly be read off from a Hamiltonian, \eg, due to multiple hidden nodes.
Finally, we have indicated how our definition of interaction via Targeted Learning has applications in population biomedicine, in particular genome-wide association studies (GWAS), since it both removes the need for parametric assumptions altogether and correctly accounts for molecular interaction effects (epistasis), in contrast to current approaches in the literature.
\\

In future work, we plan to examine the bias-variance trade-off in extracting $n$-point interactions from other generative networks, such as Variational Auto-Encoders (VAE) and Generative Adversarial Networks (GAN).

\acknowledgements
We are most grateful to Mark van der Laan for his suggestions regarding the formulation of $2$-variable interactions using the Targeted Learning framework, in a private conversation at the {\it Causal machine learning masterclass}, the Alan Turing Institute, London.
We are thankful to Luigi Del Debbio for his comments on the numerical results, as well as Andrew Papanastasiou and Abel Jansma for reading and commenting on the manuscript. We are also thankful to Chris Ponting and Neil Clark, for their insights into the biological applicability of our work.
S.V.B. is funded by the Deutsche Forschungsgemeinschaft (DFG, German Research Foundation) under Germany’s Excellence Strategy -- EXC-2047/1 -- 390685813.
A.K. is a cross-disciplinary postdoctoral fellow supported by funding from the University of Edinburgh and Medical Research Council (core grant to the MRC Institute of Genetics and Molecular Medicine).

\appendix

\section{Additive interaction for categorical variables}
\label{app:categorical_variables}

We make the following remarks regarding Eq.~\ref{eq:interaction_add_2_categorical} of additive $2$-point interaction for categorical variables.
\begin{enumerate}
    \item Similar to the notion of interaction in the binary case, the notion of interaction for categorical variables is inherently symmetric under the exchange of the variables $(T_1 \colon t_1 \to t_1')$ and $(T_2 \colon t_2 \to t_2')$, \ie,
        \begin{equation}\label{eq:interaction_symmetric_general}
            I^{a}_{1,2}(t_1t_1';t_2t_2') = I^{a}_{2,1}(t_2t_2';t_1t_1').
        \end{equation}
    \item The interaction between the effect of $T_1 \colon t_1 \to t_1'$ on $Y$ and $T_2 \colon t_2 \to t_2'$ on $Y$ is opposite in sign to the effect of $T_1 \colon t_1' \to t_1$ on $Y$ (we swap $t_1$ and $t_1'$) and $T_2 \colon t_2 \to t_2'$ on $Y$, \ie,
        \begin{equation}
            I^{a}_{1,2}(t_1t_1';t_2t_2') = - I^{a}_{1,2}(t_1't_1;t_2t_2').
        \end{equation}
        For example, the interaction between the effect of \emph{turning on} variable $T_1 \colon 0 \to 1$ on $Y$ and the effect of $T_2 \colon t_2 \to t_2'$ on $Y$, is opposite in sign to the interaction between the effect of \emph{turning off} variable $T_1 \colon 1 \to 0$ on $Y$ and the effect of $T_2 \colon t_2 \to t_2'$ on $Y$.
    \item As a result of the above remark, swapping both categories yields the same interaction
        \begin{equation}
            I^{a}_{1,2}(t_1t_1';t_2t_2') = (-1)^2 I^{a}_{1,2}(t_1't_1;t_2't_2).
        \end{equation}
\end{enumerate}

Finally, the additive $2$-point interaction between categorical variables satisfies the following \emph{transitivity}:
\begin{prop}\label{prop:interaction_transitivity}
    Let $T_1,T_2$ be two categorical variables, let $\{0,1,2\}$ denote the labels of three categories of $T_1$, and let $\{0,1\}$ denote the labels of two categories of $T_2$.
    Then the interactions satisfy transitivity, \ie,
    \begin{equation}\label{eq:interaction_transitivity}
        I^{a}_{1,2}(01;01) + I^{a}_{1,2}(12;01) = I^{a}_{1,2}(02;01).
    \end{equation}
\end{prop}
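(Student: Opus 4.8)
The plan is to reduce the statement to the elementary additivity of treatment effects along a chain of categories. First I would introduce the shorthand $A(s,t) = \BE(Y \mid T_1 = s, T_2 = t, \uT = 0)$ for $s \in \{0,1,2\}$ and $t \in \{0,1\}$, so that the average treatment effects appearing in Eq.~\ref{eq:interaction_add_2_categorical} become $\ATE_{T_1 \colon s \to s'}(Y \mid T_2 = t, \uT = 0) = A(s',t) - A(s,t)$, and hence
\begin{equation*}
    I^{a}_{1,2}(s s'; 0 1) = \bigl[A(s',1) - A(s,1)\bigr] - \bigl[A(s',0) - A(s,0)\bigr].
\end{equation*}
This is just unwinding the definition of interaction as a difference of ATEs and of ATE as a difference of conditional expectations.

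Next I would write out the left-hand side of Eq.~\ref{eq:interaction_transitivity} using this formula with $(s,s') = (0,1)$ and then $(s,s') = (1,2)$, and add. The four terms involving $A(1,\cdot)$ cancel in pairs --- two from the $T_2 = 1$ block and two from the $T_2 = 0$ block --- leaving exactly $\bigl[A(2,1) - A(0,1)\bigr] - \bigl[A(2,0) - A(0,0)\bigr]$, which is precisely $I^{a}_{1,2}(02;01)$. Equivalently, one observes that for each fixed $t$ the map $s \mapsto A(s,t)$ telescopes, $\ATE_{T_1 \colon 0 \to 1}(Y \mid T_2 = t, \uT = 0) + \ATE_{T_1 \colon 1 \to 2}(Y \mid T_2 = t, \uT = 0) = \ATE_{T_1 \colon 0 \to 2}(Y \mid T_2 = t, \uT = 0)$, and the interaction is the difference of this ATE evaluated at $T_2 = 1$ and at $T_2 = 0$; since that difference is linear in the ATE, it inherits the telescoping.

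There is essentially no obstacle: the only point requiring care is that the covariate stratum and the conditioning $\uT = 0$ are held fixed throughout, so that all the expectation values live in the same conditional distribution and the cancellation is legitimate --- but this is automatic from the setup. The same bookkeeping shows more generally that for any chain of categories $t_1^{(0)} \to t_1^{(1)} \to \cdots \to t_1^{(k)}$ of $T_1$ the corresponding interactions sum to $I^{a}_{1,2}(t_1^{(0)} t_1^{(k)}; t_2 t_2')$, and symmetrically in the second slot, which is the general transitivity referred to in the preceding remarks.
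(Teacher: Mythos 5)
Your proposal is correct and follows essentially the same route as the paper: both introduce the conditional expectation function (your $A(s,t)$, the paper's $f(t_1,t_2)$), write each interaction as a double difference, and verify that the middle terms cancel telescopically. The additional remarks on general chains of categories are a harmless elaboration of the same bookkeeping.
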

Heuristically, the result states that the sum of the effect on $Y$ of changing $T_1$ from $0$ to $1$ and then changing $T_1$ from $1$ to $2$, equals the effect on $Y$ of changing $T_1$ from $0$ to $2$ directly.
The same heuristic holds for the interaction with the effect of $T_2 \colon 0 \to 1$ on $Y$ as this effect is the same during all three steps of the procedure.
\begin{proof}
    We define the function $f \colon \{0,1,2\} \times \{0,1\} \to \BR$ as
    \begin{equation}
        f(t_1,t_2) \coloneqq \BE(Y \mid T_1 = t_1, T_2 = t_2, \uT = 0).
    \end{equation}
    We may express the average treatment effect in terms of $f$ as $\ATE_{T_1 \colon t_1 \to t_1'}(Y \mid T_2 = t_2, \uT = 0) = f(t_1',t_2)-f(t_1,t_2)$.
    This leads to the following expression for the interaction in terms of $f$,
    \begin{equation}
        I^{a}_{1,2}(t_1t_1';t_2t_2') = \bigl[f(t_1',t_2')-f(t_1,t_2')\bigr] - \bigl[f(t_1',t_2)-f(t_1,t_2)\bigr].
    \end{equation}
    Eq.~\ref{eq:interaction_transitivity} now follows by writing out both sides:
    \begin{align*}
    \begin{split}
        &I^{a}_{1,2}(01;01) + I^{a}_{1,2}(12;01) = \\
        &\bigl[f(1,1)-f(0,1)\bigr] - \bigl[f(1,0)-f(0,0)\bigr] \\
        + &\bigl[f(2,1)-f(1,1)\bigr] - \bigl[f(2,0)-f(1,0)\bigr] \\
        = &\bigl[f(2,1)-f(0,1)\bigr] - \bigl[f(2,0)-f(0,0)\bigr]
        = I^{a}_{1,2}(02;01).
    \end{split}
    \end{align*}
    This completes the proof.
\end{proof}

As an important corollary, we obtain a \emph{criterion} for linear dependence of the interaction $I^{a}_{1,2}$ on particular labels of the categorical variables.
The precise statement is the following.
\begin{cor}\label{cor:linearity_condition}
    Let $T_1,T_2$ be two categorical variables, let $\{0,1,2\}$ denote the labels of three categories of $T_1$, and let $\{0,1\}$ denote the labels of two categories of $T_2$.
    If
    \begin{equation}\label{eq:linearity_condition}
        I^{a}_{1,2}(01;01) = I^{a}_{1,2}(12;01),
    \end{equation}
    then the interaction $I^{a}_{1,2}(\underline{\hspace{2mm}} ;01)$ between the effect of $T_1$ on $Y$ and the effect of $T_2 \colon 0 \to 1$ on $Y$ depends linearly on the label of the categorical variable $T_1$, in the sense that
    \begin{equation}\label{eq:linearity_relation}
        I^{a}_{1,2}(02;01) = 2 \cdot I^{a}_{1,2}(01;01).
    \end{equation}
    Thus the $2$ of the label $02$ can be taken outside to multiply the interaction leaving the label $01$, hence the term \emph{linear}.
\end{cor}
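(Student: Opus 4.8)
The plan is to obtain Cor.~\ref{cor:linearity_condition} as an immediate consequence of the transitivity identity already proved in Prop.~\ref{prop:interaction_transitivity}. That proposition gives, with no hypotheses beyond the labelling convention, the relation $I^{a}_{1,2}(01;01) + I^{a}_{1,2}(12;01) = I^{a}_{1,2}(02;01)$. So the only task is to feed in the extra assumption Eq.~\ref{eq:linearity_condition} and read off the conclusion.

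Concretely, I would first recall Eq.~\ref{eq:interaction_transitivity}, then substitute the hypothesis $I^{a}_{1,2}(12;01) = I^{a}_{1,2}(01;01)$ into its left-hand side. The left-hand side collapses to $I^{a}_{1,2}(01;01) + I^{a}_{1,2}(01;01) = 2\,I^{a}_{1,2}(01;01)$, while the right-hand side is unchanged at $I^{a}_{1,2}(02;01)$. Equating the two yields exactly Eq.~\ref{eq:linearity_relation}, which completes the argument.

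For a self-contained alternative one could re-use the function $f(t_1,t_2) = \BE(Y \mid T_1 = t_1, T_2 = t_2, \uT = 0)$ introduced in the proof of Prop.~\ref{prop:interaction_transitivity}, expand both $I^{a}_{1,2}(01;01)$ and $I^{a}_{1,2}(12;01)$ in terms of $f$, impose their equality, and verify directly that $I^{a}_{1,2}(02;01) = 2\,I^{a}_{1,2}(01;01)$; but this route simply re-derives transitivity along the way, so the short deduction above is preferable.

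I do not anticipate any genuine obstacle: the mathematical content lies entirely in Prop.~\ref{prop:interaction_transitivity}, and the corollary is a one-line specialization. The only point deserving a sentence of commentary is the use of the word \emph{linear}: once the two consecutive increments $0 \to 1$ and $1 \to 2$ of the interaction coincide, iterating the transitivity relation shows that the interaction over $0 \to k$ equals $k$ times the unit increment $I^{a}_{1,2}(01;01)$, so the integer label may be pulled outside the interaction — which is precisely the sense in which linearity is claimed in Eq.~\ref{eq:linearity_relation}.
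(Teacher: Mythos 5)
Your proposal is correct and matches the paper exactly: the paper's proof is the one-line remark that the corollary follows directly from Prop.~\ref{prop:interaction_transitivity}, and your substitution of the hypothesis into Eq.~\ref{eq:interaction_transitivity} is precisely that deduction made explicit.
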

\begin{proof}
    This follows directly from~Proposition~\ref{prop:interaction_transitivity}.
\end{proof}
A similar statement holds for the interaction conditioned on a particular covariate $W = w$, and when interchanging the roles of $T_1$ and $T_2$ by considering two categories for $T_1$ and three for $T_2$.

This result has a graphical interpretation in terms of the following triangle:
\begin{equation}
\begin{tikzcd}
    & & 2 \\
    0 \arrow[swap]{r}{i(01)} \arrow{urr}{i(02)} & 1 \arrow[swap]{ur}{i(12)} &
\end{tikzcd}
\end{equation}
where we denote the corresponding interaction by $i(t_1t_1') = I^{a}_{1,2}(t_1t_1';01)$ which is represented by the length of the \emph{vertical} component of the arrow.
For example, in the above picture $i(01) = 0$ since the arrow is horizontal, and $i(12) = i(02)$ as the vertical components of both arrows have the same length.
Thus, the transitive relation
\begin{equation}
    i(01)+i(12)=i(02)
\end{equation}
allows us to draw this triangle.
Under the condition of Corollary~\ref{cor:linearity_condition}, the vertical components of the arrow $0 \to 1$ and $1 \to 2$ are equal, \ie, $i(01) = i(12)$, in which case the above triangle is \emph{degenerate}, \ie, a line segment.
In conclusion, the linearity of the dependence on the categorical variable $T_1$ of the interaction $I^{a}_{1,2}(\underline{\hspace{2mm}};01)$ between the effect of $T_1 \colon 0 \to 1$ on $Y$ and the effect of $T_2 \colon 0 \to 1$ on $Y$, in the sense that
\begin{equation}
    i(02) = 2 \cdot i(01),
\end{equation}
corresponds to degeneracy of the above triangle.
This is a geometrical criterion for linearity.
\\

The notion of interaction as in Eq.~\ref{eq:interaction_add_2_categorical} is \emph{independent} of the chosen labels for the categorical random variables $T_1,T_2$ whether they be numbers, farm animals, or names of cabinet ministers.
The \emph{interpretation} of equation Eq.~\ref{eq:linearity_relation} in terms of linearity \emph{depends} on the chosen labels since it forces them to appear in the mathematical formula Eq.~\ref{eq:linearity_relation}.
Naturally, the above discussion admits a direct generalisation to the case of categorical variables describing more than three categories.
In fact, all results are formulated in this general setting already, apart from assigning the particular labels $\{0,1,2\}$ or $\{0,1\}$.


\section{Symmetry of $n$-point interaction}
In this section, we prove the symmetry under any permutation of the variables $X_{i_1},\ldots,X_{i_n}$ of the multiplicative formulation of $n$-point interaction.
\begin{prop}\label{prop:symmetry_interaction_mult}
    Let $K = \{i_1,\ldots,i_n\} \subset \{0,1,\ldots,r\}$ be a subset of indices, and let $\sigma$ be any of the $n!$ permutations of $\{1,2,\ldots,n\}$ that acts on the $n$-tuple $K$ as $\sigma(K) = \sigma(i_1,\ldots,i_n) = \{i_{\sigma(1),\ldots,\sigma(n)}\}$.
    Then we have
    \begin{equation}\label{eq:symmetry_interaction_mult_in_appendix}
        I^{m}_{i_1,\ldots,i_n} = I^{m}_{\sigma(i_1,\ldots,i_n)}.
    \end{equation}
\end{prop}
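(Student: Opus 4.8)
The plan is to strip the $n$-tuple notation out of Definition~\ref{def:interaction_mult} and observe that, once every factor is conditioned on $\uX = 0$, the only ingredients entering the product are the unordered \emph{set} $K$, its \emph{subsets} $J \subseteq K$, and their cardinalities $\ell(J)$ --- none of which depends on the order in which the elements of $K$ are listed. First I would unwind the factor $p\bigl(X_K = e^{(n)}_J \mid \uX = 0\bigr)$: since $X_K = (X_{i_1},\ldots,X_{i_n})$ and $e^{(n)}_J$ is the $n$-tuple whose $\ell$-th entry is $1$ when $i_\ell \in J$ and $0$ otherwise, the event $\{X_K = e^{(n)}_J\}$ is exactly $\{X_i = 1 \text{ for } i \in J,\ X_i = 0 \text{ for } i \in K\setminus J\}$. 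Hence, writing for any subset $J \subseteq \{0,1,\ldots,r\}$
\[
    q(J) \;:=\; p\bigl(X_i = 1 \text{ for } i \in J,\ X_i = 0 \text{ for } i \notin J\bigr),
\]
one has $p\bigl(X_K = e^{(n)}_J \mid \uX = 0\bigr) = q(J)/p(\uX = 0)$, and the quantity $q(J)$ depends on the subset $J$ alone, with no ordering of indices appearing.

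Next I would collapse the double product in Eq.~\ref{eq:interaction_mult} into a single product over subsets of $K$, grouping the subsets $J$ by their cardinality $j = \ell(J)$:
\[
    I^m_{i_1,\ldots,i_n} \;=\; \prod_{J \subseteq K} p\bigl(X_K = e^{(n)}_J \mid \uX = 0\bigr)^{(-1)^{n-\ell(J)}} \;=\; \prod_{J \subseteq K} q(J)^{(-1)^{n-\ell(J)}},
\]
where the last equality uses that the total exponent of the common factor $p(\uX = 0)$ is $\sum_{j=0}^{n}(-1)^{n-j}\binom{n}{j} = (1-1)^n = 0$ for $n \geq 1$, so the normalisation cancels. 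The resulting expression refers to $K$ only as an unordered set and to $n = \ell(K)$ and the $\ell(J)$ only as integers; consistency with the $n=3$ case of Eq.~\ref{eq:interaction_mult_3} can be checked by inspection.

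The conclusion is then purely formal: for any permutation $\sigma$ of $\{1,\ldots,n\}$, the tuple $\sigma(i_1,\ldots,i_n)$ has the same underlying set $K$ and the same family of subset cardinalities, so $\prod_{J \subseteq K} q(J)^{(-1)^{n-\ell(J)}}$ is literally the same expression whether read off from $(i_1,\ldots,i_n)$ or from $\sigma(i_1,\ldots,i_n)$, which is Eq.~\ref{eq:symmetry_interaction_mult_in_appendix}. The only point at which genuine care is needed --- the ``main obstacle'', such as it is --- is verifying that $e^{(n)}_J$ carries no information beyond membership in $J$, so that permuting the entries of $X_K$ permutes those of each $e^{(n)}_J$ consistently and leaves every factor $q(J)$, hence the whole product, invariant; everything else is relabelling. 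Finally, applying $\ln$ together with Eq.~\ref{eq:equivalence_interactions_add_mult} transfers this permutation symmetry to the additive $n$-point interaction, yielding Cor.~\ref{cor:symmetry_interaction_add}.
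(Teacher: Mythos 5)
Your proof is correct and takes essentially the same route as the paper: both arguments reduce to the observation that the factors in Definition~\ref{def:interaction_mult} are indexed by subsets $J \subseteq K$ and depend only on $J$ as an unordered set (equivalently, a permutation of $(i_1,\ldots,i_n)$ merely permutes the subsets of each fixed cardinality amongst themselves), so the product is invariant. Your additional remark that the normalisation $p(\uX = 0)$ cancels because $\sum_{j=0}^{n}(-1)^{n-j}\binom{n}{j} = 0$ is a nice explicit touch but not needed for the symmetry itself.
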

\begin{proof}
    Let $J \subset K$ be a subset of $j$ indices and recall that $e^{(n)}_{J} = (e_{i_1},\ldots,e_{i_n})$ is the unique $n$-tuple such that $e_{i_l} = 1$ if $i_{l} \in J$ and $e_{i_l} = 0$ otherwise; in particular, this $n$-tuple contains $j$ ones and $n-j$ zeros.
    The same property holds for the $n$-tuple $e^{(n)}_{\sigma(J)}$, where $\sigma$ is any permutation of $K$.
    As a result, it suffices to show that $\sigma$ satisfies
    \begin{equation}
        I^{m}_{\sigma(i_1,\ldots,i_n)}(j) = I^{m}_{i_1,\ldots,i_n}(j),
    \end{equation}
    where
    \begin{equation}
        I^{m}_{i_1,\ldots,i_n} = \prod_{j=0}^{n} I^{m}_{i_1,\ldots,i_n}(j).
    \end{equation}
    \ie, that it fixes the $n+1$ factors $I^{m}_{i_1,\ldots,i_n}(j)$ of $I^{m}_{i_1,\ldots,i_n}$ separately.
    But any permutation of $K = \{i_1,\ldots,i_n\}$ simply permutes all subsets $J \subset K$ of fixed length $\ell(J) = j$ amongst each other.
    This completes the proof.
\end{proof}
As a corollary, we deduce the general permutation symmetry of the additive $n$-point interaction.
\begin{cor}\label{cor:symmetry_interaction_add}
    Let $K = \{i_1,\ldots,i_n\} \subset \{1,2,\ldots,r\}$ be a subset, and let $\sigma$ be any of the $n!$ permutations of $\{1,2,\ldots,n\}$ acting on $K$ as $\sigma(K) = \sigma(i_1,\ldots,i_n) = \{i_{\sigma(1),\ldots,\sigma(n)}\}$.
    The additive $n$-point interaction satisfies
    \begin{equation}
        I^{a}_{i_1,\ldots,i_n} = I^{a}_{\sigma(i_1,\ldots,i_n)}.
    \end{equation}
\end{cor}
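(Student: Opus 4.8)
The plan is to repeat the argument of Proposition~\ref{prop:symmetry_interaction_mult} almost verbatim, using that the additive $n$-point interaction of Eq.~\ref{eq:interaction_add} has precisely the same combinatorial shape as the multiplicative one of Eq.~\ref{eq:interaction_mult}: the outer product over $j$ becomes an outer sum, the inner product over subsets $J$ becomes an inner sum, and the exponent $(-1)^{n-j}$ becomes the scalar coefficient $(-1)^{n-j}$. First I would decompose
\[
    I^{a}_{i_1,\ldots,i_n} = \sum_{j=0}^{n} (-1)^{n-j}\, I^{a}_{i_1,\ldots,i_n}(j), \qquad I^{a}_{i_1,\ldots,i_n}(j) \coloneqq \sum_{J \subset K \colon \ell(J) = j} \BE\bigl(Y \mid T_K = e^{(n)}_J, \uT = 0\bigr),
\]
and reduce the claim to showing that each ``layer sum'' $I^{a}_{i_1,\ldots,i_n}(j)$ is invariant under any permutation $\sigma$ of the tuple $K = (i_1,\ldots,i_n)$, since the coefficients $(-1)^{n-j}$ do not involve the ordering.

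For the layer sums, the key point is that for a subset $J \subset K$ the conditioning event $\{T_K = e^{(n)}_J,\ \uT = 0\}$ is just the event that $T_{i_l} = 1$ for $i_l \in J$, $T_{i_l} = 0$ for $i_l \in K \setminus J$, and $T_k = 0$ for all remaining treatments; this is a set of variable--value assignments and is insensitive to the order in which the entries of the tuple are listed, so $\BE\bigl(Y \mid T_{\sigma(K)} = e^{(n)}_{\sigma(J)}, \uT = 0\bigr) = \BE\bigl(Y \mid T_K = e^{(n)}_J, \uT = 0\bigr)$. Because $\sigma$ merely permutes the collection of $j$-element subsets $J \subset K$ amongst themselves (it preserves $\ell(J)$), re-indexing the sum defining $I^{a}_{\sigma(i_1,\ldots,i_n)}(j)$ by the bijection $J \mapsto \sigma^{-1}(J)$ shows it equals $I^{a}_{i_1,\ldots,i_n}(j)$. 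Summing over $j$ then yields $I^{a}_{\sigma(i_1,\ldots,i_n)} = I^{a}_{i_1,\ldots,i_n}$.

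I would also note that in the special case $Y = -E(\uX) = \ln p(\uX)$ the result is immediate from Proposition~\ref{prop:symmetry_interaction_mult}: by the identity $\ln\bigl(I^{m}_{i_1,\ldots,i_n}\bigr) = I^{a}_{i_1,\ldots,i_n}$ of Sec.~\ref{sec:relating_add_mult} and the injectivity of $\ln$, permutation symmetry of $I^{m}$ transfers directly to $I^{a}$; the direct combinatorial argument above is what is needed to cover an arbitrary outcome $Y$. I do not expect any genuine obstacle here — the content is pure bookkeeping — and the only step requiring a moment's care is the same as in Proposition~\ref{prop:symmetry_interaction_mult}, namely spelling out that $\sigma$ acts on $I^{a}$ by permuting equal-size subsets $J$ and checking that $J \mapsto \sigma^{-1}(J)$ is a bijection on the $j$-subsets of $K$.
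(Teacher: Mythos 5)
Your proposal is correct and follows essentially the same route as the paper: the paper's proof likewise handles the canonical outcome $Y = -E(\uX)$ via the log relation to $I^{m}$ and then appeals, for a general outcome $Y$, to exactly the combinatorial argument of Prop.~\ref{prop:symmetry_interaction_mult} that you spell out (permutation of the equal-size subsets $J \subset K$ within each layer, with the coefficients $(-1)^{n-j}$ unaffected). Your write-up merely makes that bookkeeping explicit, which is fine.
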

\begin{proof}
    For the outcome $Y = -E(\uX)$, this follows directly by combining Eq.~\ref{eq:symmetry_interaction_mult_in_paper} and Eq.~\ref{eq:equivalence_interactions_add_mult}.
    For a general outcome $Y$, it follows by the argument of Prop.~\ref{prop:symmetry_interaction_mult}.
\end{proof}

\section{Hammersley--Clifford Theorem for the Ising model}
\label{app:Hammersley_Clifford}
Recall the $2$-dimensional Ising model of spins $\{v_i\}$ taking on the value $\pm 1$.
As an example, we explicitly establish the Hammersley--Clifford theorem of Sec.~\ref{sec:conditional_independence} in this case by verifying that its Hamiltonian,
\small
\begin{equation}
    p({\bf v}) = \frac{1}{\CZ(T)} e^{-E({\bf v})} \quad \text{where }
        E({\bf v}) = -\sum_{i,j} J_{i,j} v_i v_j,
\end{equation}
\normalsize
from Eq.~\ref{eq:H-ising} is locally, and hence globally, Markovian.
To do so, we denote by $\CN$ the set of all spins in the system, by $\CN_i$ the set of (four) spins neighbouring spin $i$, and we denote by $\CN_{-i}$ the set of all spins in the system apart from spin $i$.
The probability $p$ is locally Markovian if we have the equality,
\small
\begin{equation}\label{eq:local_Markovian_check_Ising}
    p\bigl(v_i = \pm 1 \mid v_{j} \text{ for } j \neq i \bigr) = p\bigl(v_i = \pm 1 \mid v_j \text{ for } j \in \CN_i \bigr),
\end{equation}
\normalsize
for each $i \in \CN$.
Fix a spin $v_0$ and denote its neighbours by $\CN_{0} = \{v_1,v_2,v_3,v_4\}$.
We will check that in the conditional probability on the left hand side of Eq.~\ref{eq:local_Markovian_check_Ising}, one only needs to condition on the spins $v_1,v_2,v_3,v_4$.
Here
\small
\begin{equation*}
\begin{split}
    &p\bigl(v_0,v_j | j \in \CN_{-0}\bigr) =
    \frac{1}{\CZ(T)}e^{-\sum_{i,j \neq 0} J_{i,j} v_i v_j} e^{-v_0 \sum_{i=1}^{4} (J_{0,i} v_i + J_{i,0}v_i)} , \\
    &p\bigl(v_j | j \in \CN_{-0} \bigr) = \frac{1}{\CZ(T)} e^{ -\sum_{i,j \neq 0} J_{i,j} v_i v_j } \\
    \cdot &\Bigl[e^{- \sum_{i=1}^{4} (J_{0,i}v_i + J_{i,0}v_i)} + e^{ \sum_{i=1}^{4} (J_{0,i} v_i + J_{i,0}v_i)} \Bigr].
\end{split}
\end{equation*}
\normalsize
It follows that their ratio, which is by definition the binary probability distribution $p\bigl(v_0 \mid v_{j} \text{ for } j \neq i \bigr)$, is fully determined once one conditions on the four nearest neighbour spins $v_1,v_2,v_3,v_4$ of $v_0$.
This proves the claim.

\section{Linear regression}\label{sec:linear_regression_numerical}
Let us consider the regression model with quadratic and cubic terms, representing additive $2$- and $3$-point interactions amongst the effects of the binary random variables $T_1, T_2$, and $T_3$ on $Y$:
\begin{equation}
\begin{split}
    Y = &\alpha_0 + \alpha_1 T_1 + \alpha_2 T_2 + \alpha_3 T_3 + \alpha_{12} T_1 T_2 \\
    + &\alpha_{13} T_1 T_3 + \alpha_{23} T_2 T_3 + \gamma T_1 T_2 T_3 + \epsilon \ .
\end{split}
\end{equation}
The noise term $\epsilon$ is normally distributed $\mathcal{N}(0,\sigma^2)$ with $\sigma^2=1$. Without loss of generality, the ground truth 3-point interaction $\gamma$ is set to twice the value of the noise, \ie, $\gamma=2$, while the 2-point interactions are set to $\alpha_{12},\alpha_{13},\alpha_{23}=5.0,-2.5,0$ respectively. The zeroth order coefficient $\alpha_0=-1.5$ and the linear coefficient are set to $\alpha_1,\alpha_2,\alpha_3=-2,10,0$. 
We generate $N_s = 40,80,\ldots,1000$ samples with $T_1 \sim \Binom(0.4)$, $T_2 \sim \Binom(0.7)$, $T_3 \sim \Binom(0.5)$, where we have fixed regression coefficients to be as above. 
We then take as input $(Y,T_1,T_2,T_3)$, and compute the expectation values in Eq.~\ref{eq:interaction_add} to estimate the 2-point and 3-point interactions, for varying sample sizes $N_s$, and compare with the ground truth values used to generate the data. 

In order to ensure the estimates are robust, sufficiently many sub-samples have to be available for estimating each of the four conditional expectation values appearing in Eq.~\ref{eq:interaction_add}. 
As with any statistical estimator, having very few samples for one of the conditional expectation values may result in unstable estimates of the expectation value and its variance. This will in turn introduce instabilities in the estimates of the interactions.
See App.~\ref{app:linear_regression} for a comparison of bin sizes for each of the expectation values as the total sample size increases.

The three 2-point interactions and the 3-point interaction amongst variables $T_1,T_2,T_3$ are presented in Fig.~\ref{fig:regression-2pt-3pt}.
The uncertainties on the estimates are derived using statistical bootstrap \cite{efron1979}.
One can readily observe that as the sample size increases, the estimates converge to the correct value with smaller variance as expected. 

\begin{figure}[!htb]
\begin{center}
    \minipage{0.4\textwidth}
	\includegraphics[width=\linewidth]{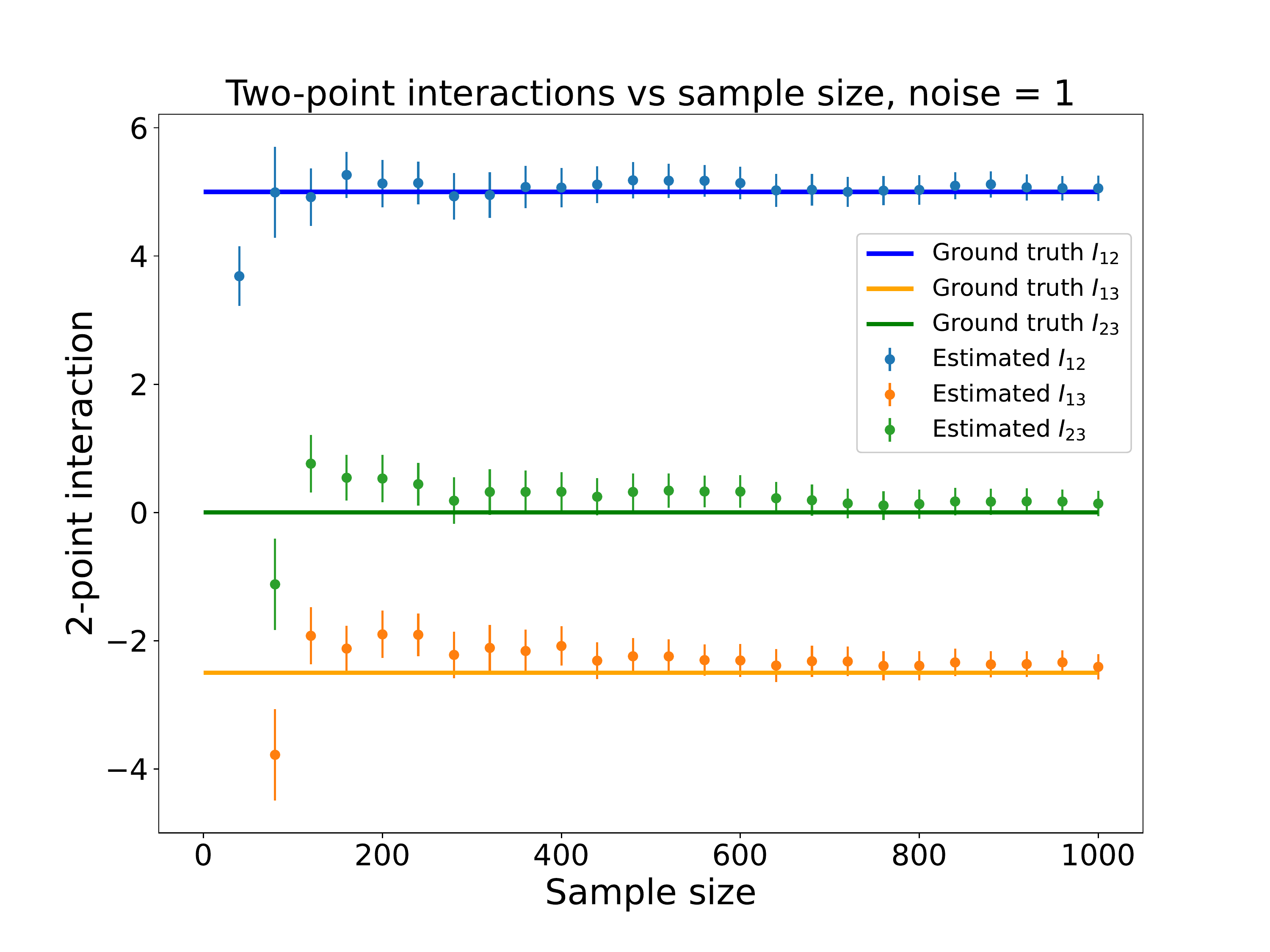}
	\endminipage\hfill
	\minipage{0.4\textwidth}
	\includegraphics[width=\linewidth]{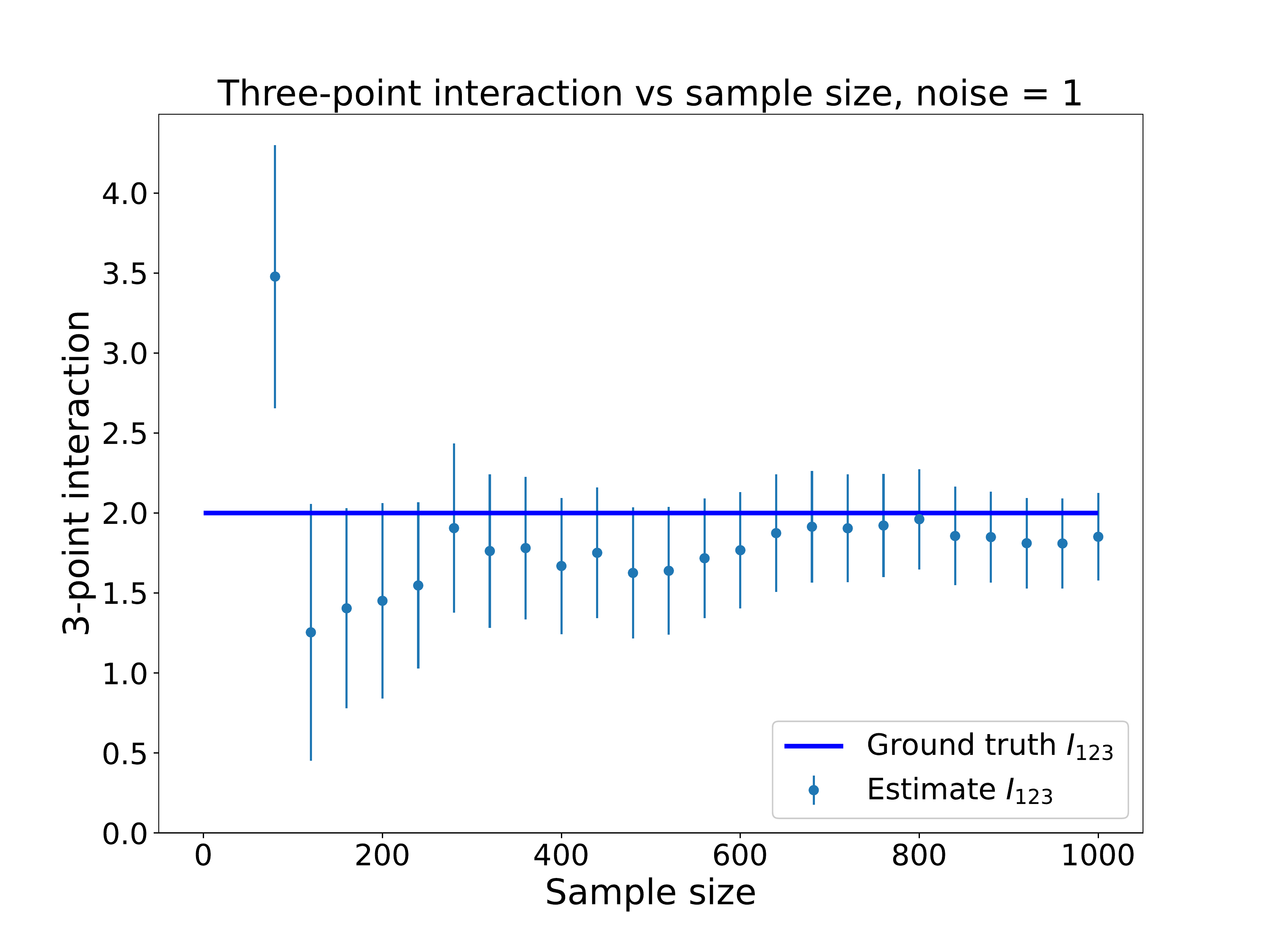}
	\endminipage\hfill
    \caption{Estimates of 2-point (top) and 3-point (bottom) interaction as a function of sample size, with noise $\sigma^2=1$.
    The uncertainties on the estimates are derived using statistical bootstrap.
    See Fig.~\ref{fig:regression_bin} in App.~\ref{app:linear_regression} for a comparison of bin sizes for each of the expectation values as the total sample size increases.}
    \label{fig:regression-2pt-3pt}
\end{center}
\end{figure}


\section{Linear regression: bin sizes as a function of sample size}
\label{app:linear_regression}
In Fig.~\ref{fig:regression_bin} we plot the bin sizes for each of the four expectation values appearing in Eq.~\ref{eq:interaction_add_2} as the sample size grows.
When the total sample size is, \eg, $N_s=40$, some of the conditional expectation values are estimated using one or two samples only and thus are unreliable.  

\begin{figure}
\begin{center}
    \minipage{0.4\textwidth}
	\includegraphics[width=\linewidth]{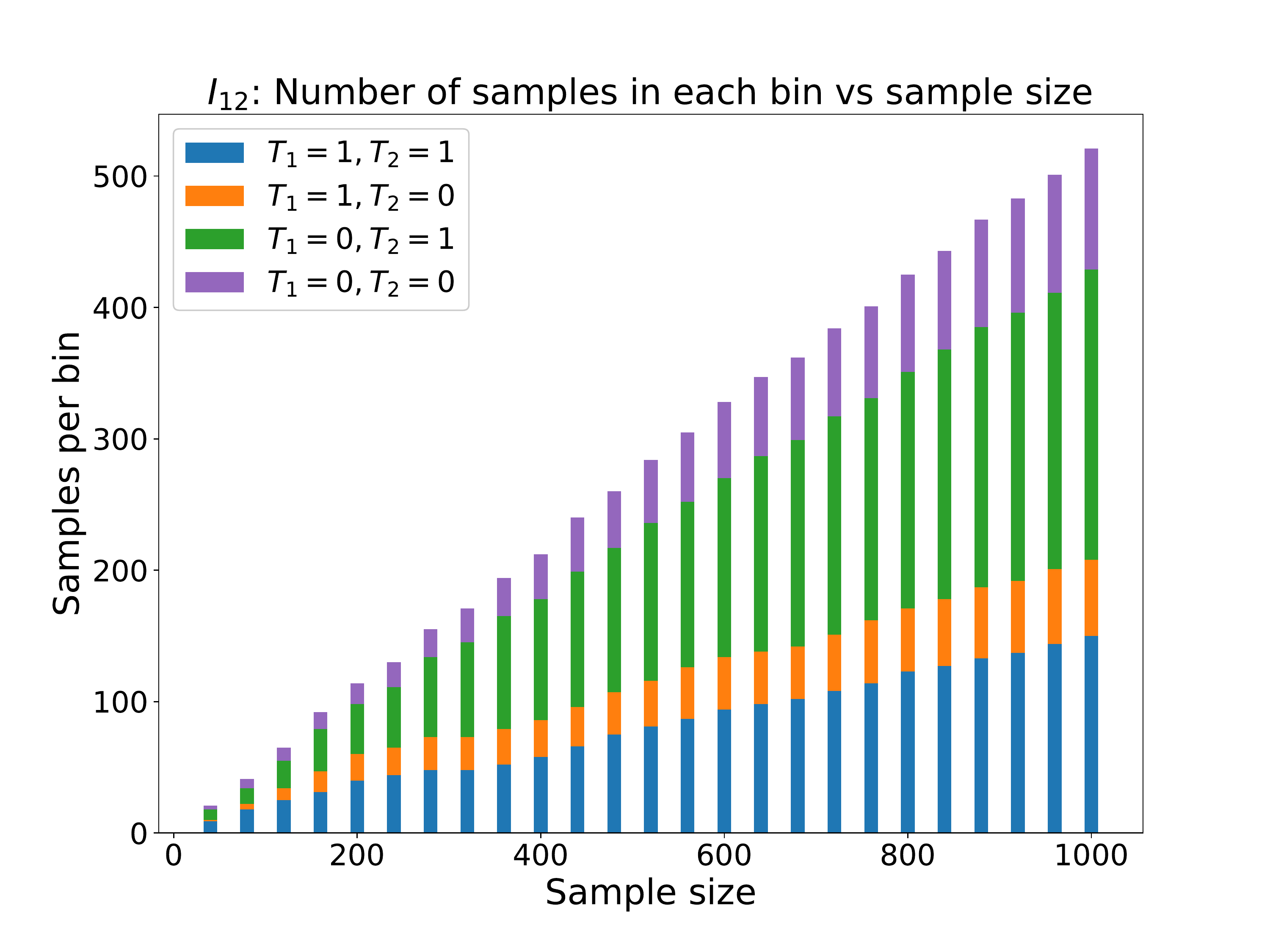}
	\endminipage
	\vspace{0.2cm}
	\minipage{0.4\textwidth}
	\includegraphics[width=\linewidth]{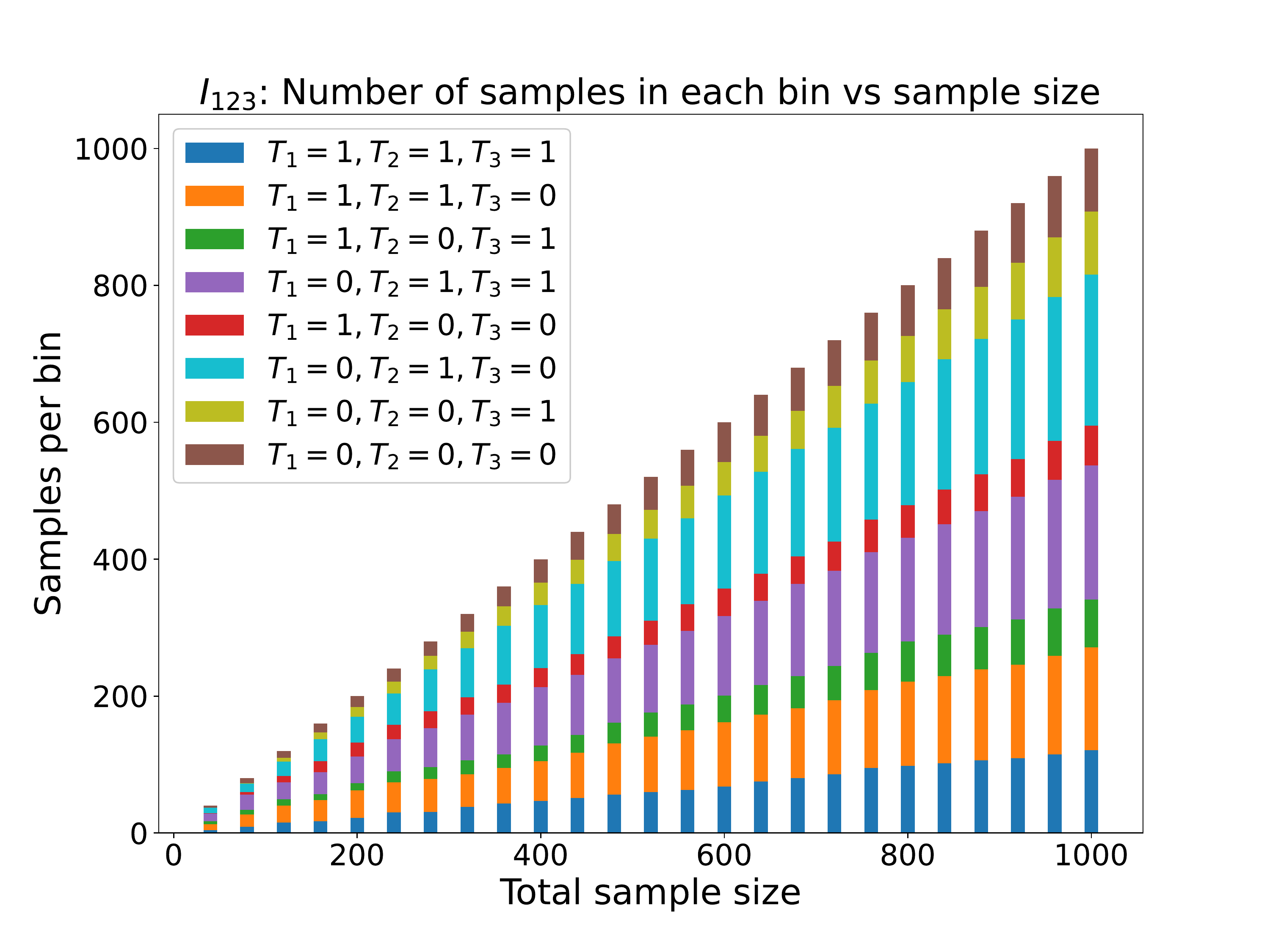}
	\endminipage\hfill
    \caption{Number of samples for each of the expectation values vs total sample size. Top: For the 2-point interaction $I_{12}$. The variables are distributed as $T_1 \sim \Binom(0.4)$ and $T_2 \sim \Binom(0.7)$ so that, \eg, the bin size of $(T_1,T_2) = (1,0)$ is the smallest, whereas the one of $(T_1,T_2) = (0,1)$ is the largest. Bottom: For the 3-point interaction $I_{123}$, where $T_3 \sim \Binom(0.5)$. The legend $T_1=T_2=T_3=1$ and $T_1=T_2=T_3=0$ are placed lowest and highest in the bar plot respectively. }
    \label{fig:regression_bin}
\end{center}
\end{figure}

\section{Interaction estimates per spin pair for the Ising model}
\label{app:spin_pair_exp_L8_T18_T22_T30}

We present the histogram of 2-point interactions amongst all pairs of (non)-nearest neighbours, using Eq.~\ref{eq:interaction_mult_2} for Ising states simulated at temperature $T=1.8$ and $L^2=8^2$.
As follows from Fig.~\ref{fig:ising_two_point_T18_histogram_100K_1M}, as the total sample size increases the two peaks corresponding to zero couplings between non-nearest neighbour pairs and positive couplings at $\frac{1}{2T} \approx 0.28$ corresponding to the nearest neighbour pairs, become more distinct. 

\begin{figure}[!htb]
\begin{center}
    \minipage{0.4\textwidth}
	\includegraphics[width=\linewidth]{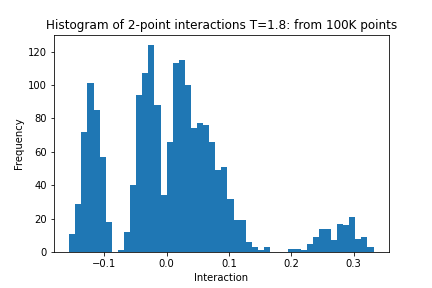}
	\endminipage\hspace{0.03\textwidth}
	\minipage{0.4\textwidth}
	\includegraphics[width=\linewidth]{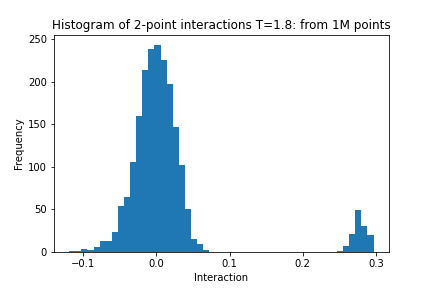}
	\endminipage\hfill
    \caption{Histograms of $100$K (top) and $1$M (bottom) estimates of the $2$-point interaction at $T = 1.8$, in an Ising system of size $L^2 = 8^2$. The interactions are computed directly from the data using the non-parametric multiplicative formulation in Eq.~\ref{eq:interaction_mult_2}. As expected, with larger sample sizes, the peaks corresponding to non-nearest neighbour interactions, around zero, and nearest neighbour interactions, around $\frac{1}{2T} \approx 0.28$, become more distinct with less noise. }
    \label{fig:ising_two_point_T18_histogram_100K_1M}
\end{center}
\end{figure}

The estimates of 2-point couplings for both the nearest neighbour and non-nearest neighbour spin pairs, using 100K (top) and 20K (bottom) sample sizes, are presented in Fig.~\ref{fig:L8_T18_per_spin_indept_and_expect}.
As mentioned in Sec.~\ref{sec:Ising_RBM_numerical}, one can use smaller sample sizes to estimate the couplings at the cost of reduced power.
For colder temperatures and small sample sizes, there may be no states in the $p_{11}$ bin, for the case of non-nearest neighbour spin pairs.
For $T=1.8$ over 20K samples, we have power to accurately estimate all the nearest neighbour couplings, but only have power to accurately estimate approximately 70\% of couplings between non-nearest neighbour pairs.
As expected, increasing the sample size to 100K improves the latter to 99\%.
Note that with real data sets, one may have limitations on the sample size.
For example, as shown in Fig.~\ref{fig:L8_T23_per_spin_indept_and_expect_10K}, the non-parametric estimator, combined with conditional independence amongst the variables, has nevertheless enabled us to obtain accurate estimates using 10K samples only.
In contrast, \eg, the RBM does not train well on Ising data with 10K samples, see~\cite[Fig.~31]{PhysRevB.100.064304}.

Fig.~\ref{fig:cond_ising_two_point_interaction_vs_temp_10K} illustrates the estimates for nearest neighbour interactions vs temperature with 10K total samples using the TL framework.

\begin{figure}[!htb]
\begin{center}
    \minipage{0.45\textwidth}
	\includegraphics[width=\linewidth]{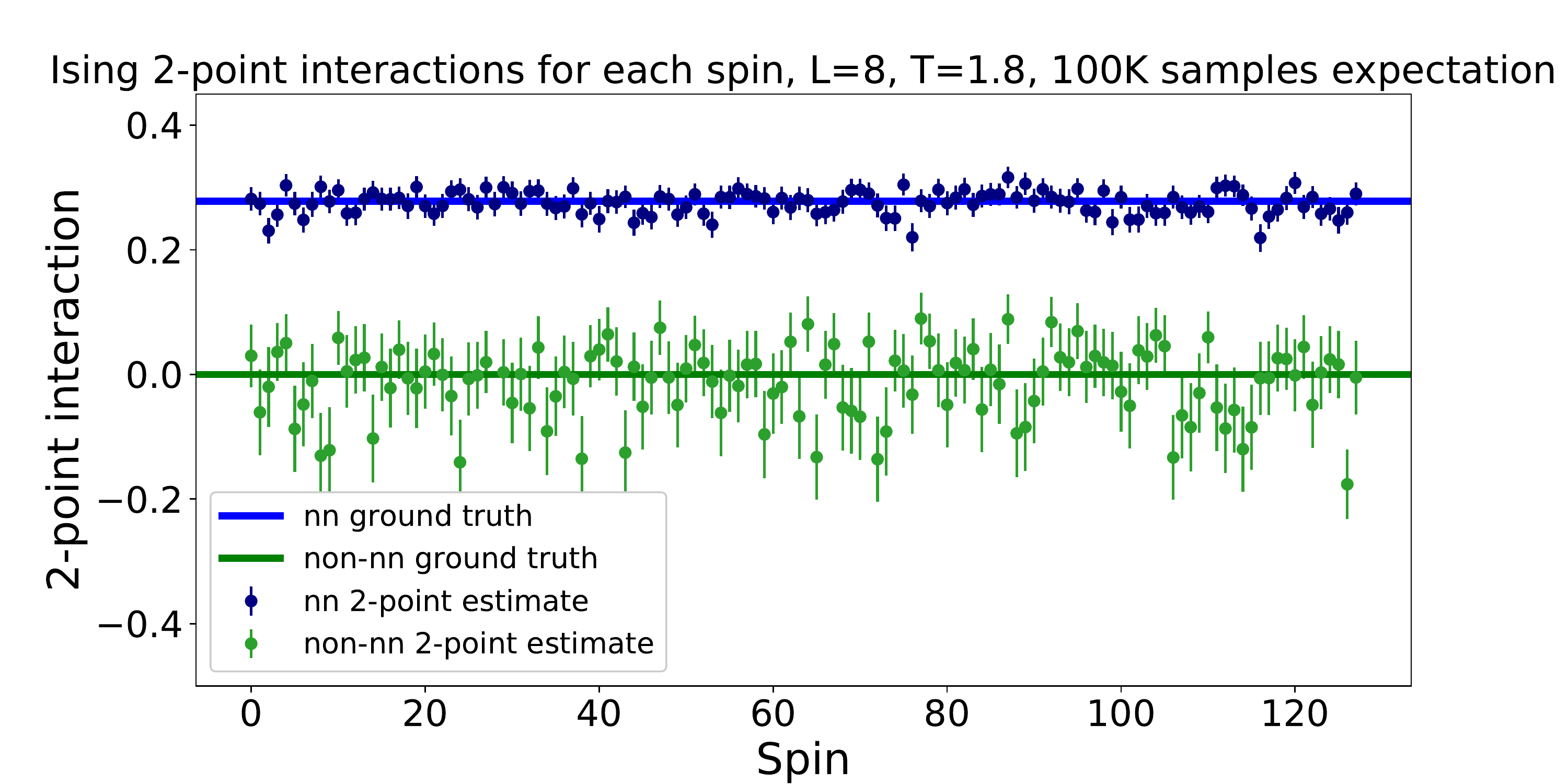}
	\endminipage
	\vspace{0.2cm}
	\minipage{0.45\textwidth}
	\includegraphics[width=\linewidth]{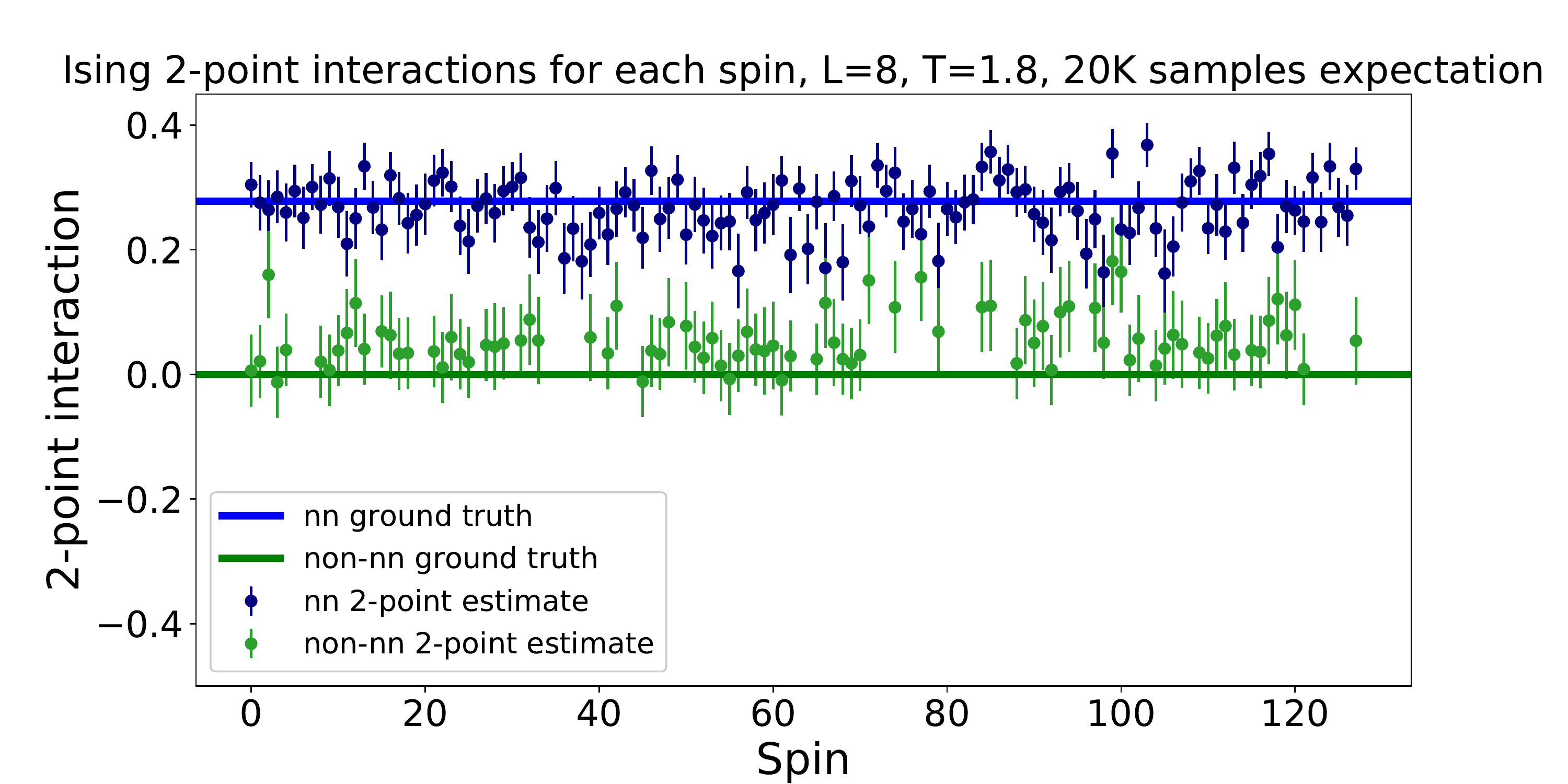}
	\endminipage\hfill
    \caption{$L^2=8^2$, $T=1.8$, with conditioning on the nearest neighbours to estimate $I_{ij}^m$ for both nearest and non-nearest neighbours. In order to reduce clutter, the same number of non-nearest couplings as nearest neighbours are shown (128). No translational invariance is used. Top: The results over 100K samples, using Eq.~\ref{eq:interaction_mult_2_expectation} and statistical bootstrap, as compared to bottom: The results over 20K samples. For the latter, approximately 30\% of spins had no samples in the $p_{11}$ bin. This is due to the fact that it is very rare to find 2 spins having value one, whilst their 8 nearest neighbours all have spin value 0, particularly at cold temperatures, as the total sample size become smaller.}
    \label{fig:L8_T18_per_spin_indept_and_expect}
\end{center}
\end{figure}

\begin{figure}[!htb]
\begin{center}
    \minipage{0.45\textwidth}
	\includegraphics[width=\linewidth]{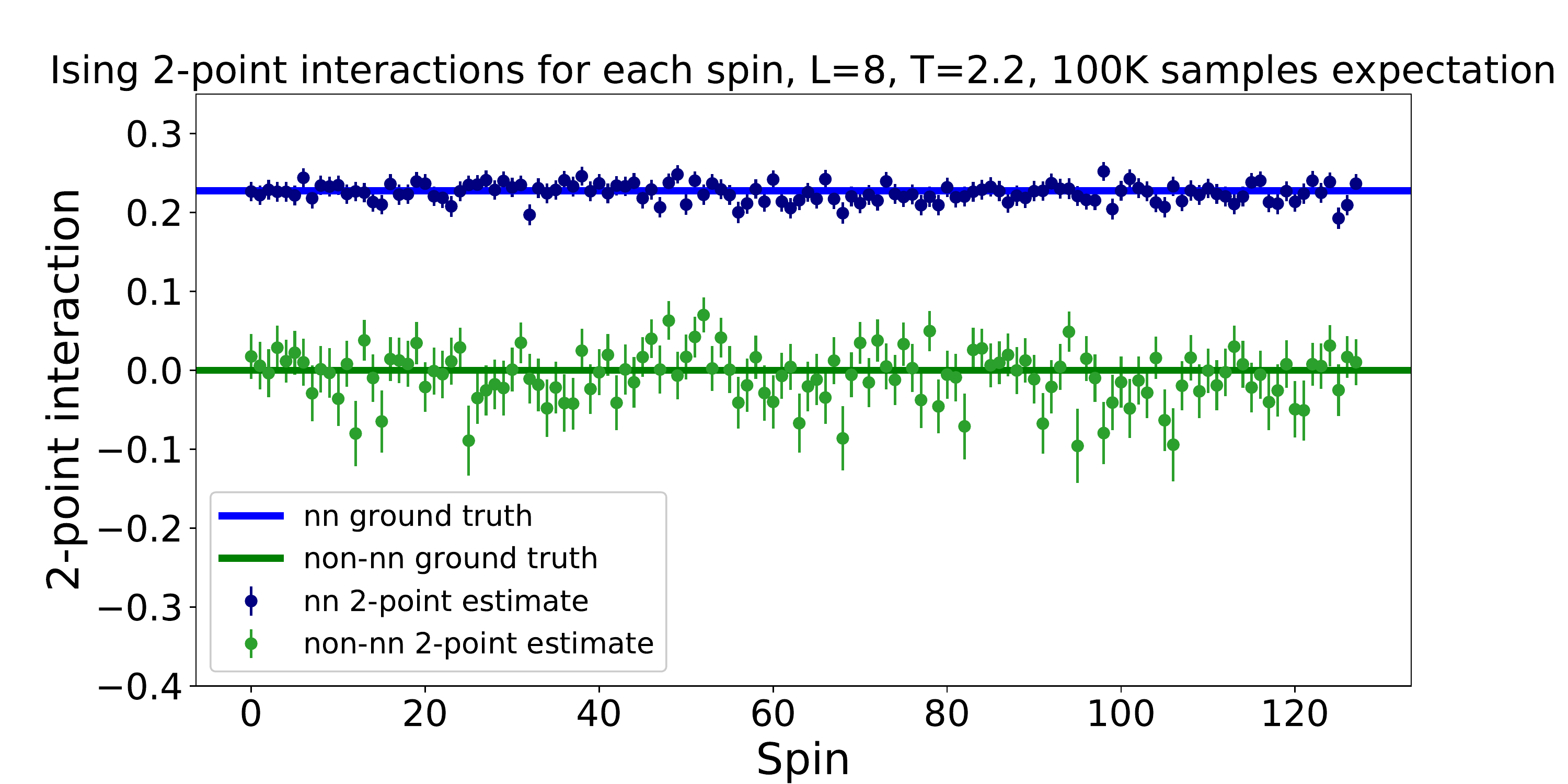}
	\endminipage
	\vspace{0.2cm}
	\minipage{0.45\textwidth}
	\includegraphics[width=\linewidth]{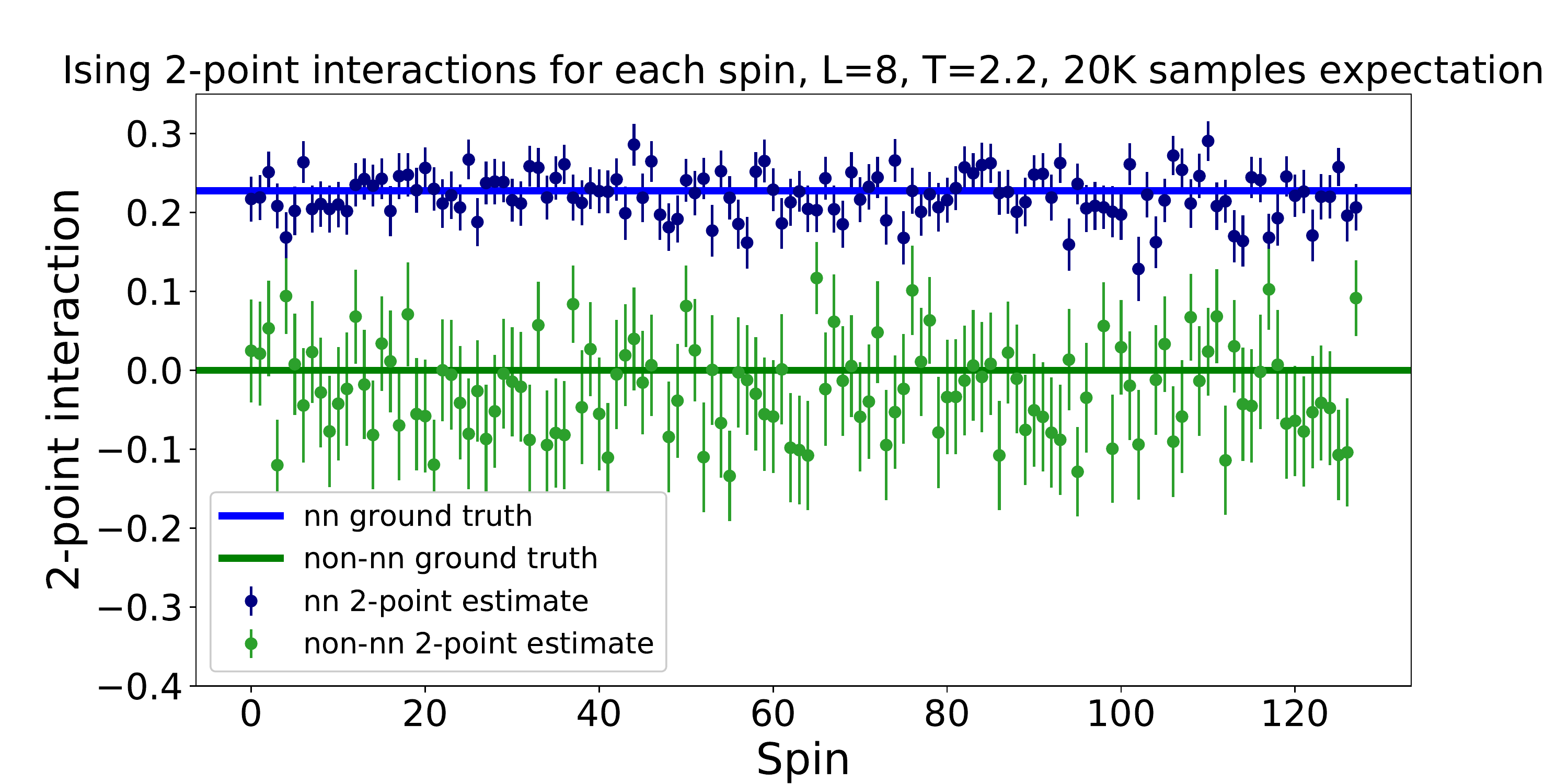}
	\endminipage\hfill
    \caption{$L^2=8^2$, $T=2.2$, with conditioning on the nearest neighbours to estimate $I_{ij}^m$ for both nearest and non-nearest neighbours. In order to reduce clutter, the same number of non-nearest couplings as nearest neighbours are shown (128). No translational invariance is used. Top: The results over 100K samples, using Eq.~\ref{eq:interaction_mult_2_expectation} and statistical bootstrap, as compared to bottom: The results over 20K samples. At 20K samples we have power to accurately estimate approximately 98\% of non-nearest neighbour spin pairs.}
    \label{fig:L8_T22_per_spin_indept_and_expect}
\end{center}
\end{figure}


\begin{figure}[!htb]
\begin{center}
    \minipage{0.45\textwidth}
	\includegraphics[width=\linewidth]{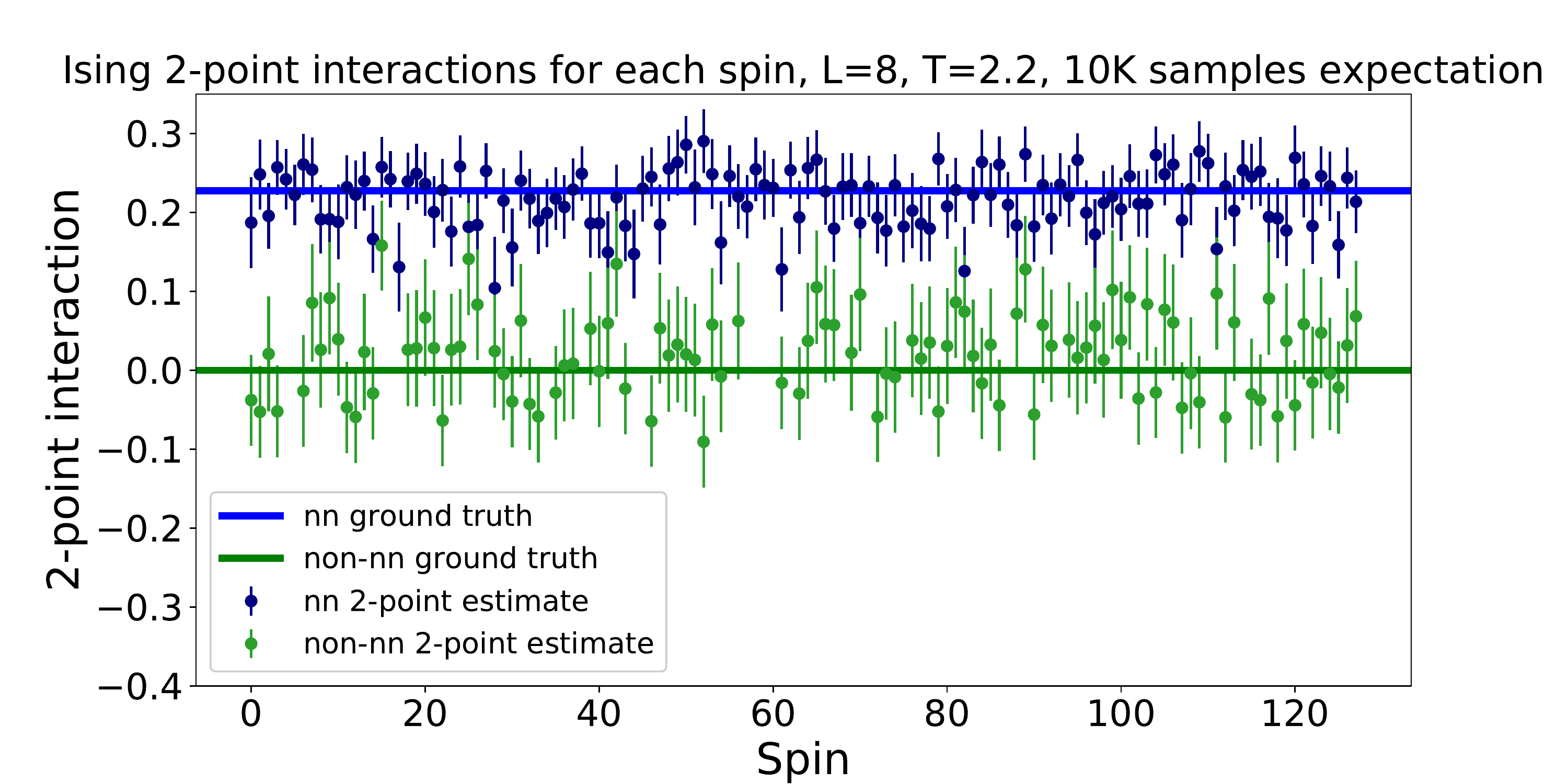}
	\endminipage
    \caption{$L^2=8^2$, $T=2.2$, with conditioning on the nearest neighbours to estimate $I_{ij}^m$ for (non-)nearest neighbours. In order to reduce clutter, the same number of non-nearest couplings as nearest neighbours are shown (128). Similar to the results in Fig.~\ref{fig:L8_T22_per_spin_indept_and_expect}, except the total sample size is now 10K only. There is enough power to accurately estimate $I_{ij}^m$ for all nearest neighbour pairs, and approximately 83\% of the non-nearest neighbour pairs. In contrast, \eg, the RBM does not train on 10K samples, see~\cite[Fig.~31]{PhysRevB.100.064304}.}
    \label{fig:L8_T23_per_spin_indept_and_expect_10K}
\end{center}
\end{figure}

\begin{figure}[!htb]
\begin{center}
    \minipage{0.45\textwidth}
	\includegraphics[width=\linewidth]{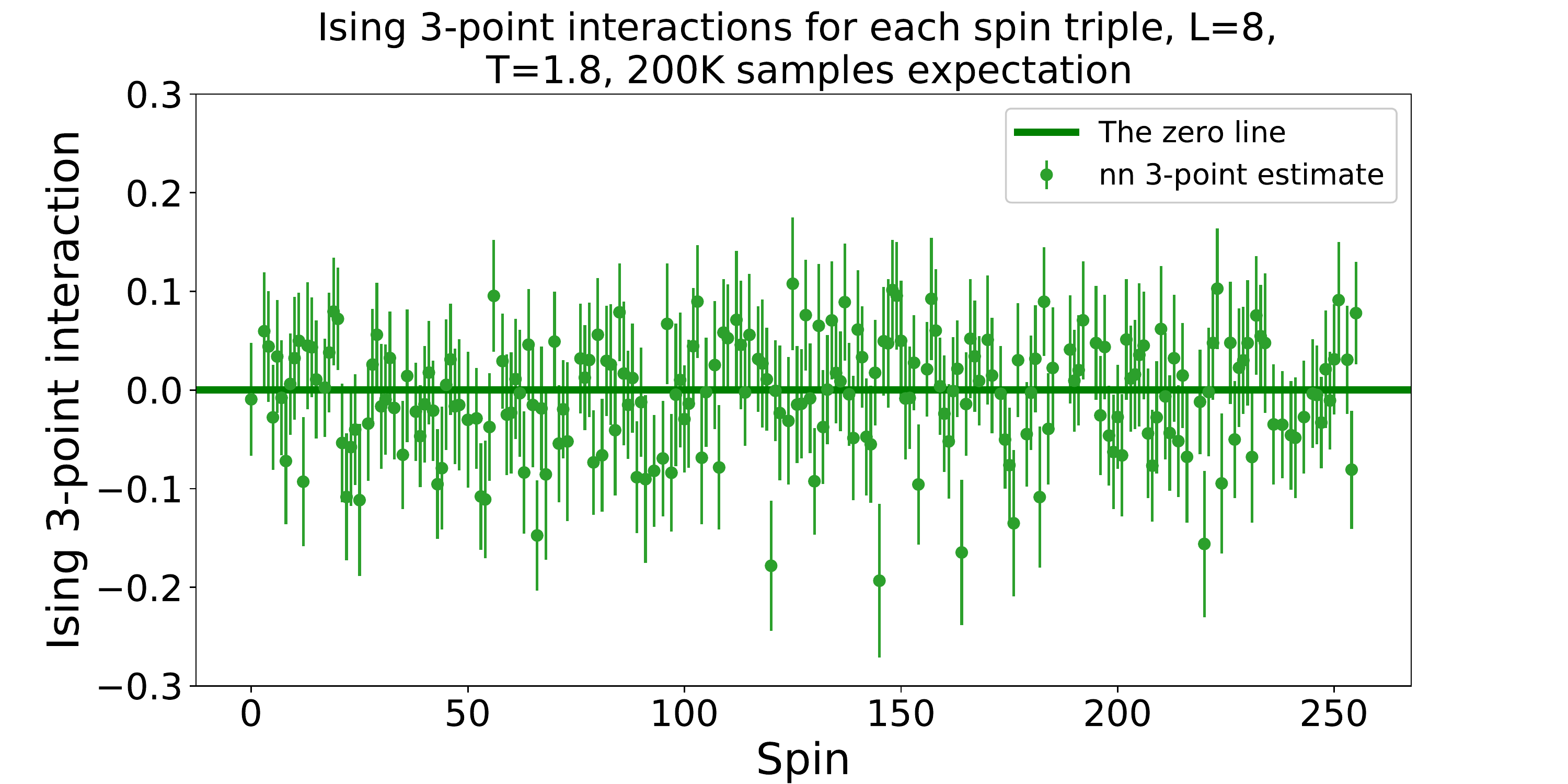}
	\endminipage
	\vspace{0.2cm}
	\minipage{0.45\textwidth}
	\includegraphics[width=\linewidth]{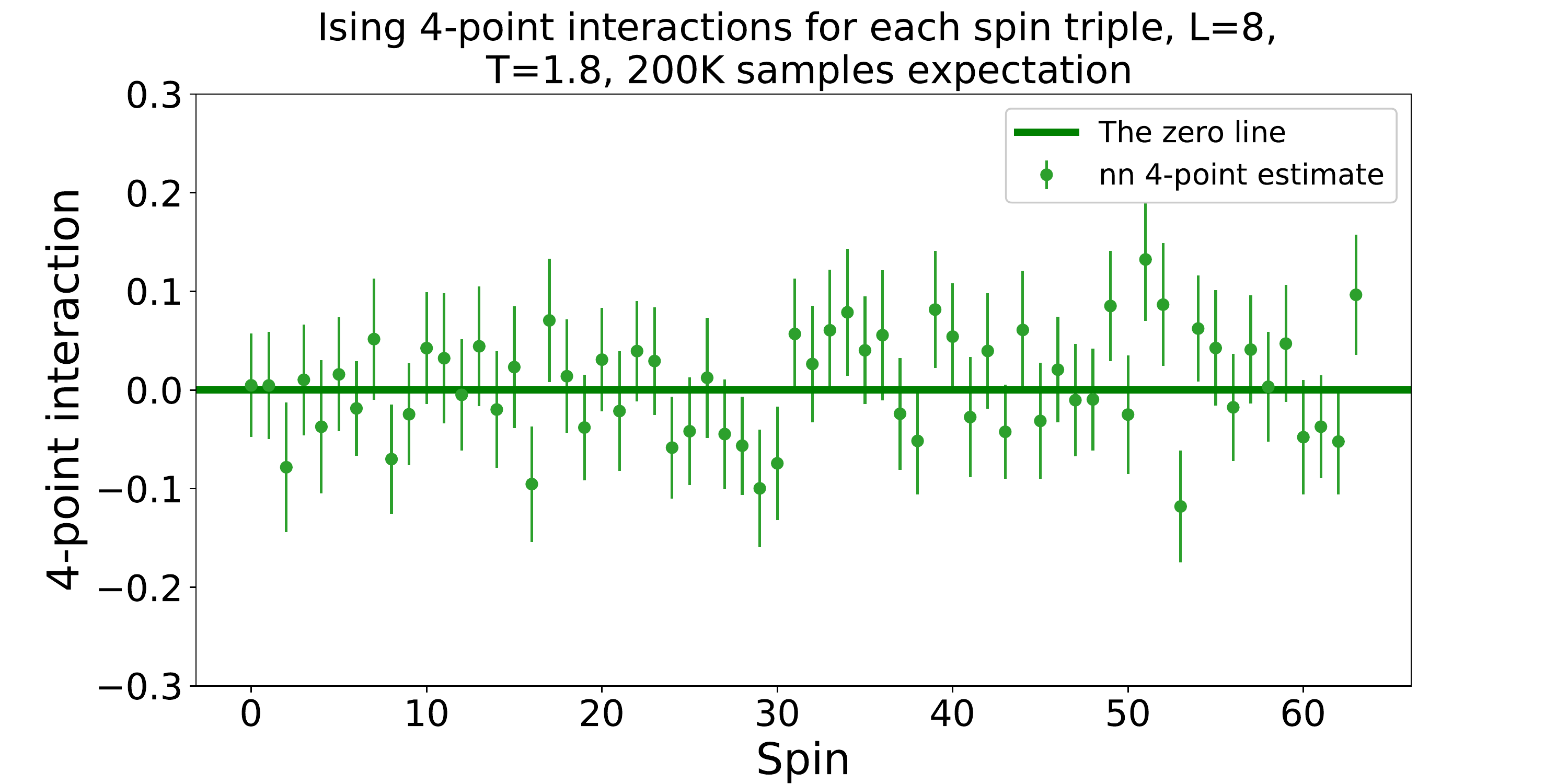}
	\endminipage\hfill
    \caption{$L^2=8^2$, $T=1.8$, with conditioning on the nearest neighbours to estimate 3-point (top) and 4-point (bottom) interaction for the nearest neighbours. Due to the cold temperature, $85\%$ of triples can be estimated, all 4-points are estimated. If 100K samples are used $40\%$ of the 3-points can be estimated, but they are all accurately zero within statistics, similar to the top plot. }
    \label{fig:L8_T18_per_spin_3pt_4pt_cond_200K}
\end{center}
\end{figure}

\begin{figure}[!htb]
\begin{center}
    \minipage{0.50\textwidth}
	\includegraphics[width=\linewidth]{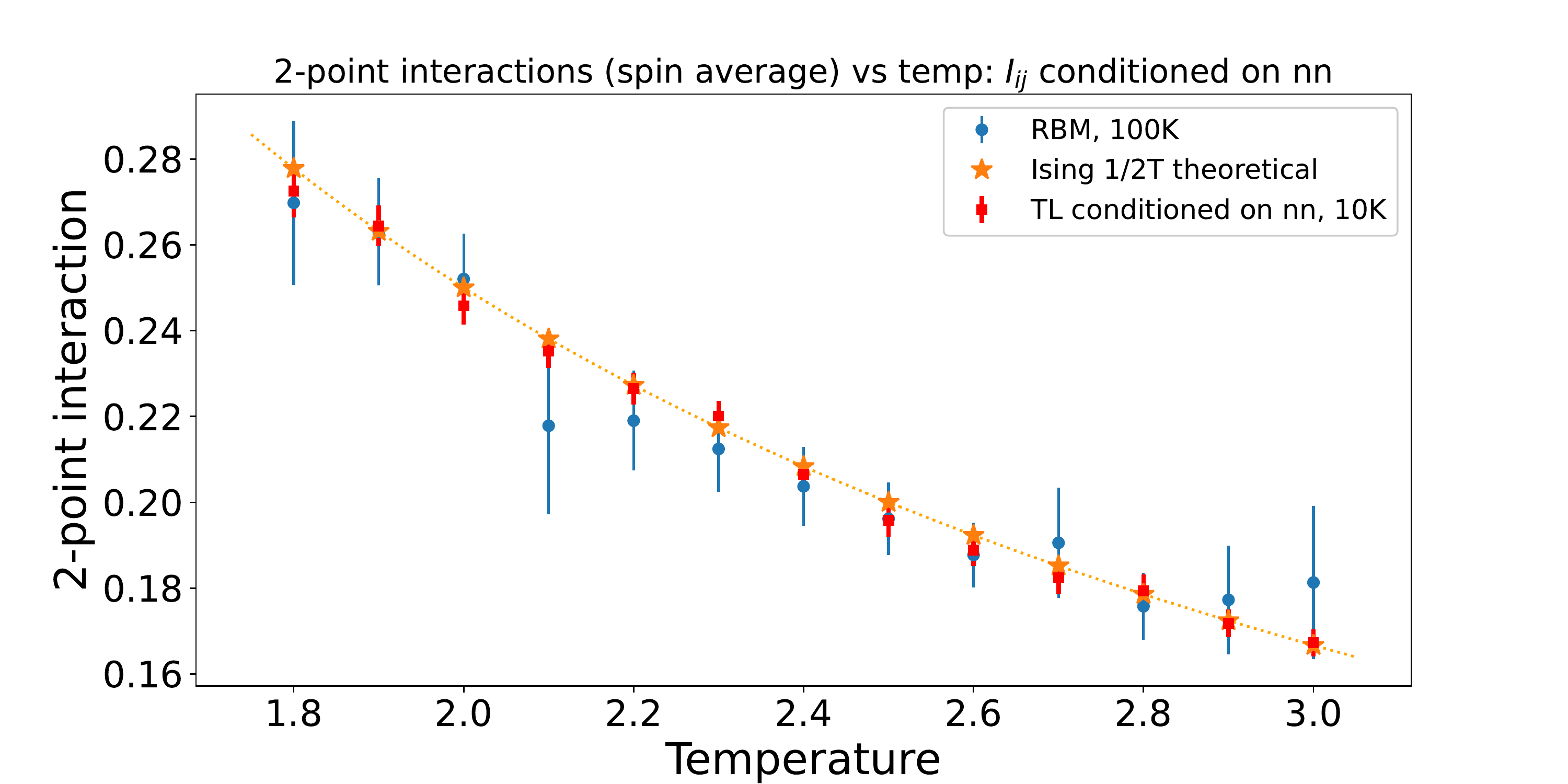}
	\endminipage\hfill
    \caption{Conditioning on the nearest neighbours to estimate $I_{ij}^m$ substantially improves the estimates as compared to Fig.~\ref{fig:ising_two_point_interaction_vs_temp}. The square points are estimations of interactions and their uncertainty using TL with 10K samples. The run time for each estimation using TL is at the order of a few seconds.}
    \label{fig:cond_ising_two_point_interaction_vs_temp_10K}
\end{center}
\end{figure}

\clearpage

\bibliographystyle{unsrt}       
\bibliography{refs}

\end{document}